\documentclass[graybox]{svmult}

\usepackage{type1cm}
\usepackage{multicol}        
\usepackage[bottom]{footmisc}
\usepackage{newtxtext}       %
\usepackage[varvw]{newtxmath}       

\usepackage[T1]{fontenc}
\usepackage{textcomp}
\usepackage{xcolor}

\usepackage{makeidx}         
\usepackage{multicol}        
\usepackage[bottom]{footmisc}

\usepackage{setspace}
\usepackage{tikz}

\usepackage{doi}

\usepackage{array}
\usepackage{natbib}
\bibpunct{(}{)}{;}{a}{,}{;}
\usepackage{graphicx}
\usepackage{bbm}
\usepackage{mathtools}
\usepackage{booktabs}
\usepackage{icomma}

\usepackage{algorithmic}

\usepackage[labelfont={small,sc},textfont={small,it}, margin=15pt,skip=10pt,position=bottom]{caption}
\usepackage[labelfont={scriptsize,normalfont},textfont={scriptsize,it}, margin=20pt,skip=5pt,position=bottom, labelformat=simple]{subcaption}

\usepackage{algorithmic}
\usepackage[boxed]{algorithm}

\newtheorem{assumption}{Assumption}

\newcommand{\EQ}{\begin{equation}}
\newcommand{\EN}{\end{equation}}
\newcommand{\EQS}{\begin{equation*}}
\newcommand{\ENS}{\end{equation*}}
\newcommand{\ds}{\displaystyle}

\newcommand*\xbar[1]{%
  \hbox{%
    \vbox{%
      \hrule height 0.5pt
      \kern0.5ex%
      \hbox{%
        \kern-0.1em%
        \ensuremath{#1}%
        \kern-0.1em%
      }%
    }%
  }%
}

\def\argmax{\mathop{\rm arg\,max}}

\def\E{\mathcal E}

\def\zz{z_1}
\def\zzz{z_2}

\newcommand{\qq}{q}
\newcommand{\pp}{p}

\newcommand{\plainv}{v}
\newcommand{\tildev}{\tilde{v}}

\newsavebox{\savepar}

\newcommand{\myred}{\color{black}}

\def\argmax{\mathop{\rm arg\,max}}

\usepackage[running,mathlines]{lineno}

\makeindex             

\begin{document}
\title*{Numerical methods for optimal decumulation of a defined contribution pension plan}
\author{Peter A. Forsyth and \\George Labahn}
\institute{Peter A. Forsyth \at David R. Cheriton School of Computer Science,  University of Waterloo, Waterloo ON, Canada N2L 3G1, \email{paforsyt@uwaterloo.ca}
\and George Labahn \at David R. Cheriton School of Computer Science,  University of Waterloo, Waterloo ON, Canada N2L 3G1, \email{glabahn@uwaterloo.ca}}

\maketitle

\abstract*{The decumulation of a defined contribution (DC) pension plan is well known to be one of the hardest problems in finance.
We model this decumulation challenge as an  optimal stochastic control problem.
The control problem is solved, at each  rebalancing date, by alternatively 
solving a linear partial-integro differential  equation (PIDE) followed by an optimization
step.  We solve the  PIDE by using a $\delta$-monotone Fourier method, which ensures that monotonicity holds to $O(\delta)$.  We allow for the use of leverage
(i.e. borrowing to invest in stocks), as well as minimum constraints on bond holdings.  We pay particular attention to minimizing wrap-around error, an issue  which is
endemic for Fourier methods and central to the effective use of these methods for optimal control problems.
Rather unexpectedly, we find that restricting the portfolio equity fraction to a maximum of 50\% does not reduce portfolio efficiency noticeably.
This may be a useful strategy for risk-averse retirees}


\begin{abstract}{
The decumulation of a defined contribution (DC) pension plan is well known to be one of the hardest problems in finance.
We model this decumulation challenge as an  optimal stochastic control problem.
The control problem is solved, at each  rebalancing date, by alternatively 
solving a linear partial-integro differential  equation (PIDE) followed by an optimization
step.  We solve the  PIDE by using a $\delta$-monotone Fourier method, which ensures that monotonicity holds to $O(\delta)$.  We allow for the use of leverage
(i.e. borrowing to invest in stocks), as well as minimum constraints on bond holdings.  We pay particular attention to minimizing wrap-around error, an issue  which is
endemic for Fourier methods and central to the effective use of these methods for optimal control problems.
Rather unexpectedly, we find that restricting the portfolio equity fraction to a maximum of 50\% does not reduce portfolio efficiency noticeably.
This may be a useful strategy for risk-averse retirees.}

\vspace{5pt}
\noindent
\textbf{Keywords:} decumulation,
stochastic control, risk

\noindent
\textbf{JEL codes:} G11, G22\\
\noindent
\textbf{AMS codes:} 91G, 65N06, 65N12, 35Q93

\end{abstract}

\section{Introduction}\label{intro_section}
In developed countries, there is a strong and growing shift from Defined Benefit (DB)  pension plans to Defined Contribution (DC) plans. The private sector, and increasingly
government institutions, are unwilling to take on the balance sheet liabilities of a DB plan \citep{Thinking_2024}.

Under a DC plan, the employer and employee contribute to a (usually) tax advantaged account, which is typically invested in a mix of stocks and bonds. Upon retirement,
the employee is now responsible for both 
choosing an investment strategy and  
deciding on a withdrawal schedule. 
Although it is often advocated that retirees should buy annuities, these are unpopular with DC plan holders \citep{Peijnenburg2016} for numerous valid reasons 
\citep{MacDonald2013}.  In particular, annuities are generally overpriced, 
lack true inflation protection and retirees 
have no access to capital in the event of emergencies.

Since it is usually challenging to fund a reasonable lifestyle investing solely in riskless assets, the retiree is now
exposed to both investment risk and longevity risk.  This problem of funding retirement 
from accumulated DC capital, also known as the
decumulation problem, has been termed {\em``the nastiest, hardest problem in finance,''} 
by Nobel Laureate William Sharpe ( quoted in \citep{Sharpe2017} ).

There have been several suggestions for DC plan {\em spending rules} \citep{Anarkulova_2022_a}. Probably the most ubiquitous rule is the {\em four per cent rule} \citep{Bengen1994}.
This rule suggests that retirees can withdraw four per cent of their initial capital (at age 65) annually, adjusted for inflation.  Based on 
historical US data, an investor who invested in a portfolio of 50\% bonds and 50\% stocks (rebalanced annually),  and followed the four per
cent rule, would have never run out of savings over any thirty year historical period.

However, the four per cent rule has many critics.  For example,  rolling thirty year periods have high correlations.  In addition, a fixed withdrawal schedule,
as well as a constant weight stock allocation is somewhat simplistic. In order to  better stress test spending rules, \citet{Anarkulova_2022_a} use block bootstrap resampling of historical data, to generate
a distribution of outcomes.  We will also use block bootstrap resampling to test our results in this chapter.

Decumulation strategies are naturally posed as a problem in optimal stochastic control.  The controls in this case are  the stock/bond split on each rebalancing date and  the (real) annual withdrawal amounts. In our case we impose maximum and minimum constraints on the withdrawal amounts and also impose constraints on the maximum stock allocation.

\subsection{Our Contributions}
In this chapter, we consider the same scenario posed in \citet{Bengen1994}. Our objective function is to maximize total withdrawals over
a thirty year period (with minimum/maximum annual constraints) and  minimize the risk of running out of savings before year thirty.

An  optimal stochastic control problem used for decumulation is solved at each  rebalancing date, by alternatively  solving a linear partial-integro differential  equation (PIDE) followed by an optimization
step.   The  PIDE is solved by using a $\delta$-monotone Fourier method, which ensures that monotonicity holds to $O(\delta)$, an important property for control problems.  The  
$\delta-$monotone technique generalizes  the method described in \citet{forsythlabahn2017}. 
Our basic dynamic programming technique is similar to the high-level description in \citet{Forsyth_2021_b}, 
but here we delve much deeper in describing the numerical methods.  
In particular, we focus on such issues as wrap-around errors in 
Fourier methods for Partial-Integro Differential Equations (PIDEs), 
and the 2-d $\delta$-monotone Green's function approximation.

Previous studies for decumulation have imposed no-shorting and  no-leverage constraints  on stock investments.  In this chapter, 
we relax the no-leverage constraint  on the stock investment and also investigate 
more restrictive  investments in stocks.  For example, a risk averse retiree may wish to 
restrict the maximum fraction invested in equities  to be significantly less than one.

%
Rather surprisingly, we find that restricting the maximum fraction in stocks to be 50\% does not lower the efficient frontier by a large amount.
The most significant effect of the 50\% restriction, compared to allowing a maximum stock fraction of up to 130\% (leverage) is that the time 
for the median withdrawals to reach the maximum is delayed by one year.  This may be an acceptable trade-off for risk averse investors.

From the numerical point of view, when solving optimal control problems using
Fourier methods,
wrap-around error can be significant, due to the possible instantaneous
movement to  domain boundaries at reallocation 
times \citep{Lippa2013,Ignatieva_2018,alonso-garcca_wood_ziveyi_2018}.  We show that
our technique reduces wrap-around error to almost machine level precision.

\subsection{Organization of this Chapter}
The rest of this chapter is organized as follows.  In the next section we state our problem setting followed by a mathematical formulation of our model in  Section \ref{form_section}. The latter section includes information on the  stochastic processes followed by bonds and stocks along with a description of our control set. Section \ref{risk_section} describes our conflicting risk and reward measures, Expected Shortfall and Expected Withdrawals, while Section \ref{ew_es_section} looks at how one maximizes these conflicting measures, something which  is initially set up as a pre-commitment problem which is not implementable. Section \ref{DP_program} then discusses how one can formulate our problem into an (implementable)  dynamic program, one which is solved by a PIDE between withdrawal time intervals. Section \ref{delta_section} then shows how one can solve these PIDEs using Green's functions and Fourier analysis. The wrap-around problem that comes with using Fourier analysis and periodicity is addressed in the next section. That section also includes the resulting numerical algorithm for solving our optimal control decumulation problem.  Section \ref{example_section} then presents an example along with its numerical solution followed in the next section by tests of robustness of our numerical example. The paper ends with a conclusion along with an Appendix providing extra details regarding the wrap-around error discussed earlier.

\section{Problem Setting}\label{problem_section}

As noted in \citet{Anarkulova_2022_a}, retirees have a revealed preference for spending rules, such as that advocated by \citet{Bengen1994}.
Spending rules (such as the four per cent rule) are  popular with retirees as they are simple to implement, have an intuitive interpretation, and the rolling 30 year backtests
in \citet{Bengen1994} are easy to understand.

In this paper we restrict attention to the scenario outlined in 
\citet{Bengen1994}, that is, we consider a 65-year old retiree who desires fixed minimum annual (real) cash flows over a 30 year
time horizon.  We also impose an upper limit (a cap)  on maximum withdrawals in any year.  From the CPM2014 table from the Canadian Institute of
Actuaries\footnote{\url{www.cia-ica.ca/docs/default-source/2014/214013e.pdf}.}, the probability that a 65-year old Canadian male attains the age of 95 is about 0.13.  In spite of this relatively
low probability of achieving the age of 95, assuming a  30 year time horizon is considered a prudent approach for minimizing the risk of exhausting  savings. In addition, observe that we will not
mortality weight future cash flows, as is done when averaging over a population for pricing annuities.  Mortality weighting may not be meaningful for an individual investor, since it reflects
population averages, rather than the circumstances of an individual retiree.

Since we allow investing in risky assets, with a minimum cash withdrawal each year, it is possible to exhaust savings.\footnote{Ignoring the trivial and unlikely case where investing in riskless 
assets can fund 30 year minimum withdrawals.}    In this case, we continue to withdraw funds from the portfolio, which is equivalent to borrowing cash.   In such circumstances, we assume that 
withdrawals continue, something which can be viewed as borrowing against other resources. 
This debt accumulates at the borrowing rate.  Essentially, we are assuming that the investor has
other assets, for example real estate, which can be used as  a hedge of last resort.  In this case,  this debt could then be funded using a reverse mortgage, with real estate as collateral \citep{Pfeiffer_2013}.

It is important to note that an implicit assumption here is that real estate is not 
fungible with other financial assets, except as a last resort. 
This mental separation of assets is a common aspect of behavioral
finance \citep{Shefrin-Thaler:1988}.  Real estate in this case has a dual role: 
if market returns are exceptionally strong, or the retiree passes away earlier than expected, this property
can be a bequest.  If market returns are poor, or in the case of extreme longevity, then real estate can be used to fund expenses if the retirement account is exhausted.

\section{Mathematical Formulation}\label{form_section}

We assume that the investor has access to two funds: a broad market stock index fund and a constant maturity bond index fund.  Let $S(t)$ and $B(t)$ denote the real
(inflation adjusted) \emph{amounts} invested in the stock index and the bond index,  respectively,  at time $t$ with $T$ being the investment horizon.  These amounts will evolve over time, depending on the investor's  asset allocation, and
changes in the real unit prices of the assets. In the absence of an investor determined control  such as \ cash withdrawals or rebalancing, any changes in $S(t)$ and $B(t)$ result from asset price evolution. We model the stock index as following a jump diffusion, which permits modelling both continuous stochastic processes as well as market jumps.

We model the real returns of a constant maturity bond index as a stochastic process \citep[see for example, \@][]{Lin_2015,mitchell_2014}.  This avoids the need to model bond prices and inflation separately.  As done in \citet{mitchell_2014}, we assume that the constant maturity bond index follows a jump diffusion process, with justification being  found in   \citet[Appendix A]{Forsyth_Arva_2022}.

\subsection{The Diffusion Processes.}
For any time $t$ let
\begin{equation*}
S(t^-) \equiv \displaystyle \lim_{\epsilon \rightarrow 0^+}
          S(t- \epsilon) ~~\mbox{ and } ~~
 S(t^+) \equiv \displaystyle  \lim_{\epsilon \rightarrow 0^+}
          S(t + \epsilon)  ~
\end{equation*}
denote the values of $S$ the instant before $t^-$ and the instant after $t^+$   time $t$.  Such notation will also be  used for any other time dependent function.
We let $\xi^s$ be a random number representing a jump multiplier, that is,  when a jump occurs,  $S(t) = \xi^s S(t^-)$. The use of a jump process accounts for  non-normal asset returns. As in \citep{kou:2002,Kou2004}, we assume that $\log(\xi^s)$ follows a double exponential
distribution with $u^s$ the probability of an upward jump and $1-u^s$  the  probability of a downward jump. The density function for $y = \log \xi^s$ is
{\color{black}
\begin{equation}
f^s(y) = u^s \eta_1^s e^{-\eta_1^s y} {\bf{1}}_{y \geq 0} +
       (1-u^s) \eta_2^s e^{\eta_2^s y} {\bf{1}}_{y < 0} ~~~~;
         ~~~~~~{\bf{1}}_{\mathcal{C}} = \begin{cases}
                               1, & {\mathcal{C}} {\text{ is true }} \\
                               0, & {\text{ otherwise }}
                              \end{cases} ~.
\label{eq:dist_stock}
\end{equation}
}
We also define
{\color{black}
\begin{equation}
\gamma^s_{\xi} = E[ \xi^s -1 ] =
  \frac{u^s \eta_1^s}{\eta_1^s - 1} + 
  \frac{(1 - u^s)\eta_2^s}{\eta_2^s + 1} - 1  ~,
\end{equation}
}
where  $E[\cdot]$ is the expectation.
In the absence of control, $S(t)$ evolves according to
\begin{equation}
\frac{dS(t)}{S(t^-)} = 
  \left(\mu^s -\lambda_\xi^s \gamma_{\xi}^s \right) \, dt + 
  \sigma^s \, d Z^s +  d\left( \ds \sum_{i=1}^{\pi_t^s} (\xi_i^s -1) \right) ,
\label{jump_process_stock}
\end{equation}
where $\mu^s$ is the (uncompensated) drift rate, $\sigma^s$ is the volatility, $d Z^s$ is the increment of a Wiener process, $\pi_t^s$ is a Poisson process with positive intensity parameter $\lambda_\xi^s$,
and the $\xi_i^s$ are  i.i.d.\ positive random variables having distribution (\ref{eq:dist_stock}). Moreover, $\xi_i^s$, $\pi_t^s$, and $Z^s$ are assumed to all be mutually independent.

Similarly, let the amount in the bond index the instant before $t$ be $B(t^-)$. In the absence of investor intervention, $B(t)$ evolves as
\begin{equation}
\frac{dB(t)}{B(t^-)} = \left(\mu^b -\lambda_\xi^b \gamma_{\xi}^b  
   + \mu_c^b {\bf{1}}_{\{B(t^-) < 0\}}  \right) \, dt + 
  \sigma^b \, d Z^b +  d\left( \ds \sum_{i=1}^{\pi_t^b} (\xi_i^b -1) \right) ,
\label{jump_process_bond}
\end{equation}
where the terms in equation (\ref{jump_process_bond}) are defined
analogously to equation (\ref{jump_process_stock}). In particular,
$\pi_t^b$ is a Poisson process with positive intensity parameter
$\lambda_\xi^b$, and the density function  for $y= \log \xi^b$ is
\begin{linenomath*}
\begin{equation*}
f^b( y) = u^b \eta_1^b e^{-\eta_1^b y} {\bf{1}}_{y \geq 0} +
       (1-u^b) \eta_2^b e^{\eta_2^b y} {\bf{1}}_{y < 0}~,
\end{equation*}
\end{linenomath*}
and $\gamma_{\xi}^b = E[ \xi^b -1 ]$.  Again $\xi_i^b$, $\pi_t^b$, and
$Z^b$ are assumed to all be mutually independent. The term $\mu_c^b
{\bf{1}}_{\{B(t^-) < 0\}}$ in equation~(\ref{jump_process_bond})
represents the extra cost of borrowing, that is, the spread.

The diffusion processes are correlated, that is, \ $d Z^s \cdot d Z^b = \rho_{sb}~ dt$.   However, contrary to common belief, an analysis of historical data suggests that the stock and bond jump processes are essentially
uncorrelated (see \citet{forsyth_2020_a} for empirical justification).  We make this assumption in this work.

We define the investor's total wealth at time $t$ as
\begin{equation*}
\text{Total wealth } \equiv W(t) = S(t)+ B(t).
\end{equation*}
In case of insolvency,
the portfolio is liquidated, trading stops and any outstanding debt accrues interest 
at the borrowing rate.

\subsection{The Set of Controls.}
Consider a set of discrete times 
\begin{equation*}
  \mathcal{T} = \{t_0=0 <t_1 <t_2< \ldots <t_M=T\}  \label{T_def}
\end{equation*}
where we assume that $t_i - t_{i-1} = \Delta t =T/M$ is constant. 
Here $t_0=0$ is  the inception time of the investment and  $\mathcal{T} $ is the set of withdrawal/rebalancing times, as
defined in equation (\ref{T_def}). At each rebalancing time $t_i$, $i =
0, 1, \ldots, M-1$, the investor first withdraws an amount of cash $\qq_i$
from the portfolio, and then afterwards rebalances the portfolio. At $t_M =
T$ the portfolio is liquidated and no cash flow occurs.  This is enforced by specifying $\qq_M = 0$.

Let
\begin{equation*}
  W(t_i^-) = 
    S(t^-_i) + B(t_i^-) 
\end{equation*}
denote the instant before withdrawals and rebalancing at $t_i$,  we then have that   $W(t_i^+)$ is given by
\begin{equation*}
   W(t_i^+) =  W(t_i^-) 
              -\qq_i ~;~ i \in \mathcal{T} ~,  
\end{equation*}
with $W(t_M^+) =  W(t_M^-) $ since $\qq_M \equiv 0$.

Typically, DC plan savings are held in a tax-advantaged account, with
no taxes triggered by rebalancing. With infrequent (e.g.\ yearly)
rebalancing, we also expect other transaction costs, 
to be small, and hence can be ignored. It is possible to
include transaction costs, but at the expense of increased computational
cost \citep{Van2018}.

Let $X\left(t\right)= \left( S \left( t \right), B\left( t \right) \right)$,
$t\in\left[0,T\right]$ denote the multi-dimensional controlled underlying process and  $x = (s,
b)$ the realized state of the system.  The rebalancing control $\pp_i(\cdot)$ is the fraction invested
in the stock index at the rebalancing date $t_i$, that is, 
\begin{equation*}
  \pp_i \left(X(t_i^-)\right) = 
  \pp \left(X(t_i^-), t_i \right) = \frac{S(t_i^+)}{S(t_i^+) + B(t_i^+)}~.
\end{equation*}

Let 
$\qq_i(\cdot)$ be the amount withdrawn at
time $t_i$, that is, \ $\qq_i \left( X(t_i^-) \right) = \qq \left( X(t_i^-),
t_i \right)$. Formally, the controls depend on the state of the
investment portfolio, before the rebalancing occurs, that is, 
{\myred 
$\pp_i(\cdot) =  \pp\left(X(t_i^-), t_i \right)$ and  
$\qq_i(\cdot) =  \qq\left(X(t_i^-), t_i \right)$
} 
$t_i \in \mathcal{T}$, where $\mathcal{T}$ is the set of rebalancing
times.
However, it will be convenient to note that in our case, we find the
optimal control $\pp_i(\cdot)$ amongst all strategies with constant
wealth (after withdrawal of cash). Hence, with some abuse of notation,
we will now consider $\pp_i(\cdot)$ to be a function of wealth after
withdrawal of cash, where we use the shorthand notation $W_i^-$ and $W_i^+$ for the variables representing wealth the instant before and instant after $t_i$. Using a similar shorthand notation for $S_i$ and $B_i$ we then have
%
\begin{align*}
  \pp_i(\cdot) &= \pp(W(t_i^+), t_i) = \pp_i(W_i^+)  \nonumber \\
  S_i^+ &=  \pp_i(W_i^+)~ W_i^+ \nonumber \\
  B_i^+ &= (1 -\pp_i(W_i^+)) ~W_i^+ ~.  
\end{align*}
Note that the control for $\pp_i(\cdot)$ depends only on $W_i^+$. Since
$\pp_i(\cdot) = \pp_i(W_i^- - \qq_i)$,  it follows that
\begin{equation*}
  \qq_i(\cdot) = \qq_i(W_i^-)~, 
\end{equation*}
which we discuss further in Section \ref{DP_program}.

A control at time $t_i$ is then given by the pair $(\qq_i(\cdot),
\pp_i(\cdot))$ where the notation $(\cdot)$ denotes that the control
is a function of the state. Let $\mathcal{Z}$ represent the set of
admissible values of the controls $(\qq_i(\cdot), \pp_i(\cdot))$. We
impose no-shorting, bounded leverage constraints (assuming solvency) along with maximum and minimum values for the withdrawals. In addition we apply the
constraint that in the event of insolvency due to withdrawals ($W(t_i^+)
< 0$),  or in the case of leverage, trading ceases and debt (negative wealth) accumulates at the
appropriate borrowing rate of return (that is,  a spread over the bond rate).
We also specify that the stock assets are liquidated at $t=t_M$. 

We  can then  define our controls by 
{\myred 
\begin{align}
  \mathcal{Z}_{\qq}(W_i^-, t_i) &=
    \begin{cases}
      [\qq_{\min},\qq_{\max}] & t_i \in \mathcal{T} ~;~ t_i \neq t_M~;~ 
        W_i^- \geq \qq_{\max} \\
      [\qq_{\min},\max(\qq_{\min},W_i^-)] & t_i \in \mathcal{T} ~;~ 
        t \neq t_M ~;~ W_i^- < \qq_{\max} \\
      \{0\} & t_i = t_M
   \end{cases} ~,  
\label{Z_q_def} \\
  \mathcal{Z}_\pp (W_i^+,t_i) &=
    \begin{cases}
      [0, \pp_{\max}]  & ~ t_i \in \mathcal{T}~;~ t_i \neq t_M ~; ~W_i^+ > 0 \\
      \{0\} &~ t_i \in \mathcal{T}~;~  t_i \neq t_M~;~  W_i^+ \leq 0  \\
      \{0\} & t_i=t_M
    \end{cases} ~.  
\label{Z_p_def} 
\end{align}
} 
The rather complicated expression in equation (\ref{Z_q_def}) imposes
the assumption that as wealth becomes small, the retiree first tries
to avoid insolvency as much as possible and in the event of
insolvency, withdraws only $\qq_{\min}$.

The set of admissible values for $(\qq_i, \pp_i), t_i \in \mathcal{T}$,
can then be written as
\begin{equation}
  (\qq_i, \pp_i) \in \mathcal{Z}(W_i^-, W_i^+,t_i) = 
    \mathcal{Z}_{\qq}(W_i^-, t_i) \times \mathcal{Z}_{\pp} (W_i^+,t_i)~.
\label{admiss_set}
\end{equation}
For implementation purposes, we have written equation (\ref{admiss_set})
in terms of the wealth after withdrawal of cash. However, we remind the
reader that since $W_i^+ = W_i^- -\qq_i$, the controls are formally a
function of the state $X(t_i^-)$ before the control is applied.

The admissible control set $\mathcal{A}$ can then be written as
\begin{equation*}
  \mathcal{A} = \biggl\{
    (\qq_i, \pp_i)_{0 \leq i \leq M} : (\pp_i, \qq_i) \in 
    \mathcal{Z}(W_i^-, W_i^+,t_i) \biggr\}~
\end{equation*}
with an admissible control $\mathcal{P} \in \mathcal{A}$  written as
\begin{equation*}
  \mathcal{P} = \{(\qq_i(\cdot), \pp_i(\cdot)) ~:~ i=0, \ldots, M \} ~.
\end{equation*}
We also define $\mathcal{P}_n \equiv \mathcal{P}_{t_n} \subset
\mathcal{P}$ as the tail of the set of controls in $[t_n, t_{n+1},
\ldots, t_{M}]$, that is, 
\begin{equation*}
  \mathcal{P}_n =\{(\qq_n(\cdot), \pp_n(\cdot)), \ldots, 
                   (\qq_{M}(\cdot), \pp_{M}(\cdot)) \} ~
\end{equation*}
and  $\mathcal{A}_n$, the tail of the admissible
control set,   as
\begin{equation*}
  \mathcal{A}_n = \biggl\{
    (\qq_i, \pp_i)_{n \leq i \leq M} : 
    (\qq_i, \pp_i) \in \mathcal{Z}(W_i^-, W_i^+,t_i) \biggr\}~,
\end{equation*}
so that $\mathcal{P}_n \in \mathcal{A}_n$.

\section{Risk and Reward}\label{risk_section}

In this section we consider our measures of risk and reward for our retiree.  In this case the retiree
is primarily concerned with the risk of depleting savings while at the same time hoping to maximize 
the cash withdrawals from his plan.

\subsection{Risk: Expected Shortfall (ES)}

Two typical measures of financial tail risk of an investment portfolio  are Value at Risk (VAR) and Conditional Value of risk (CVAR), with the latter also known as Expected Shortfall.
Expected Shortfall  is an alternative to value at risk that is more sensitive to the shape of the tail of the loss distribution.

Suppose
\begin{linenomath*}
\begin{equation*}
\int_{-\infty}^{W^*_{\alpha}}  g(W_T) ~dW_T = \alpha,
\end{equation*}
\end{linenomath*}
where $g(W_T)$ be the probability density function of wealth $W_T$ at $t=T$.
{\myred 
Then $W^*_{\alpha}$ can be viewed as the Value at Risk (VAR) at level $\alpha$ since  the probability of wealth being more than $W^*_{\alpha}$ is  $1 - \alpha$. 
For example, if $\alpha = .01$, then 99\% of the outcomes have $W_T >
W^*_{\alpha}$. If $W^*_{\alpha}$ is sufficiently large, 
this suggests very low risk of running out of savings.
}  
The Expected Shortfall (ES)  at level $\alpha$ is then
\begin{equation}
  {\text{ES}}_{\alpha} = 
    \frac{1}{\alpha} \int_{-\infty}^{W^*_{\alpha}} W_T ~ g(W_T)~dW_T~,
\label{ES_def_1}
\end{equation}
which is the mean of the worst $\alpha$ fraction of outcomes. Typically
$\alpha \in \{ .01, .05 \}$. The definition of ES in equation
\eqref{ES_def_1} uses the probability density of the final wealth
distribution, not the density of \emph{loss}. Hence in our case, a
larger value of ES, that is,   a larger value of average worst case terminal
wealth,  is desired.  \footnote{ In
practice, the negative of $W^*_{\alpha}$ is often the reported VAR. and the negative of ES is commonly referred to as
Conditional Value at Risk (CVAR).}

Define 
$X_0^- = X(t_0^-)$. Given an expectation
under control $\mathcal{P}$, $E_{\mathcal{P}} [\cdot]$, then as noted by
\citet{Uryasev_2000}, the expected shortfall  at level $\alpha$ has the alternate formulation as 
\begin{equation}
  {\text{ES}}_{\alpha}( X_0^-, t_0^-) =  
    \sup_{W^*} E_{\mathcal{P}_0}^{X_0^-, t_0^-}
    \biggl[W^* + \frac{1}{\alpha} \min(  W_T - W^* , 0 ) \biggr] ~.
\label{ES_def}
\end{equation}
The admissible set for $W^*$ in equation (\ref{ES_def}) is over the set
of possible values for $W_T$.

The notation ${\text{ES}}_{\alpha}( X_0^-, t_0^-)$ emphasizes that
${\text{ES}}_{\alpha}$ is as seen at $(X_0^-, t_0^-)$, that is, the pre-commitment ${\text{ES}}_{\alpha}$. A strategy based
purely on optimizing the pre-commitment value of ${\text{ES}}_{\alpha}$
at time zero is \emph{time-inconsistent}. As such it  has been termed by many
as \emph{non-implementable}, since the investor has an incentive to
deviate from the time zero pre-commitment strategy at $t >0$. 
However  we  consider the pre-commitment strategy merely
as a device to determine an appropriate level of $W^*$ in equation
(\ref{ES_def}). If we fix $W^*$ for all positive $t$, then this strategy is
the induced time-consistent strategy \citep{Strub_2019_a,forsyth_2019_c,Strub_2022} and hence
is implementable. For further discussion of the relationship 
between time consistent and pre-commitment strategies,
see \citep{vigna:2014,Vigna_2017b,Vigna2017,Strub_2019_a,forsyth_2019_c,bjork_book_2021,Strub_2022}. 
In particular, see \citet{forsyth_2019_c} for discussion of
the induced time consistent policy resulting from the 
use of ES risk.

An alternative measure of risk could be based on variability of
withdrawals \citep{Forsyth_Arva_a}. However, we note that we have
constraints on the minimum and maximum withdrawals, so that variability
is mitigated. We also assume that given these constraints, the retiree
is primarily concerned with the risk of depleting savings, something which is well
measured by ES.

\subsection{Reward: Expected Total Withdrawals (EW)}
We will use expected total withdrawals as a measure of reward in the
following. More precisely, we define EW (expected withdrawals) as
{\myred 
\begin{equation}
  {\text{EW}}( X_0^-, t_0^-) = E_{\mathcal{P}_0}^{X_0^+, t_0^+} 
    \biggl[\sum_{i=0}^{M} \qq_i \biggr]~,  
\label{EW_def}
\end{equation}
where we assume that the investor survives for the entire decumulation
period. This is  consistent with the scenario in \citet{Bengen1994} .
} 

We remark that there is no discounting term in equation (\ref{EW_def}) as  all quantities are real, that is, inflation-adjusted. It is
straightforward to introduce discounting, but we view setting the real
discount rate to zero to be a reasonable and conservative choice. See
\citet{Forsyth_2021_b} for further comments.

\section{ Maximizing Conflicting Measures: Problem EW-ES}\label{ew_es_section}
Expected withdrawals (EW) and expected shortfall (ES) are two  conflicting measures.  We handle this by using a scalarization technique to find the
Pareto points for this bi-objective optimization problem, that is, the set of points where one objective cannot be improved without worsening the other.  
For a given scalarization parameter $\kappa > 0$, the goal is  to then find the
control $\mathcal{P}_0$ that maximizes
\begin{equation*}
  \text{EW}( X_0^-, t_0^-) + \kappa~\text{ES}_{\alpha}( X_0^-, t_0^-)~.
\end{equation*}

More precisely, we define the pre-commitment EW-ES problem
$(PCES_{t_0}(\kappa))$ problem in terms of the value function
$J(s, b , t_0^-)$ derived from (\ref{ES_def}) and (\ref{EW_def}):

\begin{align}
J\left(s, b,  t_{0}^-\right)  
  & = \sup_{\mathcal{P}_{0}\in\mathcal{A}} \sup_{W^*}
     \Biggl\{
       E_{\mathcal{P}_{0}}^{X_0^-, t_{0}^-}
     \Biggl[
           \sum_{i=0}^{M} \qq_i +
            \kappa\biggl(W^* + \frac{1}{\alpha} \min (W_T -W^*, 0) \biggr)
             \Biggr. \Biggr. \nonumber \\
     & \Biggl. \Biggl.
     ~~~~~~~~~~~~~~~~~~~~~~ +~\epsilon W_T 
     \bigg\vert X(t_0^-) = (s, b) 
    ~\Biggr] 
   \Biggr\} \label{PCES_a}\\
\text{ subject to } &
\begin{cases}
   (S(t), B(t)) \text{ follows processes \eqref{jump_process_stock} and 
                    \eqref{jump_process_bond}};  
                     ~~t \notin \mathcal{T} \\
  W_{\ell}^+ = W_{\ell}^{-} - \qq_\ell \,; 
              ~X_\ell^+ = (S_\ell^+ , B_\ell^+) \\
  W_{\ell}^- = \biggl(S_\ell^- + B_\ell^- \biggr)  \\
  S_\ell^+ = \pp_\ell(\cdot) W_\ell^+ \,; 
             ~B_\ell^+ = (1 - \pp_\ell(\cdot) ) W_\ell^+ \, \\
  (\qq_\ell(\cdot) , \pp_\ell(\cdot)) \in 
    \mathcal{Z}(W_\ell^-, W_\ell^+,t_\ell) \\
  \ell = 0, \ldots, M ~;~ t_\ell \in \mathcal{T} \\
\end{cases}~.
\label{PCES_b}
\end{align}

Note that we have added an extra term
$E_{\mathcal{P}_{0}}^{X_0^-,t_{0}^-}[ \epsilon W_T ]$ to equation
(\ref{PCES_a}). If we have a maximum withdrawal constraint, and if $W_t
\gg W^*$ as $t \rightarrow T$, then the controls become ill-posed. In this
fortunate state for the investor, we can break investment policy ties
either by setting $\epsilon < 0$, which will force investments in bonds,
or by setting $\epsilon > 0$, which will force investments into stocks.
Choosing $ |\epsilon| \ll 1$ ensures that this term only has an effect
if $W_t \gg W^*$ and $t \rightarrow T$. See \citet{Forsyth_2021_b} for
more discussion of this.

We can interchange the $\sup \sup (\cdot) $ 
\footnote{
  Let $F = \sup_{ (a,b) \in A \times B} f(a, b)$, then $\forall \epsilon >0$,
$\exists (a^*,b^*) \in  A \times B$, s.t.
$ f(a^*,b^*) > F - \epsilon$.
Then $F \ge \sup_{a \in A} \sup_{b \in B} f(a, b)
        \ge \sup_{b \in b} f(a^*, b) 
        \ge   f(a^*,b^*) > F - \epsilon$. Hence, $\epsilon \rightarrow 0$
implies $\sup_{a \in A} \sup_{b \in B} f(a, b) = F$.
Similarly $\sup_{b \in B} \sup_{a \in A} f(a, b) = F$.
} 
in equation (\ref{PCES_a}) to represent the 
value function as 

\begin{align}
\qquad J\left(s, b,  t_{0}^-\right) &=  
  \sup_{W^*} \sup_{\mathcal{P}_{0}\in\mathcal{A}} \Biggl\{
  E_{\mathcal{P}_{0}}^{X_0^+,t_{0}^+}
  \Biggl[\sum_{i=0}^{M} \mathfrak{q}_i +
  \kappa \biggl( W^* + \frac{1}{\alpha} \min (W_T -W^*, 0) \biggr)
  \Biggr. \Biggr.  \nonumber \\
   &  \Biggl. \Biggl. ~~~~~~~~~~~~~~~~~~~~~~ +~\epsilon W_T 
     \bigg\vert X(t_0^-) = (s, b) \Biggr] \Biggr\} .  
\label{pcee_inter}
\end{align}
Since the inner supremum in equation (\ref{pcee_inter}) is a
continuous function of $W^*$ and  the optimal value of $W^*$ in
equation (\ref{pcee_inter}) is bounded, \footnote{This is the same as
noting that a finite value at risk exists.} we can define
\begin{align}
\mathcal{W}^*(s,b) &=\displaystyle \argmax_{W^*} \biggl\{
                      \sup_{\mathcal{P}_{0}\in\mathcal{A}} 
  \Biggl\{ E_{\mathcal{P}_{0}}^{X_0^+,t_{0}^+}
  \Biggl[ ~\sum_{i=0}^{M} \mathfrak{q}_i  + 
  \kappa \biggl( W^* + \frac{1}{\alpha} \min (W_T -W^*, 0) \biggr) 
  \biggr. \biggr.  \nonumber \\
    & \biggl. \biggl. ~~~~~~~~~~~~~~~~~~~~~~~~~~~~~ +~ \epsilon W_T 
     \bigg\vert X(t_0^-) = (s, b)\Biggr]\Biggr\} ~.
\label{pcee_argmax}
\end{align}
Given $\mathcal{W}^*(s,b)$ from equation (\ref{pcee_argmax}), then the optimal control $\mathcal{P}^*$ which solves
Problem (\ref{pcee_inter}) is the same control which 
solves 
\begin{align}
\qquad \hat{J}\left(s, b,  t_{0}^-\right) &=
   \sup_{\mathcal{P}_{0}\in\mathcal{A}} \Biggl\{
  E_{\mathcal{P}_{0}}^{X_0^+,t_{0}^+}
  \Biggl[\sum_{i=0}^{M} \mathfrak{q}_i +
    \frac{\kappa}{\alpha} \min (W_T - \mathcal{W}^*(s,b), 0) 
  \Biggr. \Biggr.  \nonumber \\
   &~~~~~~~~~~~ ~~~~~~~ ~~~~~~~~\Biggl. \Biggl. +~\epsilon W_T
     \bigg\vert X(t_0^-) = (s, b) \Biggr] \Biggr\} .
\label{pcee_inter_consistent}
\end{align}
Hence Problem (\ref{pcee_inter_consistent}) is the induced time consistent
policy for Problem (\ref{pcee_inter}).
We refer the reader to \citet{forsyth_2019_c} for an extensive discussion
concerning pre-commitment and time consistent expected shortfall strategies.

\section{Formulation as a Dynamic Program}\label{DP_program}

We can use the method in \citet{forsyth_2019_c} to  determine our value function. 
Write  (\ref{pcee_inter}) as
\begin{equation*}
   J(s, b, t_0^-) = \sup_{W^*} \tildev(s, b, W^*, 0^-)~,
\end{equation*}
where $\tildev(s, b, W^*, t)$ is defined as
\begin{align}
  \tildev(s, b, W^*, t_n^-) &= \sup_{\mathcal{P}_{n}\in\mathcal{A}_n}
    \Biggl\{ E_{\mathcal{P}_{n}}^{\hat{X}_n^+,t_{n}^+}
    \Biggl[\sum_{i=n}^{M} \qq_i + \kappa \biggl(
    W^* + \frac{1}{\alpha} \min((W_T -W^*),0) \biggr) 
       \Biggr. \Biggr.  \nonumber \\ 
     &~~~~~~~~~~~~~~~~~  ~~~~~~~~~~~~
              +~\epsilon W_T   ~\bigg\vert ~ \hat{X}(t_n^-) = (s, b, W^*) \Biggr] \Biggr\}
\label{expanded_1} \\
\text{ subject to } &
\begin{cases}
 (S(t), B(t)) \text{ follows processes \eqref{jump_process_stock} and 
                  \eqref{jump_process_bond}};  ~~t \notin \mathcal{T} \\
  W_{\ell}^+ = W_{\ell}^- - \qq_\ell \,; 
             ~ X_\ell^+ = (S_\ell^+ , B_\ell^+, W^*) \\
  W_{\ell}^- = \biggl(S_\ell^- + B_\ell^- \biggr) 
               \\
  S_\ell^+ = \pp_\ell(\cdot) W_\ell^+ \,; 
  ~B_\ell^+ = (1 - \pp_\ell(\cdot) ) W_\ell^+ \, \\
  (\qq_\ell(\cdot), \pp_\ell(\cdot)) \in 
     \mathcal{Z}(W_\ell^-, W_\ell^+ , t_\ell) \\
  \ell = n, \ldots, M ~;~ t_\ell \in \mathcal{T}
\end{cases}  ~.
\end{align}

The original problem (\ref{pcee_inter}) has therefore been decomposed  into two
steps:
\begin{enumerate}
\item For given initial cash $W_0$, and a fixed value of $W^*$, solve
problem (\ref{expanded_1}) using dynamic programming in order to determine
$\tildev(0,W_0, W^*, 0^-)$.
\item   Solve problem (\ref{pcee_inter}) by maximizing over $W^*$
\begin{equation}
  J(0,W_0, 0^-) = \sup_{W^*} v(0,W_0, W^*, 0^-) ~.
\label{final_step_EWES}
\end{equation}
\end{enumerate}

\subsection{Dynamic Programming Solution of Problem (\ref{expanded_1})}\label{dp_section}

We give a brief overview of the method described in detail in
\citep{Forsyth_2021_b}.  We apply the dynamic programming principle to $t_n
\in \mathcal{T}$
\begin{align}
  \tildev(s, b, W^*, t_n^-) &= 
    \sup_{\qq \in \mathcal{Z}_\qq(w^-,t_n)} \biggl\{
    \sup_{\pp \in \mathcal{Z}_\pp(w^+ , t_n)} 
    \biggl[\qq+ \tildev( w^+ \pp,  w^+
    (1-\pp) , W^*, t_n^+) \biggr] ~\biggr\} 
   \nonumber \\
 & \quad \mbox{ where } w^- = (s+b) ~; ~ w^+ = w^- - \qq ~.   
\label{dynamic_a}
\end{align}
If we  set
\begin{equation}
h(w, t_n, W^*) = 
  \biggl[\sup_{\pp \in \mathcal{Z}_\pp( w, t_n)}
         \tildev(w \pp , w(1-\pp) , W^*, t_n^+) \biggr] ~
\label{dynamic_b}
\end{equation}
then  equation (\ref{dynamic_a})  becomes
\begin{align}
\tildev(s, b, W^*, t_n^-) &= 
  \sup_{\qq \in \mathcal{Z}_\qq(w^-,t_n)}
  \biggl\{ \qq + \biggl[h( (w^- -\qq) , W^*, t_n^+)
                                     \biggr] \biggr\} \nonumber \\
 & \quad \mbox{ with } w^- = (s+b) .
\label{dynamic_c}
\end{align}
This approach effectively replaces a two dimensional optimization for
$(\qq_n, \pp_n)$, by two sequential one dimensional optimizations.
From equations (\ref{dynamic_b}) and (\ref{dynamic_c}), the
optimal pair $(\qq_n, \pp_n)$ satisfies
\begin{align*}
\qq_n &= \qq_n(w^- , W^*)  \quad ~  \mbox{ where } 
   \quad w^- = (s+b)   
     \nonumber \\
\pp_n &= \pp_n( w, W^*)  \quad \quad \mbox{ where } 
   \quad w = w^- - \qq_n ~.
\end{align*}
In other words, the optimal withdrawal control $ \qq_n$ is only
a function of total wealth before
withdrawals while the optimal control $\pp_n$ is a function only of total
wealth after withdrawals.  If a withdrawal results in $W^+ >0$, then
the optimal control for $\pp$ is determined along lines
of constant wealth in the $(s, b)$ plane while if a withdrawal results
in $W^+ < 0$, then stocks are liquidated ($\pp = 0$).
This is illustrated 
in Figure \ref{insolvent_fig}.

\begin{figure}[htb!]
\centerline{%
\includegraphics[width=4.0in]{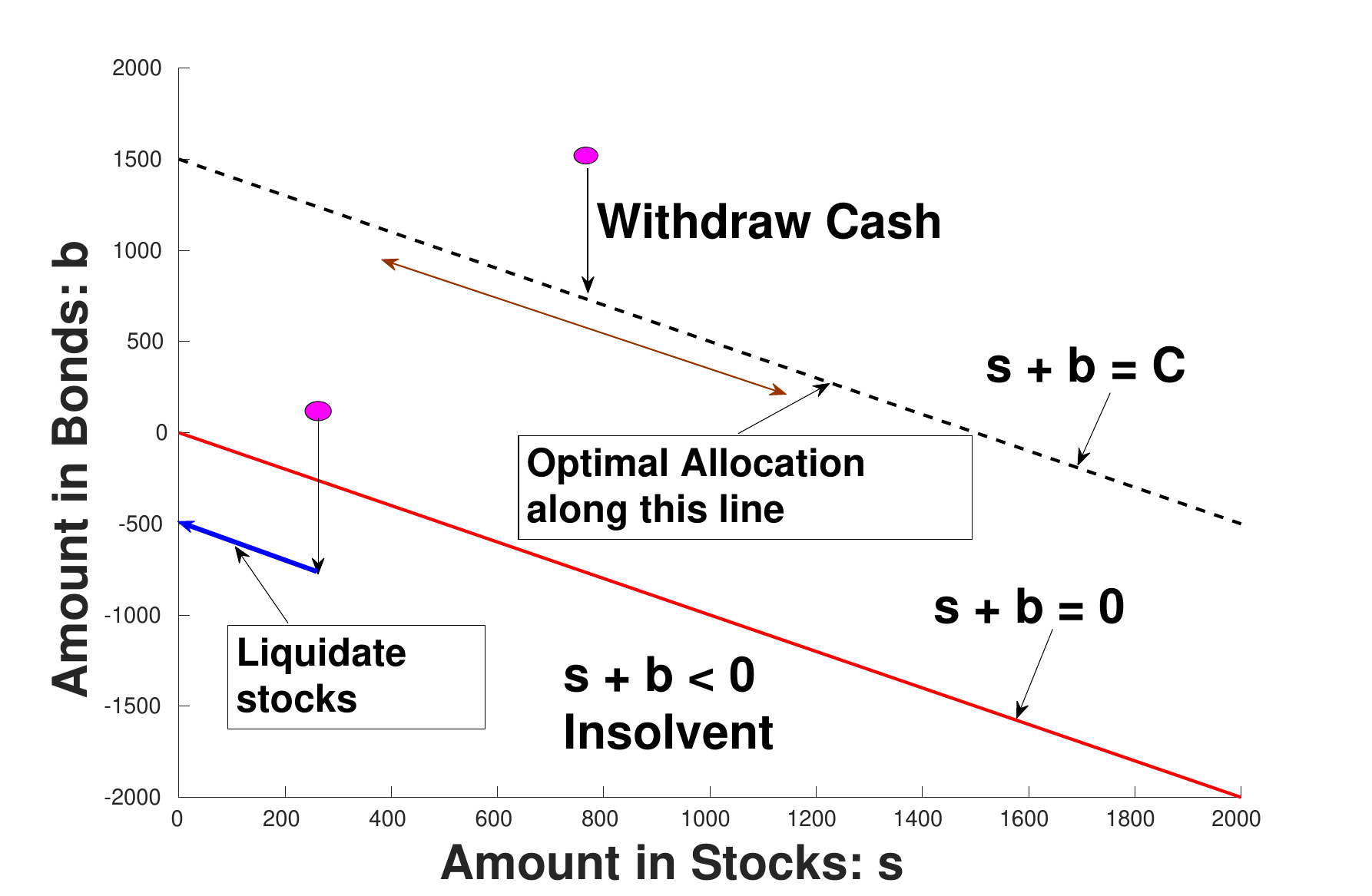}
}
\caption{
Schematic of withdrawal controls.
}
\label{insolvent_fig}
\end{figure}

At $t=T$, we have
\begin{equation*}
\tildev(s, b, W^*, T^+) =  \kappa \biggl( 
  W^* + \frac{\min( (s+b -W^*), 0 )}{\alpha} 
  \biggr) + \epsilon (s+b)~,
\end{equation*}
while at points in between rebalancing times, that is when \ $t
\notin \mathcal{T}$,  standard arguments from SDEs
(\ref{jump_process_stock}-\ref{jump_process_bond}), and
\citet{Forsyth_2021_b} give
\begin{align}
\tildev_t &+ \frac{(\sigma^s)^2 s^2}{2} \tildev_{ss} 
     +(\mu^s - \lambda_{\xi}^s \gamma_{\xi}^s) s \tildev_s
     + \lambda_{\xi}^s \int_{-\infty}^{+\infty} \tildev( e^ys, b, t) f^s(y)~dy 
     + \frac{(\sigma^b)^2 b^2}{2} \tildev_{bb} \nonumber \\
    &+ (\mu^b + \mu_c^b {\bf{1}}_{\{ b < 0 \}} 
     - \lambda_{\xi}^b \gamma_{\xi}^b) b \tildev_b       
     + \lambda_{\xi}^b \int_{-\infty}^{+\infty} \tildev( s, e^yb, t) f^b(y)~dy 
     -(\lambda_{\xi}^s + \lambda_{\xi}^b ) \tildev \nonumber \\
    & + \rho_{sb} \sigma^s \sigma^b s b \tildev_{sb} = 0~.
\label{expanded_7}
\end{align}

It is convenient to consider the two cases $b\geq 0$ and $b \leq 0$ separately.
For $b\geq 0$, we solve  
\begin{align}
\tildev_t &+ \frac{(\sigma^s)^2 s^2}{2} \tildev_{ss}
     +(\mu^s - \lambda_{\xi}^s \gamma_{\xi}^s) s \tildev_s
     + \lambda_{\xi}^s \int_{-\infty}^{+\infty} \tildev( e^ys, b, t) f^s(y)~dy
     + \frac{(\sigma^b)^2 b^2}{2} \tildev_{bb} \nonumber \\
    &+ (\mu^b  
     - \lambda_{\xi}^b \gamma_{\xi}^b) b \tildev_b
     + \lambda_{\xi}^b \int_{-\infty}^{+\infty} \tildev( s, e^yb, t) f^b(y)~dy
     -(\lambda_{\xi}^s + \lambda_{\xi}^b ) \tildev \nonumber \\
     & + \rho_{sb} \sigma^s \sigma^b s b \tildev_{sb} = 0~, \nonumber \\
       &  ~~~~~~~~~~~~~~~~~~~~~~~~~~~~~~~~~~~~~~~~~~~~~~~~~ b \geq 0, s \geq0~.
\label{expanded_7a}
\end{align}
When $b <0$, it is convenient  to re-write equation
(\ref{expanded_7}) in terms of debt $\hat{b} = -b$.
Letting
$\hat{v} (s, \hat{b}, t) = \tildev(s, b, t), b < 0, \hat{b} = -b$ in equation
(\ref{expanded_7}) we obtain
\begin{align}
\hat{v}_t &+ \frac{(\sigma^s)^2 s^2}{2} \hat{v}_{ss}
     +(\mu^s - \lambda_{\xi}^s \gamma_{\xi}^s) s \hat{v}_s
     + \lambda_{\xi}^s \int_{-\infty}^{+\infty} \hat{v}( e^y s, \hat{b}, t) f^s(y)~dy
     + \frac{(\sigma^b)^2 \hat{b}^2}{2} \hat{v}_{\hat{b} \hat{b} } \nonumber \\
    &+ (\mu^b + \mu_c^b
     - \lambda_{\xi}^b \gamma_{\xi}^b) \hat{b} \hat{v}_{\hat{b}}
     + \lambda_{\xi}^b \int_{-\infty}^{+\infty} \hat{v}( s, e^y \hat{b}, t) f^b(y)~dy
     -(\lambda_{\xi}^s + \lambda_{\xi}^b ) \hat{v} \nonumber \\
     &+ \rho_{sb} \sigma^s \sigma^b s \hat{b} \hat{v}_{s \hat{b} } = 0~,
               \nonumber \\
      & ~~~~~~~~~~~~~~~~~~~~~~~~~~~~~~~~~~~~~~~~~~~~~~~~~ \hat{b} >   0, s \geq 0~.
\label{minus_b_2}
\end{align}

Note that equation (\ref{minus_b_2}) is now amenable to a
transformation of the form $\hat{x} = \log \hat{b}$ since
$\hat{b} > 0$, something required when using a Fourier method
\citep{forsythlabahn2017,Forsyth_2021_b} to solve equation
(\ref{minus_b_2}).

After rebalancing, if $b \ge 0$, then $b$ cannot become negative, since
$b=0$ is a barrier in equation (\ref{expanded_7}). However, $b$ can
become negative after withdrawals, in which case $b$ remains in the
state $b<0$. There  equation (\ref{minus_b_2}) applies, unless there
is an injection of cash to move to a state with $b>0$. The terminal
condition for equation (\ref{minus_b_2}) is
\begin{equation*}
\hat{v}(s, \hat{b}, W^*, T^+) = \kappa \biggl( 
  W^* + \frac{\min( (s  -\hat{b} -W^*), 0 )}{\alpha} 
  \biggr) + \epsilon (s -\hat{b})  ~;~ \hat{b} > 0~. 
\end{equation*}

\section{The $\delta$-Monotone Fourier Method}\label{delta_section}

The $\delta$-monotone Fourier method originated with the work of  \citet{forsythlabahn2017}, where it was applied to 
one dimensional problems in  stochastic control. In the following,
we give a sketch showing how to extend those methods to work for  two dimensional
problems. 

We illustrate this method by focusing on equation (\ref{expanded_7a}).
Since $b > 0, s > 0$ we can let $x_1  =  \log s,   x_2 = \log b, \tau  =  T-t$ with $
   \plainv( x_1, x_2, \tau)  =  \tildev( e^{x_1}, e^{x_2}, T - \tau)$. 
   Then (\ref{expanded_7a}) converts to
\begin{align}
\plainv_{\tau}  &=  \frac{(\sigma^s)^2 }{2} \plainv_{x_1 x_1 }
     +(\mu^s - \lambda_{\xi}^s \gamma_{\xi}^s)  \plainv_{x_1}
     + \lambda_{\xi}^s \int_{-\infty}^{\infty} v( x_1 + y , x_2, \tau) f^s(y)~dy
     + \frac{(\sigma^b)^2 }{2} \plainv_{x_2 x_2} \nonumber \\
    &+ (\mu^b 
     - \lambda_{\xi}^b \gamma_{\xi}^b)  \plainv_{x_2}
     + \lambda_{\xi}^b \int_{-\infty}^{\infty} \plainv( x_1, x_2 + y, t) f^b(y)~dy
     -(\lambda_{\xi}^s + \lambda_{\xi}^b )\plainv 
     + \rho_{s b} \sigma^s \sigma^b  \plainv_{x_1 x_2} . 
\label{expanded_log}
\end{align}

The exact solution of equation (\ref{expanded_log}) can be written as
\begin{eqnarray}
     \plainv(x_1, x_2, \tau + \Delta \tau ) & = & \int_{-\infty}^{\infty} \int_{-\infty}^{\infty} 
                      g( x_1 - \zz, x_2 - \zzz, \Delta \tau) ~\plainv( \zz, \zzz, \tau)
                         ~d \zz~d\zzz~,
     \label{green_1}
\end{eqnarray}
where $g(x_1, x_2, \Delta \tau)$ is the Green's function of equation (\ref{expanded_7a}) \citep{menaldi_1992}.

\begin{remark}[Form of Green's function]\label{diff_remark}
Note that the Green's function for equation (\ref{expanded_7a}) is of the form
$g(x_1, x_2, \zz, \zzz) = g( x_1 - \zz, x_2 - \zzz)$, that is a function of the differences.
Intuitively, we can view the Green's function as a scaled
probability density $f$, and this means that  $f(x_1, x_2) | (\zz, \zzz)$
depends only on the difference, $f(x_1, x_2) | (\zz, \zzz)
= f(  x_1 - \zz, x_2 - \zzz)$.  See \citet{forsythlabahn2017} for more discussion of this observation.
\end{remark}

The Fourier transform pair for the Green's  function is given as 
\begin{eqnarray}
    G( \omega_1, \omega_2, \Delta \tau) & = & \int_{-\infty}^{\infty} \int_{-\infty}^{\infty} 
                     g( x_1 , x_2 , \Delta \tau) e^{ -2 \pi i w_1 x_1} e^{ -2 \pi i w_2 x_2} 
                         ~d x_1 ~dx_2~  \nonumber \\
    g( x_1, x_2, \Delta \tau) & = &  \int_{-\infty}^{\infty} \int_{-\infty}^{\infty} 
                     G( \omega_1, \omega_2, \Delta \tau) e^{ 2 \pi i w_1 x_1} e^{ 2 \pi i w_2 x_2}
                         ~d \omega_1 ~d \omega_2~.
              \label{FT_b} 
\end{eqnarray}
where $i = \sqrt{ -1}$.
Standard techniques then give the Fourier Transform of the Green's function $G(\omega_1, \omega_2, \Delta \tau)$
for equation (\ref{expanded_log}) as
\begin{eqnarray}
    G( \omega_1, \omega_2, \Delta \tau) & = & e^{ \Psi( \omega_1, \omega_2) \Delta \tau}~, \label{Green_ft_1}
\end{eqnarray}
where
\begin{eqnarray}
   \Psi( \omega_1, \omega_2) & = &  -\frac{(\sigma^s)^2 }{2} (2 \pi \omega_1)^2 
                                     +\left(\mu^s - \lambda_{\xi}^s \gamma_{\xi}^s - \frac{(\sigma^s)^2 }{2} \right) (2 \pi i \omega_1)
                                     + \lambda_{\xi}^s \overline{F^s}(\omega_1) \nonumber \\
             & & -\frac{(\sigma^b)^2 }{2} (2 \pi \omega_2)^2 
                                     +\left(\mu^b - \lambda_{\xi}^b \gamma_{\xi}^b - \frac{(\sigma^b)^2 }{2} \right) (2 \pi i \omega_2)
                                     + \lambda_{\xi}^b \overline{F^b}(\omega_1) \nonumber \\
             & &                - ( \lambda_{\xi}^s +  \lambda_{\xi}^b) 
                                - \rho_{s b} \sigma^s \sigma^b  (2 \pi \omega_1 ) (2 \pi \omega_2). \label{Green_ft_2}
\end{eqnarray}
Here $\overline{F^b}(\omega_1), \overline{F^s}(\omega_1)$ are the complex conjugates of the Fourier transforms of the density functions $f^s, f^b$:
\begin{eqnarray*}
    \overline{F^s}(\omega_1) & = & \frac{ u^s}{ 1 - \frac{ 2 \pi i \omega_1}{\eta_1^s}  }
               + \frac{ 1 - u^s}{ 1 + \frac{2 \pi i \omega_1}{ \eta_2^s} } \nonumber \\
    \overline{F^b}(\omega_2) & = & \frac{ u^b}{ 1 - \frac{ 2 \pi i \omega_2}{\eta_1^b}  }
               + \frac{ 1 - u^b}{ 1 + \frac{2 \pi i \omega_2}{ \eta_2^b} } ~.
\end{eqnarray*}

\subsection{Localization}
Define
\begin{eqnarray*}
   \Omega & = & [(x_1)_{\min}, (x_1)_{\max}] \times [(x_2)_{\min}, (x_2)_{\max}] ~.
\end{eqnarray*}
As is typically the case, we assume that the Green's function $g(x_1, x_2,  \Delta \tau)$
decays to zero as  $|x_1|, |x_2| \rightarrow \infty$.  More
precisely, we assume that 

\begin{assumption}[Decay of Green's Function]\label{green_decay_assumption}
For some $A >0$, $g(x_1, x_2, \Delta \tau)$ is negligible if $\min( |x_1|, |x_2|) > A$.
\end{assumption}
We choose our region $\Omega$ so that we can assume that the Green's function
is negligible for $(x_1,x_2) \notin \Omega$. As such we  replace the Fourier transform pair (\ref{FT_b}) by their Fourier series equivalent
\begin{eqnarray}
   G( \omega_1^k, \omega_2^j, \Delta \tau) & \simeq & \int_{(x_1)_{\min}}^{(x_1)_{\max}}  \int_{(x_2)_{\min}}^{(x_2)_{\max}}
                    g( x_1, x_2,  \Delta \tau) ~e^{ - 2 \pi i \omega_1^k x_1 }  ~e^{ - 2 \pi i \omega_2^j x_2 }~dx_1  ~dx_2
                                     \nonumber \\
   g(x_1, x_2, \Delta \tau)  & \simeq &  
      \frac{1}{P_1 P_2} \sum_{k=-\infty}^{\infty}  \sum_{j=-\infty}^{\infty}
                                        G(\omega_1^k, \omega_2^j, \Delta \tau) ~e^{ 2 \pi i \omega_1^k x_1} 
                                                           ~e^{ 2 \pi i \omega_2^j x_2 }  ~ \label{series_1}
                                                           \end{eqnarray}
                                                           with
                                                           \begin{eqnarray*}
              P_1 = (x_1)_{\max} - (x_1)_{\min} ~;~P_2 & = &  (x_2)_{\max} - (x_2)_{\min} ;  \quad 
               \omega_1^k = \frac{k}{P_1} ~;~ \omega_2^j = \frac{j}{P_2}
\end{eqnarray*}
along with a localized version of equation (\ref{green_1})
\begin{eqnarray}
     \plainv(x_1,x_2, \tau + \Delta \tau ) & \simeq & \int_{(x_1)_{\min}}^{(x_1)_{\max}}  \int_{(x_2)_{\min}}^{(x_2)_{\max}}
                      g( x_1 - \zz, x_2 - \zzz, \Delta \tau) ~\plainv( \zz, \zzz, \tau)
                         ~d \zz~d\zzz~.
               \nonumber \\
     \label{green_local}
\end{eqnarray}
Scaling factors in equation (\ref{series_1}) are selected to correspond to a
finite domain form of equation (\ref{FT_b}).

\subsection{Periodic Extension}
We have informally derived the expression for the Green's function by localizing
the infinite domain Green's function.
Localizing the problem on $\Omega$, and using a Fourier series representation of the Green's function
makes the implicit assumption of periodic extension.
{\nolinenumbers 
\begin{assumption}{Periodicity} \label{periodic_assumption}
 \begin{enumerate}
     \item The control problem is defined on the finite domain $\Omega$.
     \item The solution $\plainv(x_1, x_2, \tau)$ is extended periodically
                  \begin{eqnarray}
                    \plainv(x_1 \pm P_1, x_2 , \tau) = \plainv(x_1 , x_2 \pm P_2, \tau)
                     & = &\plainv(x_1, x_2, \tau)
                          ~. \label{periodic_1}
                  \end{eqnarray}
    \item The jump size density functions $f^s(y_1), f^b(y_2)$ are defined on
                 $y_1 \in [(x_1)_{\min}, (x_1)_{\max}]$ and $ y_2 \in [(x_2)_{\min}, (x_2)_{\max}]$,
                 with periodic extension
                    \begin{eqnarray}
                    f^s(y_1 \pm P_1) & = & f^s( y_1) ~~\mbox{ and } 
                   ~~f^b( y_2 \pm P_2)  =  f^b( y_2) ~.   \label{periodic_2}
                  \end{eqnarray}
   \item Periodic boundary conditions (\ref{periodic_1})-\ref{periodic_2}) are specified
               for the PIDE problem (\ref{expanded_log}).
   \item From equation (\ref{series_1}) we can also see that
         \begin{eqnarray*}
                    g(x_1 \pm P_1, x_2 , \tau) = g(x_1 , x_2 \pm P_2, \tau)
                     & = & g(x_1, x_2, \tau).
                  \end{eqnarray*}
  \end{enumerate}
\end{assumption}
}  
It is more rigorous to make  Assumption \ref{periodic_assumption}, 
and then derive the Green's function.  It is straightforward to verify
that this yields the Fourier series representation (\ref{series_1}), with
$G(\omega_1^k, \omega_2^j)$ given by equations (\ref{Green_ft_1})-(\ref{Green_ft_2}).
In particular, we have that \citep{menaldi_1992}
\begin{eqnarray*}
    g(x_1, x_2, \Delta \tau) & \geq & 0 \nonumber \\
    \int_{\Omega} g(x_1, x_2, \Delta \tau) dx_1 ~ dx_2 & = & 1
         ~.
\end{eqnarray*}

\subsection{Discretization}
We discretize equation (\ref{green_local}) on a grid $\{ (x_1, x_2)_{j,k} \}, \{ (\zz, \zzz)_{j,k} \}$, by setting
\begin{eqnarray*}
   x_1^j = \hat{x}_1^0 + j \Delta x_1 & ; &  x_2^k = \hat{x}_2^0 + k \Delta x_2~;~
              j= -\frac{N_1}{2}, \ldots \frac{N_1}{2} -1 ~;~  k= -\frac{N_2}{2}, \ldots \frac{N_2}{2} -1 \nonumber \\
     & & \Delta x_1 = \frac{P_1}{N_1} ~;~ \Delta x_2 = \frac{P_2}{N_2} \nonumber \\
     & & P_1 = (x_1)_{\max} - (x_1)_{\min} ~;~P_2 = (x_2)_{\max} - (x_2)_{\min} \nonumber \\
     & &  \plainv_{j,k}(\tau)  \equiv \plainv(x_1^j, x_2^k, \tau) .
\end{eqnarray*}
Define linear basis functions as
\begin{eqnarray*}
      \phi_1^j(x_1) & = & \begin{cases}
                         \frac{x_1 -x_1^{j-1} }{ \Delta x_1} & x_1^{j-1}  \leq x_1 \leq {x_1^j} \\
                         \frac{x_1^{j+1}  - x_1}{\Delta x_1} & x_1^j \leq x_1 \leq x_1^{j+1}  \\
                         0 & {\mbox{ otherwise}}~~~~~.
                      \end{cases} 
\end{eqnarray*}
with a similar definition for $\phi_2^k(x_2)$.
We can then represent the solution $\plainv(x_1, x_2, \tau)$ as a linear combination of the basis functions
\begin{eqnarray}
   \plainv(x_1, x_2, \tau) & \simeq & \sum_{j=-N_1/2~}^{N_1/2-1} \sum_{k=-N_2/2}^{N_2/2 -1}
                           \phi_1^j(x_1) \phi_2^k(x_2) \plainv_{j,k}(\tau)
              ~.
       \label{v_basis}
\end{eqnarray}
Substituting equation (\ref{v_basis}) into equation (\ref{green_local}) then gives
{\small 
\begin{eqnarray}
  & & \plainv_{\ell,m}(\tau + \Delta \tau) \nonumber \\
  &= &     \sum_{j=-N_1/2 ~}^{N_1/2-1} \sum_{k=-N_2/2}^{N_2/2 -1} \plainv_{j,k}(\tau)
                     \int_{(x_1)_{\min}}^{(x_1)_{\max}}  \int_{(x_2)_{\min}}^{(x_2)_{\max}} \phi_1^j(x_1)
                                ~\phi_2^k(x_2) ~g( x_1^{\ell} - x_1, x_2^{m} -x_2, \Delta \tau) ~dx_1~dx_2~.
   \nonumber \\
\label{convolve_a}
\end{eqnarray}
} 
For a given pair $(j, k)$ the double integrals then simplify to 
\begin{eqnarray*}
    \int_{(x_1)_{\min}}^{(x_1)_{\max}} &&  \int_{(x_2)_{\min}}^{(x_2)_{\max}} \phi_1^j(x_1)
                             ~\phi_2^k(x_2) ~g( x_1^{\ell} - x_1, x_2^{m} -x_2, \Delta \tau) ~dx_1~dx_2 \nonumber \\
   & & = \int_{x_1^{j} -\Delta x_1}^{x_1^{j} +\Delta x_1}
                                            \int_{x_2^{k} -\Delta x_2}^{x_2^{k} +\Delta x_2}
                           \phi_1^j(x_1) ~\phi_2^k(x_2) 
                         ~g( x_1^{\ell} - x_1, x_2^{m} -x_2, \Delta \tau) ~dx_1~dx_2 \nonumber \\
    & & = \int_{x_1^{\ell} - x_1^j -\Delta x_1}^{x_1^{\ell} - x_1^{j} +\Delta x_1}
                                            \int_{x_2^m - x_2^{k} -\Delta x_2}^{x_2^m -x_2^{k} +\Delta x_2}
                                     \phi_1^j(x_1^{\ell} -x_1) ~\phi_2^k(x_2^m -x_2) ~g( x_1, x_2, \Delta \tau) ~dx_1~dx_2
                                 \nonumber \\
   & & = \int_{x_1^{\ell} - x_1^j -\Delta x_1}^{x_1^{\ell} - x_1^{j} +\Delta x_1}
                              \int_{x_2^m - x_2^{k} -\Delta x_2}^{x_2^m -x_2^{k} +\Delta x_2}
                                 \phi_1^{ \ell -j}(x_1) \phi_2^{m-k}(x_2) ~g( x_1, x_2, \Delta \tau) ~dx_1~dx_2 ~.
\end{eqnarray*}
Here the last line follows from the property of linear basis functions, $$\phi_1^j(x_1^{\ell} -x_1) = \phi_1^{ \ell -j}(x_1), \quad
\phi_2^k(x_2^k -x_2) = \phi_2^{m-k}(x_2).$$
Defining
\begin{eqnarray*}
  & & \tilde{g}_{\ell - j, m-k}( \Delta \tau)  \nonumber \\
    & & \equiv \frac{1}{\Delta x_1 \Delta x_2} 
        \int_{x_1^{\ell} - x_1^j -\Delta x_1}^{x_1^{\ell} - x_1^{j} +\Delta x_1}
                              \int_{x_2^m - x_2^{k} -\Delta x_2}^{x_2^m -x_2^{k} +\Delta x_2}
                                 \phi_1^{ \ell -j}(x_1) \phi_2^{m-k}(x_2) ~g( x_1, x_2, \Delta \tau) ~dx_1~dx_2, 
\end{eqnarray*}
then implies  that for each $~\ell = -N_1/2,\ldots, N_1/2 -1 ,~
                           m = -N_2/2,\ldots, N_2/2-1,$  the convolution (\ref{convolve_a}) can be written as
\begin{eqnarray}
    \plainv_{\ell,m}(\tau + \Delta \tau) 
   & = & \sum_{j=-N_1/2~}^{N_1/2-1} \sum_{k=-N_2/2}^{N_2/2 -1} \plainv_{j,k}(\tau)
                 ~ \tilde{g}_{\ell -j, m-k}( \tau)  
                  ~ \Delta x_1 ~ \Delta x_2.   \label{green+project_1}
\end{eqnarray}
Furthermore using the Fourier series form for $g(\cdot)$ from equation (\ref{series_1}) we have
{\small 
\begin{eqnarray}
    & &\tilde{g}_{\ell -j ,m -k}(\Delta \tau) \nonumber \\
    & = & \frac{1}{ \Delta x_1 ~\Delta x_2} \int_{x_1^{^\ell} -x_1^j  -\Delta x_1}^{x_1^{\ell} -x_1^j +\Delta x_1}
                                            \int_{x_2^{m} -x_2^k -\Delta x_2}^{x_2^{m} -x_2^k +\Delta x_2}
                                          \phi_1^{\ell-j}(x_1) ~\phi_2^{m-k}(x_2) ~  g( x_1, x_2, \tau) ~dx_1~dx_2 
      \nonumber \\
    & = & \frac{1}{P_1 P_2} 
       \sum_{u=-\infty}^{\infty}  \sum_{p=-\infty}^{\infty}
                          \frac{ \sin^2 \pi \omega_1^u \Delta x_1}{(\pi \omega_1^u \Delta x_1)^2} ~
                          \frac{ \sin^2 \pi \omega_2^p \Delta x_2}{(\pi \omega_2^p \Delta x_2)^2} ~
                                        G(\omega_1^u, \omega_2^p, \Delta \tau) 
                                         ~e^{ 2 \pi i \omega_1^u (x_1^{\ell} - x_1^j)} 
                                                           ~e^{  2 \pi i \omega_2^p (x_2^m -x_2^k) } 
                   \nonumber \\
    & = &  \frac{1}{P_1 P_2}  \sum_{u=-\infty}^{\infty}  \sum_{p=-\infty}^{\infty}
                          \frac{ \sin^2 \pi \omega_1^u \Delta x_1}{(\pi \omega_1^u \Delta x_1)^2} ~
                          \frac{ \sin^2 \pi \omega_2^p \Delta x_2}{(\pi \omega_2^p \Delta x_2)^2} ~
                                        G(\omega_1^u, \omega_2^p, \Delta \tau) 
                                         ~e^{ 2 \pi i u (\ell - j)/N_1} 
                                                           ~e^{  2 \pi i p (m -k)/N_2 }.
          \nonumber \\
     \label{smooth_g_1}
\end{eqnarray}
}
\noindent
Here for the inside integrals in (\ref{smooth_g_1})  we have used the following formula for linear basis functions
$$
\frac{1}{\Delta x_1}  \int_{x_1^j-\Delta x_1}^{x_1^j+\Delta x_1} e^{2 \pi  i \omega x_1}  \phi_1^j(x) dx_1 
     = e^{2 \pi i \omega x_1^j} ~\frac{ \sin^2 \pi \omega \Delta x_1}{(\pi \omega \Delta x_1)^2}
$$
with a similar result for the integral involving $\phi_2$.
We note that on a uniform grid $x_1^j = \hat{x}_1^0 + j \Delta x_1 $ and
$x_2^k = \hat{x}_2^0 + k \Delta x_2$, with the result that
$x_1^{\ell-j} = x_1^\ell - x_1^j = (\ell -j) \Delta x_1$,
and $x_2^{m-k} = x_2^m - x_2^k = (m-k) \Delta x_2$.
Consequently,  equation (\ref{smooth_g_1}) is independent of $\hat{x}_1^0, ~ \hat{x}_2^0$.\footnote{For the
case of $u=0, p = 0$ in equation (\ref{smooth_g_1}) we take the limiting value
as $ \omega_1^0 \rightarrow 0, \omega_2^0 \rightarrow 0$.}

For future reference, we note the DFT pair for any grid function\footnote{Note that the DFT here
is independent of any physical coordinates.}
$h(x_1^j, x_2^k) = h_{jk}$ is given by
\begin{eqnarray}
    H(\omega_1^\ell, \omega_2^m, \Delta \tau) 
       & = & \frac{P_1}{N_1} \frac{P_2}{N_2}
                  \sum_{j,k}  e^{- 2 \pi i  \ell j/N_1 } 
                              e^{- 2 \pi i m k/N_2} h_{j k}
                     \label{DFT_1} \\
    h_{p n}(\Delta \tau) & = &
               \frac{1}{P_1 P_2} 
                  \sum_{\ell,m}
                         e^{2 \pi i  \ell p/N_1  } e^{ 2 \pi i m n/N_2}  
                            H(\omega_1^\ell, \omega_2^m, \Delta \tau)
\label{IDFT_1}
\end{eqnarray}
which can be verified by substituting equation (\ref{DFT_1}) into
equation (\ref{IDFT_1})
\begin{eqnarray*}
    h_{p n}(\Delta \tau) & = & 
                  \frac{1}{N_1 N_2}
                      \sum_{j,k}  
                          h_{j k} 
                           \sum_{\ell,m}
                            e^{ 2 \pi i  \ell (p- j)/N_1  } 
                            e^{ 2 \pi i m (n-k)/N_2  }
               =  h_{p n}~,
\end{eqnarray*}
since
\begin{eqnarray}
    \sum_{\ell,m} e^{ 2 \pi i  \ell (p- j)/N_1  } 
                            e^{ 2 \pi i m (n-k)/N_2  }
              & = & \begin{cases}  
                           N_1 N_2 & p = j {\text{ and }} n=k \\
                           0 & {\text{ otherwise}} \\
                    \end{cases} ~.
             \label{orthog_1}
\end{eqnarray}
We denote this pair by $H = \text{DFT}(h), h = \text{IDFT}(H)$.

\subsubsection{Discrete Index Domain}\label{index_section}

Although we have defined the solution on the physical domain $\Omega$,
note that from equation (\ref{smooth_g_1}) $\tilde{g}_{\ell -j, m-n}$
is independent of $(\hat{x}_1^0, \hat{x}_2^0)$.  In addition, the DFT pair
is defined independently of the original grid.
Consequently, in order to avoid confusion concerning the various physical and grid domains,
we will refer only to discrete index domains.  Define
\begin{eqnarray*}
   \mathcal{D} = \left\{ (\ell,j) \big\vert  -\frac{N_1}{2} \leq \ell \leq  \frac{N_1}{2}-1 ~,~ -\frac{N_2}{2} \leq j \leq \frac{N_2}{2}-1 
                 \right\}.
\end{eqnarray*}
Then $\tilde{g}_{\ell,j}, \plainv_{\ell,j}$ are both in the  index domain $\mathcal{D}$.

\subsection{Efficient Computation of the Discrete Convolution}
The discrete convolution (\ref{green+project_1}) is performed at each rebalancing
date, hence an efficient evaluation is desirable.
The convolution can be written as
\begin{eqnarray}
   \plainv_{\ell,m}(\tau + \Delta \tau) 
    &= & \sum_{j=-N_1/2}^{N_1/2-1} \sum_{~~k=-N_2/2}^{N_2/2 -1} \plainv_{j,k}(\tau)
                 \tilde{g}_{\ell - j, m - k} (\Delta \tau)
                  ~ \Delta x_1 ~ \Delta x_2 \nonumber \\
   & = & \frac{P_1}{N_1} \frac{P_2}{N_2}\sum_{j,k} \tilde{g}_{\ell - j, m - k} (\Delta \tau) \plainv_{j,k}(\tau)
                  ~~. \label{conv_final_1}
\end{eqnarray}
Using equations (\ref{DFT_1}) and (\ref{IDFT_1}) to write $\tilde{g}(\Delta \tau) 
= \text{IDFT}( \tilde{G} (\Delta \tau)), {v} = \text{IDFT}( {V} )$ 
in equation (\ref{conv_final_1}) gives
\begin{eqnarray*}
& & \plainv_{\ell,m}(\tau + \Delta \tau)  \nonumber \\
  & & =  \frac{1}{N_1 P_1} \frac{1}{ N_2 P_2} 
    \sum_{j,k} \sum_{p,u} \tilde{G}_{pu}(\Delta \tau) e^{2 \pi i p (\ell -j)/N_1}
                                                      e^{2 \pi i u (m-k) /N_2}
                         \sum_{q,r} V_{q,r} e^{2 \pi i q j/N_1} e^{2 \pi i r k /N_2}
                    \nonumber \\
 &  & = \frac{1}{N_1 P_1} \frac{1}{N_2 P_2}
               \sum_{p,u} \sum_{q,r} \tilde{G}_{pu}(\Delta \tau) V_{qr}
                        e^{2\pi i p \ell/N_1} e^{2 \pi i  u m /N_2}
                      \sum_{j,k} e^{2 \pi i j (q -p)/N_1}
                                 e^{ 2 \pi i k (r-u)/N_2} \nonumber \\
 &  & =   \frac{1}{ P_1} \frac{1}{ P_2}
                      \sum_{p,u}  \tilde{G}_{pu}(\Delta \tau) {V}_{pu} e^{2 \pi i p \ell/N_1} 
                                                                e^{ 2 \pi i u m/N_2}
\end{eqnarray*}
where we have used equation (\ref{orthog_1}) in the last step. More compactly we can write 
\begin{eqnarray}
\tilde{v}(\tau + \Delta \tau) & = & \text{IDFT} \bigl(~ \tilde{G}(\Delta \tau) \circ \text{DFT} 
                \left({v}(\tau) \right) ~\bigr) ~,
   \label{time_advance_compact}
\end{eqnarray}
where $ y \circ z$ is the Hadamard product of vectors $y, z$.

\subsection{$\delta$-Monotonicity}
Using the definition for $\tilde{g}$ from equation (\ref{smooth_g_1}), then
this represents an exact integration over linear basis functions of
the exact Green's function (with periodic boundary conditions)
on $\Omega$.  Hence 
\begin{eqnarray}
    \tilde{g}_{j,k} &\geq & 0 ~;~ \forall (j,k) \in \mathcal{D}
  ~.
              \label{mono_2}
\end{eqnarray}
This gives rise to an important {\em monotonicity} property \citep{forsythlabahn2017}.
Suppose we have two discrete solutions $\plainv_{\ell,m}(\tau) , u_{\ell,m}(\tau)$
such that
\begin{eqnarray}
   \plainv_{\ell,m}(\tau)  > u_{\ell,m}(\tau)~;~ \forall (\ell,m) \in \mathcal{D}~.
     \label{order_1}
\end{eqnarray}
Then, using the time advance algorithm (\ref{green+project_1}), it follows
that the {\em discrete comparison principle} holds:
\begin{eqnarray*}
   \plainv_{\ell,m}(\tau + \Delta \tau)  > u_{\ell,m}(\tau + \Delta \tau)~;~ 
          \forall (\ell,m) \in \mathcal{D}.
\end{eqnarray*}
The positivity condition (\ref{mono_2}) ensures that the order property (\ref{order_1}) is
preserved by the discretized algorithm.  This is, of course, a property of the
exact solution.  This property is important
since we compare the values of the discrete solution 
in order to determine the optimal control.  See \citet{forsythlabahn2017} for more discussion of this.

However, note that the dimension of $\tilde{g}$ is $(N_1, N_2)$, but we need to sum
an infinite series to determine the discrete values of $\tilde{g}_{m,u}$ from
equation (\ref{green+project_1}).  In practice, we truncate the series in  
equation (\ref{smooth_g_1}), that is, 
{\small
\begin{eqnarray}
& &\tilde{g}_{\ell -j ,m -k}(\Delta \tau) \nonumber \\
& \simeq &  \frac{1}{P_1 P_2}  \sum_{u=-N_1^g/2~~}^{N_1^g/2 -1 }  \sum_{p=-N_2^g/2}^{N_2^g/2 -1}
                          \left( \frac{ \sin^2 \pi \omega_1^u \Delta x_1}{(\pi \omega_1^u \Delta x_1)^2} \right)
                          \left( \frac{ \sin^2 \pi \omega_2^p \Delta x_2}{(\pi \omega_2^p \Delta x_2)^2} \right)
                                        G(\omega_1^u, \omega_2^p, \Delta \tau) 
                                         ~e^{ 2 \pi i \frac{u (\ell - j)}{N_1}} 
                                                           ~e^{  2 \pi i  \frac{p(m -k)}{N_2 }} 
          \nonumber \\
    \label{smooth_g_trunc}
\end{eqnarray}
} 
\noindent
where the grid sizes $(N_1^g, N_2^g)$ can be chosen independently from $(N_1, N_2)$.  However, truncating
the Fourier series means that the  positivity condition (\ref{mono_2}) may not hold any longer.
In \citet{forsythlabahn2017}, it is suggested that $(N_1^g, N_2^g)$ be selected so that
\begin{eqnarray}
    \sum_{m = -N_1^g/2}^{N_1^g/2+1} \sum_{n=-N_2^g/2}^{N_2^g/2-1} 
             \Delta x_1 \Delta x_2 ~| \min( \tilde{g}_{m,n},0)| & < & \delta \frac{\Delta \tau}{T}
    ~. \label{mono_cond_1}
\end{eqnarray}
This ensures that the cumulative effect (after  $(T/(\Delta \tau)$ steps) 
of non-positivity means that the discrete comparison
principle holds to $O(\delta)$.  Note that if the timestep $(\Delta \tau)$ is
constant (which is usually the case), then equation (\ref{smooth_g_trunc}) needs to be
computed only once. Hence it is inexpensive to select sizes $(N_1^g, N_2^g)$ which
satisfy condition (\ref{mono_cond_1}).  Equation (\ref{smooth_g_trunc})  can also
be computed efficiently using FFTs, assuming $N_1^g/N_1, N_2^g/N_2$ are powers of two
(see \citet{forsythlabahn2017}.

\section{Periodicity and the problem of Wrap-around error}\label{section_periodicity}
Localization of the Green's function effectively implies a periodic extension of 
the Green's function and the solution (see Assumption \ref{periodic_assumption}).
In option pricing applications, this {\em wrap-around} error does not
cause difficulty.  However, in control applications, impulse controls (such
as rebalancing) may require extensive use of solution values near the edges
of the grid.  For example, derisking by investing in an all
bond portfolio results in an impulse control which drives $s \rightarrow 0$ instantaneously.
This can cause significant wrap-around pollution
(see \citep{Lippa2013,Ruijter_2013,Ignatieva_2018}).

Given the localized problem defined on $\Omega = [(x_1)_{\min}, (x_1)_{\max}] \times [(x_2)_{\min}, (x_2)_{\max}]$, with widths
$$
P_1 = (x_1)_{\max} - (x_1)_{\min}  ~~ \mbox{ and } ~~ P_2 = (x_2)_{\max} - (x_2)_{\min} 
$$
we construct an auxiliary grid with $N_1^{\dagger} = 2N_1$ nodes in the $x_1$ direction,
and $N_2^{\dagger} = 2 N_2$ nodes in the $x_2$ direction, on the domain $\Omega^{\dagger}
= [(x_1)_{\min}^{\dagger}, (x_1)_{\max}^{\dagger}] \times [(x_2)_{\min}^{\dagger}, (x_2)_{\max}^{\dagger}]$,
with
\begin{eqnarray*}
  (x_1)_{\min}^{\dagger} = (x_1)_{\min} - \frac{ P_1} {2} 
  & ;~~~ & (x_1)_{\max}^{\dagger} = (x_1)_{\max} + \frac{ P_1 }{2}   \nonumber \\   
(x_2)_{\min}^{\dagger} = (x_2)_{\min} - \frac{ P_2}{2}
  & ; ~~~& (x_2)_{\max}^{\dagger} = (x_2)_{\max} + \frac{ P_2 }{2} \nonumber \\ \nonumber \\
 %
  P_1^\dagger  =  (x_1)_{\max}^{\dagger} - (x_1)_{\min}^{\dagger} =  2  P_1 &;~~~ & 
   P_2^\dagger  = (x_2)_{\max}^{\dagger} - (x_2)_{\min}^{\dagger} = 2 P_2 
\end{eqnarray*}

Although we have increased the size of the physical domain $\Omega$, we remind the reader
of the discussion in Section \ref{index_section}.  It will be more convenient in
the following to refer to the extended index domain
\begin{eqnarray*}
  \mathcal{D}^{\dagger} = \left\{ (j, k)~ \big\vert  ~~
              -N_1^{\dagger}/2 \leq j \leq  N_1^{\dagger}/2-1 ~,~ -N^{\dagger}_2/2 
               \leq k \leq N_2^{\dagger}/2-1 
                 \right\}~.  
\end{eqnarray*}

Denote the projected Green's function for grid indexes in $\mathcal{D}^{\dagger}$
by $\tilde{g}^{\dagger}$.  Note from equation (\ref{smooth_g_1}) that $\tilde{g}^{\dagger}$
does not depend on the actual physical domain, but only on $(\Delta x_1, \Delta x_2)$.
We construct and store the DFT of the projection of the Green's function $\tilde{G}^{\dagger}(\Delta \tau)
= \text{DFT}( \tilde{g}^{\dagger} (\Delta \tau) )$ 
on this auxiliary index grid $\mathcal{D}^{\dagger}$.   

Before applying the time advance algorithm (\ref{time_advance_compact})
we form the padded array ${v}^{\dagger}(\tau)$:
{\small 
\begin{eqnarray}
   \plainv^{\dagger}(x_1^j, x_2^k, \tau) =
      \begin{cases}
            \plainv(x_1^j, x_2^k, \tau)~ , &  (j, k) \in \mathcal{D} \\
       \plainv(x_1^{-N_1/2}, x_2^k, \tau) ~, &  j \in [-N_1^{\dagger}/2, -N_1/2-1] ,
                                         ~~ k \in [-N_2/2, N_2/2-1] \\
       \plainv(x_1^{-N_1/2}, x_2^{-N_2/2} , \tau) ~, & j \in [-N_1^{\dagger}/2, -N_1/2-1] ,
                                                  ~~k \in [-N_2^{\dagger}/2,  -N_2/2 -1] \\
       \plainv(x_1^j, x_2^{-N_2/2}, \tau) ~, &  j \in [-N_1/2, N_1/2-1] ,
                                                   ~~ k \in [-N_2^{\dagger}/2, -N_2/2-1]    \\
       A(x_1^j, x_2^k, \tau) ~,   & j > N_1/2-1~~ {\text{ or }}~~ k > N_2/2-1 \\
    \end{cases} \nonumber \\
           \label{padded_v} 
\end{eqnarray}}
where $ A(x_1^j, x_2^k, \tau)$ is an asymptotic form of the solution which
we assume to be available from financial reasoning. 
On the auxiliary grid,
the points where $x_1 < (x_1)_{min}$ or $x_2 < (x_2)_{\min}$ correspond
to the points where $s \rightarrow 0$ or $b \rightarrow 0$, with very small
grid spacing.  Hence extending the solution by  constant values to the left
and bottom is expected to generate a very small error. 

We then modify the time advance algorithm (\ref{time_advance_compact})
as shown in Algorithm \ref{advance_time_padded}.
Note that the step (\ref{discard_step}) discards the computed solution in the padded areas 
$ \{ \plainv^{\dagger} ( \tau + \Delta \tau)_{j, k} \big\vert (j, k)  \in 
\mathcal{D}^{\dagger} - \mathcal{D} \}$,
since these points may be contaminated by wrap-around errors.

\begin{algorithm}[!h]
\caption{
\label{advance_time_padded}
Advance time $\plainv(\tau) \rightarrow \plainv(\tau + \Delta \tau)$.
}
 \begin{algorithmic}[1]
    \REQUIRE $ \plainv(\tau)~;~ \widetilde{G}^{\dagger}$
    \STATE Form $\plainv^{\dagger} (\tau)$ using (\ref{padded_v})
    \STATE $\plainv^{\dagger} ( \tau + \Delta \tau)
              = \text{IDFT} \bigl(~ \tilde{G}^{\dagger} \circ \text{DFT} \left( \plainv^{\dagger} ( \tau ) \right) ~\bigr) $
          ~~~~~~~~~~~~~~~~~~~~~~\COMMENT {IDFT (Hadamard product)}
    \STATE $\plainv_{j, k}( \tau + \Delta \tau) = \plainv^{\dagger}_{j, k}( \tau + \Delta \tau) 
               ~,~ (j, k) \in \mathcal{D}$  ~~~~~~~~~~~~~~~~~\COMMENT {Discard values in $\mathcal{D}^{\dagger} - \mathcal{D}$.}
           \label{discard_step}
 \end{algorithmic}
\end{algorithm}

\subsection{Error Due to Wrap-around}
Let $\tau^n = n \Delta \tau$ and  $\mathcal{N} \Delta \tau = T$, with $\Delta \tau$ fixed.  We can then write the 
discrete convolution (\ref{conv_final_1}) in the physical domain,  compactly as
\begin{eqnarray}
  \plainv_{\ell,m}^{n+1} & = & \Delta x_1 \Delta x_2
      \sum_{ (j,k) \in \mathcal{D}^{\dagger}}
           \tilde{g}_{\ell -j, m - k}^{\dagger} \plainv_{j,k}^{\dagger}(\tau^n) ~,~
                    (\ell,m ) \in \mathcal{D} ~.
    \label{conv_compact}
\end{eqnarray}
Note that the righthand side of equation (\ref{conv_compact})  uses the padded values for $\plainv_{j,k}^{\dagger}$,
but generates the unpadded $\plainv_{\ell,m}^{n+1} $ on the lefthand side.  In addition,
$\tilde{g}^{\dagger}$ is periodic with period $P_1^{\dagger}$ in the $x_1$ direction,
and period $P_2^{\dagger}$ in the $x_2$ direction.

The use of Algorithm \ref{advance_time_padded} means that the periodic extension 
of $\tilde{g}$ is used in equation (\ref{conv_compact}) for any
terms of the sum such that 
\begin{eqnarray*}
     ( \ell-j, m-k) \notin \mathcal{D}^{\dagger} ~.
\end{eqnarray*}
This periodic extension {\em wraps around} the solution domain.  For example, points with
indexes $\ell-j  < - N_1^{\dagger}/2$ reference points with index $\ell-j + N_1^{\dagger}$,
which is at the opposite side of the grid, and potentially generates
wrap-around error. 
See also Appendix \ref{wrap_appendix} for further discussion of this.

\begin{definition}[Wraparound error.] \label{def_wrap}
Assume $\tilde{g}_{\ell , m}^{\dagger}$ for $(\ell, m) \in \mathcal{D}^{\dagger}$
is periodic
\begin{eqnarray}
    \tilde{g}^\dagger (\ell  \pm N_1^{\dagger}, m)
            =
    \tilde{g}^\dagger( \ell , m \pm N_2^{\dagger} )
            &=& \tilde{g}^\dagger(\ell, m )  
     ~. \label{index_periodic}
\end{eqnarray}
Suppose $\plainv_{j,k}^{\dagger}(\tau^n)$ is determined by boundary
data for $(x_1^j , x_2^m) \in ( \Omega^{\dagger} - \Omega)$, which
we assume to be exact,\footnote{In practice, we use the method
in equation (\ref{padded_v}), which we expect will generate a small error.}
with $(N_1^{\dagger}, N_2^{\dagger}) =  (2 N_1, 2 N_2)$.
Then, the wrap-around error for convolution (\ref{conv_compact}) evaluated using
DFTs, at timestep $n \Delta \tau$, denoted by
$\epsilon^n_{\text{wrap}}$ is
\begin{eqnarray}
    \varepsilon^n_{\text{wrap}}
   & = & \Delta x_1 \Delta x_2 \max_{ (\ell, m) \in \mathcal{D} }
           \sum_{ ( j ,k) \in \mathcal{D}^{\dagger} }
            \big\vert \tilde{g}_{\ell -j, m - k}^{\dagger} 
                      \plainv_{j,k}^{\dagger}(\tau^n) \big\vert
               ~{\bf{1}}_{ (\ell-j, m-k) \notin \mathcal{D}^{\dagger} }
               ~ .\label{wrap_def_a}
 \end{eqnarray}
\end{definition}

We now state a theorem on the effectiveness of our padding method.
See Appendix \ref{wrap_appendix}  for a proof.

\begin{theorem}[Wraparound error]\label{wrap_theorem}
Suppose $\tilde{g}_{\ell , m}^{\dagger}$ for $(\ell, m) \in \mathcal{D}^{\dagger}$
is periodic as defined in equation (\ref{index_periodic}),
and that $\plainv_{\ell,m}^{\dagger}(\tau^n)$ be determined by
boundary data for $(\ell, m) \in (\mathcal{D}^{\dagger} - \mathcal{D} )$,
with $(N_1^{\dagger}, N_2^{\dagger}) = 2 (N_1, N_2)$.
Assume further that there exists a constant $C$ such that
\begin{eqnarray*}
       \big\vert \plainv_{\ell, m}^{\dagger}(\tau^n) \big \vert \leq C ~,~~~~ (\ell, m) \in \mathcal{D}^{\dagger}
                   ~,~~~ \forall ~ n ~\mbox{ such that } ~ (n \Delta \tau)  \leq T
         ~,
\end{eqnarray*}
and that we choose $(N_1, N_2)$ sufficiently large so  that
\begin{eqnarray}
      \Delta x_1 \Delta x_2 \sum_{( j ,k) \in \mathcal{D}^{\dagger} - \mathcal{D} } 
                  | \tilde{g}_{j,k}^{\dagger} | 
   &~~ \leq ~~ &
      \varepsilon_b ~ \Delta \tau~.
     \label{wrap_error_local_1a}
\end{eqnarray}
Then
\begin{eqnarray}
    \varepsilon^n_{\text{wrap}} & \leq& C \varepsilon_b ~ \Delta \tau ~,
                \label{wrap_error_local_1}
\end{eqnarray}
and the wrap-around error after $\mathcal{N}$ steps is bounded by
\begin{eqnarray*}
  \varepsilon^{\mathcal{N}} & \leq & T \epsilon_b C  ~~ \mbox{ where } 
                 T = \mathcal{N} \Delta \tau ~. 
\end{eqnarray*}
\end{theorem}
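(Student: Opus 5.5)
The bound follows from the definition (\ref{wrap_def_a}) by pulling out the uniform bound on the padded data and then reindexing the offending kernel entries. First I bound $|\plainv_{j,k}^{\dagger}(\tau^n)|\le C$ inside the sum. Then
\begin{eqnarray*}
\varepsilon^n_{\text{wrap}} \le C\,\Delta x_1\Delta x_2 \max_{(\ell,m)\in\mathcal{D}} \sum_{(j,k)\in\mathcal{D}^{\dagger}} \big\vert \tilde{g}^{\dagger}_{\ell-j,m-k}\big\vert\, {\bf 1}_{(\ell-j,m-k)\notin\mathcal{D}^{\dagger}} .
\end{eqnarray*}
It remains to show that, for each fixed $(\ell,m)\in\mathcal{D}$, the kernel entries reached by this sum form a subset of $\{\tilde{g}^{\dagger}_{p,q} : (p,q)\in\mathcal{D}^{\dagger}-\mathcal{D}\}$, each hit at most once. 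Once that is established, hypothesis (\ref{wrap_error_local_1a}) gives (\ref{wrap_error_local_1}) immediately.

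The combinatorial step is carried out coordinate by coordinate. Fix $\ell\in[-N_1/2,N_1/2-1]$ and let $j$ range over $[-N_1,N_1-1]$. The difference $\ell-j$ then ranges over $[\ell-N_1+1,\ \ell+N_1]$, a window of $2N_1=N_1^{\dagger}$ consecutive integers. The differences are therefore distinct modulo $N_1^{\dagger}$, so each periodic kernel index is used at most once.

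An index falls outside $[-N_1,N_1-1]$ only when $\ell-j\ge N_1$ or $\ell-j\le -N_1-1$, and the periodicity (\ref{index_periodic}) maps it to $\ell-j\mp N_1^{\dagger}$.
\begin{itemize}
\item In the first case $\ell-j\in[N_1,\ \ell+N_1]$, so the reduced index is $\ell-j-2N_1\in[-N_1,\ \ell-N_1]\subset[-N_1,-N_1/2-1]$.
\item In the second case $\ell-j\in[\ell-N_1+1,\,-N_1-1]$, so the reduced index is $\ell-j+2N_1\in[\ell+N_1+1,\ N_1-1]\subset[N_1/2,N_1-1]$.
\end{itemize}
In both cases the reduced first index lies in $[-N_1^{\dagger}/2,N_1^{\dagger}/2-1]$ but outside $[-N_1/2,N_1/2-1]$. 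This is exactly where the choice $N^{\dagger}=2N$ and $\ell\in\mathcal{D}$ (not $\mathcal{D}^{\dagger}$) enters.

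The same argument applies in the $x_2$ direction. If either coordinate wraps, the reduced index pair is in $\mathcal{D}^{\dagger}$ and its wrapped coordinate is outside the $\mathcal{D}$ range, so the pair lies in $\mathcal{D}^{\dagger}-\mathcal{D}$. Injectivity holds jointly, since the map $(j,k)\mapsto(\ell-j,m-k)$ mod $(N_1^{\dagger},N_2^{\dagger})$ is a bijection of $\mathcal{D}^{\dagger}$. Hence the inner sum is at most $\sum_{(p,q)\in\mathcal{D}^{\dagger}-\mathcal{D}}|\tilde{g}^{\dagger}_{p,q}|$, uniformly in $(\ell,m)$.

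This index bookkeeping is the main point needing care. It is the only place the factor of two in the padding matters, and the off-by-one endpoints of the asymmetric index ranges must be checked to confirm that no wrapped index lands back inside $\mathcal{D}$.

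For the accumulated bound I take $\varepsilon^{\mathcal{N}}$ to be the sum of the per-step wrap-around contributions, $\sum_{n<\mathcal{N}}\varepsilon^n_{\text{wrap}}\le \mathcal{N} C\varepsilon_b\Delta\tau = T\varepsilon_b C$. If instead one wants to track propagation of earlier wrap errors through later steps, one would use that the scheme is (approximately) nonexpansive in the max norm, since $\tilde g\ge0$ and integrates to one up to the $\delta$-monotonicity defect (\ref{mono_cond_1}). Under that property earlier errors are not amplified and the same linear accumulation bound holds, up to $O(\delta)$.
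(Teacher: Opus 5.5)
Your proposal is correct and follows essentially the paper's argument: you bound $|\plainv^{\dagger}_{j,k}|$ by $C$, shift each wrapped kernel index by $\pm N_i^{\dagger}$ so that it lands in $\mathcal{D}^{\dagger}-\mathcal{D}$, invoke (\ref{wrap_error_local_1a}), and sum the $\mathcal{N}$ per-step contributions. The difference is only in execution: the paper reindexes explicitly in 1D at the worst-case nodes $m=-N/2$, $m=N/2-1$ and merely sketches 2D via the regions $A_\ell, A_r, A_u, A_d$, whereas your residue-class bijection gives, directly in 2D and for every $(\ell,m)\in\mathcal{D}$, both the containment in $\mathcal{D}^{\dagger}-\mathcal{D}$ and the fact that no kernel entry is counted twice, which makes that sketch rigorous.
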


\begin{remark}[Asymptotic Form of Green's Function]
For a pure diffusion in one dimension, the Green's function for the Black-Scholes
equation behaves like
\begin{eqnarray}
   g(x - x^{\prime}, \Delta \tau) & = & O \left( e^{ - (x-x^{\prime})^2/( \sigma^2 \Delta \tau) } \right)
     \mbox{ as } 
     | x - x^{\prime} | \rightarrow \infty ~. \nonumber
\end{eqnarray}
Similar asymptotic exponential decay as $\max( |x_1|,|x_2|) \rightarrow \infty$ 
is also seen for the full 2-d Green's function \citep{menaldi_1992}.
Hence  ensuring that condition (\ref{wrap_error_local_1a}) holds is not onerous.
\end{remark}

\subsection{Numerical Method for Optimal Control}

As mentioned  in Section \ref{expanded_1}, we need  the optimal pair $(\qq, \pp)$ at
each rebalancing date. The optimal controls are  obtained by first discretizing the controls
and thereafter determining the controls by exhaustive search. Recall that the PIDE grid
spacing is $\Delta x_1, \Delta x_2$. Let  $N_{\qq}$ and $N_{\pp}$
denote the  number of points in the discretizations of $\qq$ and $\pp$, respectively.  We will also need to 
discretize the possible values of total wealth $w = s + b$, with $N_w$  denoting
the number of discrete wealth values. This approach (discretization
and  search), is then guaranteed to converge to the viscosity
solution of the optimal control problem in the limit as
\begin{eqnarray}
   \Delta x_1, \Delta x_2 \rightarrow 0 & ; & N_{\qq} ,  N_{\pp}, N_w \rightarrow \infty
    ~;~ \delta \rightarrow 0~, \label{converge_limit}
\end{eqnarray}
which is discussed in Section \ref{converge_section}.

Given an array of discrete values, that is, $\plainv_{j,k} (\tau_n^-)$, we 
 denote the linear interpolant of the grid values at an arbitrary point $(\zz, \zzz)$
by
\begin{eqnarray*}
    \plainv( \mathcal{I}( \zz, \zzz) , \tau_n^-) ~.
\end{eqnarray*}

The numerical analogue of equation (\ref{dynamic_b}) requires a temporary array, $h(w_k), k=1, \ldots, N_w$,
where $w_k = s + b$, the total wealth.
This array is determined by
\begin{eqnarray}
                \underbrace{\pp^*(w_k)}_{ k=1,\ldots,N_w} &=& \argmax_{   
                                \substack{
                                     j =1, \ldots, N_{\pp}\\
                                      \pp_j \in \mathcal{Z}_{\pp}(w_k, \tau_n^-)
                                    }
                                     }
                                     \plainv \biggl( ~
                                        \mathcal{I}\left(~ \log( w_k \pp_j)~,~ \log( w_k (1 - \pp_j) ) ~\right) 
                                           ,\tau_n^-
                                      \biggr)    
                      \nonumber \\
        \underbrace{h(w_k)}_{k=1,\ldots,N_w}
                                  & = & 
                                     \plainv
                                        \biggl( 
                                          \mathcal{I}\biggl(~ 
                                                  \log( ~w_k \pp^*(w_k)~)~,~ \log(~ w_k (1 - \pp^*(w_k)) ~)~
                                                ~\biggr), \tau_n^-  
                                         \biggr)
           ~.
           \label{optimal_h_def}
\end{eqnarray}
In order to compute (\ref{dynamic_c}), we break this down into two steps, first using
\begin{eqnarray}
   \underbrace{\qq^*( w_k)}_{k=1,\ldots,N_w}
            & = & \argmax_{
                    \substack{
                     j =1, \ldots, N_{\qq}\\
                                      \qq_j \in \mathcal{Z}_{\qq}(w_k, \tau_n)
                     }
                  }
                             \biggl\{~
                               \qq_j + h\left(~ \mathcal{I}(w_k - \qq_j) ~\right)
                             ~\biggr\}
             \label{optimal_q}
\end{eqnarray}
where $ h(\mathcal{I}(\cdot) )$ refers to one dimensional interpolation for the one dimensional array $h(w_k)$.
Then
\begin{eqnarray}
    \underbrace{\plainv(x_1^j, x_2^k, \tau_n^+)}_{\substack{
                                             j=-N_1/2,\ldots, N_1/2-1\\
                                              k=-N_2/2,\ldots,N_2/2-1\\
                                               w_{j,k} = e^{ x_1^j} + e^{x_2^k}
                                            }
                                  }
                         & = &   \qq^*\left( ~\mathcal{I}( w_{j,k}) ~\right)
                        + h\biggl( ~\mathcal{I} \left(~ 
                                         w_{j,k}  - \qq^*\left(~ \mathcal{I}( w_{j,k}) ~\right) 
                                        ~\right) ~
                            \biggr) 
            ~.
                   \label{v_plus}
\end{eqnarray}

\subsection{Convergence to the Viscosity Solution}\label{converge_section}
Convergence of a numerical method for HJB equations is ensured
provided that the complete numerical scheme is $\delta-$monotone, consistent and $\ell_{\infty}$
stable, (see \citet{barles-souganidis:1991,forsythlabahn2017}).
We should clarify that each of these properties must be verified for the complete numerical
scheme, including the time advance and the impulse control update.

Stability is easily established using the techniques in \citet{forsythlabahn2017}. The time
advance step is $\delta$-monotone by construction.  The optimization and projection steps 
in equations (\ref{optimal_h_def}-\ref{v_plus}) are monotone, since linear
interpolation is used at all off grid points.  Note that extrapolation is
never used (which ensures stability).

Proving consistency is a bit more involved.   The key is to note that consistency,
in the viscosity sense, is defined in terms of smooth, infinitely differentiable
test functions.  This means that proof of consistency involves replacing
all the functions in equations (\ref{optimal_h_def}-\ref{v_plus}) by smooth test
functions.  Discretizing the control and using linear interpolation
means that the maximum of the smooth test functions in
equations (\ref{optimal_h_def}-\ref{v_plus}) will converge  to
the global maximum in the limit. 

While the discretization of the control and use of exhaustive search
may appear to be crude, this technique (coupled with linear interpolation)
is the only method which is guaranteed to converge to the global maximum
of a (possibly) non-convex smooth test function.  This is not merely
a point of academic interest.  Initial attempts to 
use standard optimization methods sometimes resulted in convergence to
a different solution (compared to the solution obtained by
exhaustive search).  Since the viscosity solution is unique,
we can deduce that the solution obtained without using
exhaustive search was spurious.

However, numerical experiments show that the discretization of the control
can be quite coarse, and this does not appear to impede convergence.
A non-rigorous explanation of this effect is that a smooth function
has a vanishing first derivative at extreme points.


\subsection{Summary: Algorithm for Problem (\ref{expanded_1}).}
Algorithm \ref{alg_monotone_final} summarizes the final algorithm.  The initial
set-up cost is computing the Fourier weights on the extended domain $\widetilde{G}^{\dagger}$.
Each timestep consists of first determining the optimal control, then followed by
advancing to the next rebalancing time using the precomputed weights $\widetilde{G}^{\dagger}$.

\begin{algorithm}[!h]
\caption{
\label{alg_monotone_final}
Monotone Fourier method.
}
 \begin{algorithmic}[1]
    \REQUIRE Weights $\widetilde{G}^{\dagger}$  in padded Fourier domain satisfying tolerance $\delta$
               (see equations (\ref{smooth_g_trunc}-\ref{mono_cond_1}))
    \STATE Input: number of timesteps $\mathcal{N}$, initial solution $(\plainv^0)^-$
    \FOR[Timestep loop]{$n=1,\ldots, \mathcal{N}$}
    \STATE Optimal control $\plainv(\tau_{n-1}^-) \rightarrow \plainv(\tau_{n-1}^+)$ equations (\ref{optimal_h_def}-\ref{v_plus})
    \STATE Advance time $\plainv(\tau_{n-1}^+) \rightarrow \plainv(\tau_{n}^-)$ using Algorithm \ref{advance_time_padded}
   \ENDFOR \COMMENT{End timestep loop}
 \end{algorithmic}
\end{algorithm}

\begin{remark}[Use of linear interpolation]
Equations (\ref{optimal_h_def}-\ref{v_plus}) make heavy use of linear interpolation.
As discussed in \cite{forsythlabahn2017}) this is the only interpolation method (in general)
which will preserve $\delta$-monotonicity.  This means that the convergence rate cannot exceed
second order in the grid spacing.  This is in contrast to the methods in \citep{Fang2008,Fang2009,Ruijter_2013},
where, in some circumstances, high order rates of convergence can be achieved,
but at the cost of possible non-monotonicity.
\end{remark}

\begin{remark}[Insolvency Between Rebalancing Times]
For $b >0$, insolvency cannot occur between rebalancing dates.
However, for $b<0$, it is possible that a large drop in the value
of stocks could result in $s+b<0$ between rebalancing times.
Numerical tests using subtimesteps between rebalancing times
(and checking for insolvency) did not show any significant
changes to the final values at $\tau =T$.  Hence, in all our 
reported numerical
tests, we only check for insolvency at rebalancing times.
\end{remark}

\subsection{Final Algorithm: Problem EW-ES (\ref{final_step_EWES}).}
The final step for the complete solution of the EW-ES problem
is the optimization step (\ref{final_step_EWES}).  We solve Problem
EW-ES on a sequence of grids, with increasing number of grid nodes.
On the coarsest grid, we discretize $W^*$ and find the maximum
of equation (\ref{final_step_EWES}) by exhaustive search.
Note that each evaluation of the objective function in (\ref{final_step_EWES})
requires a solution of  Problem \ref{expanded_1}.
Using the coarse grid value of $W^*$ as a starting value, we use
a one-dimensional optimization algorithm to maximize the objective
function (\ref{final_step_EWES}) on each finer grid.

\section{Numerical Example and Results}\label{example_section}
We consider a 65 year old retiree who has \$1,000K (one million) in
pension savings.  The retiree needs to withdraw a minimum of
30K per year, but has no use for more than 60K per year.  All amounts
are inflation adjusted.  We assume
that the retiree has other sources of income (work pension, government
benefits) which, when added to the 30K per year of withdrawals, accumulate
to a satisfactory income level.

We consider a 30 year time horizon, since this is consistent with
the \citet{Bengen1994} scenario.  Recall that the probability of
a 65 year old Canadian male attaining the age of 95 is about 0.13, so
this is a fairly conservative assumption. Similar probabilities are also true for other western countries.

The pension savings are assumed to be held in a tax advantaged account,
so that there are no tax consequences for rebalancing.  
We consider the
withdrawals to be the desired amount before any income taxes.

Note that we also consider that the retiree has mortgage free
real estate worth 400K.  We can regard this as a hedge of last
resort.
\begin{remark}[Reverse Mortgage Hedge]\label{reverse_mortgage_hedge}
If we assume that a reverse mortgage can be used to obtain
a non-recourse loan of half the value of the real estate ($200K$),
then from a risk management point of view, any
strategy which results in an expected shortfall $> -200K$ is
probably acceptable.  All quantities are assumed inflation
adjusted.
\end{remark}

Other scenario details are listed in Table \ref{base_case_1}.
As a point of comparison, the \citet{Bengen1994} strategy 
would withdraw a fixed amount of 40K (real) per year (4\% of the
initial 1,000K), and rebalance to a constant weight of
50\% in stocks at each rebalancing date.

\begin{table}[hbt!]
\begin{center}
\begin{tabular}{lc} \toprule
Investment horizon $T$ (years) & 30.0  \\
Equity market index & CRSP Cap-weighted index (real) \\
Bond index & 30-day T-bill (US) (real) \\
Initial portfolio value $W_0$  & 1000 \\
Mortgage free real estate & 400 \\
Cash withdrawal$/$rebalancing times & $t=0,1.0, 2.0,\ldots, 29.0$\\
Maximum withdrawal (per year)   & $ q_{\max} = 60$\\
Minimum withdrawal (per year)   & $ q_{\min} = 30$\\
Equity fraction range & $[0,p_{\max}]$\\
                      & $p_{\max} = 0.5, 0.8, 1.0, 1.3$\\
Borrowing spread $\mu_c^{b}$ & 0.03 \\
Rebalancing interval (years) & 1.0  \\
$\alpha$ (Expected shortfall parameter)       & .05 \\
Stabilization $\epsilon$ (see equation (\ref{PCES_a})) & $ -10^{-4} $ \\
Market parameters & See Table~\ref{fit_params} \\ \bottomrule
\end{tabular}
\caption{Input data for examples.  Monetary units: thousands of dollars.
All amounts inflation adjusted.
\label{base_case_1}}
\end{center}
\end{table}

For the computational study in this paper, we use data from the Center for Research in Security Prices (CRSP)
on a monthly basis from 1926:1 to 2024:12.\footnote{More specifically, results presented here
were calculated based on data from Historical Indexes, \copyright
2024 Center for Research in Security Prices (CRSP), The University of
Chicago Booth School of Business. Wharton Research Data Services was
used in preparing this article. This service and the data available
thereon constitute valuable intellectual property and trade secrets
of WRDS and/or its third-party suppliers.} The specific indices used are the CRSP 30 day U.S. T-bill
index for the bond asset,
and the CRSP cap-weighted total return index for the stock asset\footnote{The stock index
includes all distributions for all domestic stocks trading on major U.S. exchanges.}.
Retirees are, naturally, concerned with preserving real (not nominal) spending power.
Hence, we use the US CPI index (from CSRP) to adjust these indexes for inflation.
We use the above market data in two different ways in subsequent investigations.

We use the threshold technique
\citep{mancini2009,contmancini2011,Dang2015a} to estimate the parameters
for the parametric stochastic process models. Since the index data is in
real terms, all parameters reflect real returns. Table \ref{fit_params}
shows the results of calibrating the models to the historical data.
The correlation $\rho_{sb}$ is computed by removing any returns which
occur at times corresponding to jumps in either series, and then
using the sample covariance. Further discussion of the validity of
assuming that the stock and bond jumps are independent is given in
\citet{forsyth_2020_a}.

{\small
\begin{table}[hbt!]
\begin{center}
\begin{tabular}{cccccccc} \toprule[1pt]
 CRSP & $\mu^s$ & $\sigma^s$ & $\lambda^s$ & $u^s$ &
  $\eta_1^s$ & $\eta_2^s$ & $\rho_{sb}$ \\ \midrule
       &~ 0.088241  &  ~0.147361&   ~0.31313  & ~ 0.22581 & ~4.3608 & ~5.5309 & ~0.096279\\
 \midrule[1pt]
30-day T-bill & $\mu^b$ & $\sigma^b$ & $\lambda^b$ & $u^b$ &
  $\eta_1^b$ & $\eta_2^b$ & $\rho_{sb}$ \\ \midrule
        & ~0.0034 & ~0.0139 & ~0.3838  &  ~0.3947 &  ~61.510 & ~53.356 &~ 0.096279
\\
\bottomrule[1pt]
\end{tabular}
\end{center}
\caption{Parameters for parametric market models (\ref{jump_process_stock})
and (\ref{jump_process_bond}), fit to CRSP data (inflation
adjusted) for 1926:1~to~2024:12.  \label{fit_params} }
\end{table}
}

We first compute and store the optimal controls in the {\em synthetic market}, that is,  the stock market processes follow the parametric models (\ref{jump_process_stock})
and (\ref{jump_process_bond}).  We then use Monte Carlo simulation,
coupled with the stored optimal controls, to analyze the properties of
the optimal allocation/withdrawal.  

Finally, as a check on robustness, we  test the controls (computed in
the synthetic market), in the {\em historical market}, using block bootstrap
resampled historical data.


\subsection{Results: Convergence}
We first check on the convergence of our method.  The numerical parameters
are given in Table \ref{base_case_numerical}.

\begin{table}[hbt!]
\begin{center}
\begin{tabular}{lc} \toprule
Grid sizes ($b >0$) & $512 \times 512$ \\
                    & $1024 \times 1024$ \\
                    & $2048 \times 2048$\\
Extended grid $(N_1^{\dagger}, N_2^{\dagger})$ & ($2N_1, 2N_2$) \\
$\hat{x}_1^0, \hat{x}_2^0$ & $\log( 100 )$\\
$(x_1)_{\min}, (x_2)_{\min}$ & -7.5 + $\hat{x}_1^0$ \\
$(x_1)_{\max}, (x_2)_{\max}$ & +10 + $\hat{x}_2^0$\\
Monotonicity condition $\delta$ (equation(\ref{mono_cond_1})) & $10^{-6}$\\
Number of points in $w$ grid $N_w$ (equation (\ref{optimal_h_def})) & $4N_1$\\
Number of points in $\pp$ grid $N_p$ (equation (\ref{optimal_h_def})) & $N_1/10$\\
Number of points in $\qq$ grid $N_{\qq}$ (equation (\ref{optimal_q})) & $N_1/10$\\
\bottomrule[1pt]
\end{tabular}
\caption{Numerical parameters
\label{base_case_numerical}}
\end{center}
\end{table}


\begin{remark}[Choice of $(x)_{\min}, (x)_{\max}$]
Consider the grid with $b >0$.  In this case, we choose $x_{\min}$ (for both $x_1, x_2$) so
that
\begin{eqnarray}
   e^{x_{\min}} W_0 & < & 10^{-4} W_0
\end{eqnarray}
which is essentially zero  allocation to  bonds or stocks for practical purposes (compared to the original investment).
For example, if the original investment was $\$1000$, then we assume that an allocation of
less than $\$0.10$ is negligible.

In the case of $x_{\max}$ (for both  $x_1, x_2$) we assume that the largest possible wealth would
occur with an all stock portfolio having zero withdrawals.  A rough estimate of  $x_{\max}$ is then given by assuming
geometric Brownian motion with drift $\mu$ and volatility $\sigma$.  We choose $x_{\max}$ such that
\begin{eqnarray}
   e^{x_{\max}} W_0 & > & e^{\mu T} e^{4 \sigma \sqrt{T} } W_0~. \label{x_max_eqn}
\end{eqnarray}
Equation (\ref{x_max_eqn}) postulates a four sigma event, which is extremely unlikely.  Note that
we use $(x_1)_{\max} = (x_2)_{\max} = x_{\max}$, since the impulse control can (in theory) move
the entire stock allocation into bonds instantaneously.

We have tested our results by increasing the ranges of $x_{\min}, x_{\max}$ and verified that
the numerical results were unaffected to six digits.
\end{remark}

Figure \ref{convergence_May_2025_fig}
shows the EW-ES efficient frontiers, computed using various grid sizes.  
The grid here refers to the grid for $b>0$.  There is an additional grid
(of the same size) for $b<0$.  The optimal control
is computed and stored, using the $\delta$-monotone PIDE
method.  Statistics are then generated using Monte Carlo (MC)
simulations, using the stored optimal controls, with $2.56\times 10^6$ MC simulations
being used.  
More precisely, Monte Carlo methods are
used only as a forward policy evaluation tool, with
a known policy.  The optimal
policies are obtained as part of the backward dynamic
programming algorithm.

The curves for all grid
sizes essentially overlap, indicating convergence for practical
purposes.  All subsequent results use the $2048 \times 2048$ grid.
For comparison, we also show the results for the \citet{Bengen1994} policy,
which is clearly much less efficient than the optimal policy.

\begin{figure}[htb!]
\centerline{%
\includegraphics[width=3.0in]{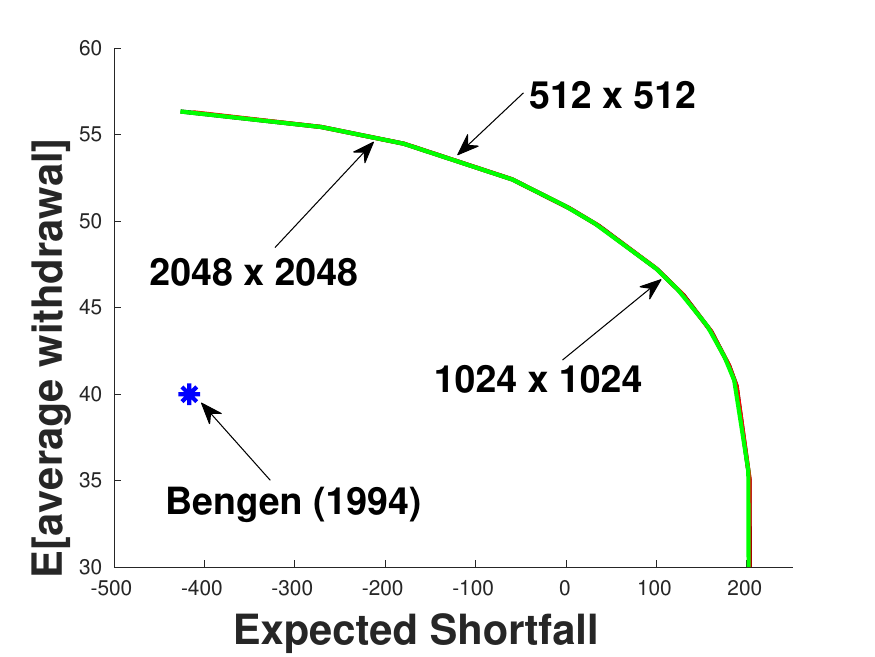}
}
\caption{
EW-ES convergence test.
Real stock index: deflated real capitalization
weighted CRSP, real bond index: deflated 30 day T-bills. Scenario in
Table~\ref{base_case_1}. Parameters in Table~\ref{fit_params}.
The optimal
control is determined
by  using the method in Algorithm \ref{alg_monotone_final}.
Grid refers to the grid used in the PIDE solve,
$n_s \times n_b$, where $n_s$ is the number of
nodes in the $\log s$ direction, and $n_b$ is the number of nodes in the
$\log b$ direction. Units: thousands of dollars (real).
The controls are stored, and then the final results
are obtained using a Monte Carlo method,
with $2.56 \times 10^6$ simulations.
Maximum fraction in stocks $p_{\max} = 1.0$.
Bengen (1994) refers to a constant weight in stocks $p = 0.5$,
rebalanced annually, and constant yearly withdrawals of $40$ per year.
All amounts are inflation adjusted.
}
\label{convergence_May_2025_fig}
\end{figure}

Table \ref{conservative_accuracy} shows detailed convergence results for a
single point on the EW-ES frontier ($\pp_{\max} =1.0, \kappa = 0.866$).
The optimal controls computed using Algorithm \ref{alg_monotone_final} are stored,
and then used in Monte Carlo (MC) simulations.  The value function
appears to converge smoothly (for the PIDE method).  

\begin{table}[hbt!]
\begin{center}
\begin{tabular}{lccc|cc} \toprule
 & \multicolumn{3}{|c|}{Algorithm in \ref{alg_monotone_final}}
 & \multicolumn{2}{c}{Monte Carlo} \\ \midrule
Grid & ES & $~~E[\sum_i \qq_i]/M$ & ~~Value Function (\ref{pcee_inter}) & ES
 & $E[\sum_i \qq_i]/M$ \\ \midrule
$~512 ~\times ~512$   & ~~~-14.6176 &~~ 51.1162 & 1520.941 &  ~-10.302 & 51.1507 \\
$1024 \times 1024$ & ~~~-9.16685 & ~~51.0706  & 1524.251 &~ -7.5173 & 51.0781 \\
$2048 \times 2048$ & ~~~-4.84764 & ~~50.9780  & 1525.179 & ~-3.8866 & 50.9762 \\
\bottomrule
\end{tabular}
\caption{
EW-ES convergence test.
Real stock index: deflated real capitalization
weighted CRSP, real bond index: deflated 30 day T-bills. Scenario in
Table~\ref{base_case_1}. Parameters in Table~\ref{fit_params}.
Maximum fraction in stocks $p_{\max} = 1.0$.
$\kappa = 0.8660$.
Other information given in caption for Figure \ref{convergence_May_2025_fig}.
}
\label{conservative_accuracy}
\end{center}
\end{table}

It is also interesting to note that, in all our tests
using the parameters in Table \ref{base_case_numerical},
we find that the wraparound error condition (\ref{wrap_error_local_1a}) is bounded by
\begin{eqnarray}
         \Delta x_1 \Delta x_2 \sum_{( j ,k) \in \mathcal{D}^{\dagger} - \mathcal{D} } 
                  | \tilde{g}_{j,k}^{\dagger} | 
         & < &
    10^{-14}  ~.
              \label{wrap_numerical}
\end{eqnarray}
Note that
\begin{eqnarray*}
         \Delta x_1 \Delta x_2 \sum_{ (j,k) \in \mathcal{D}^{\dagger}} | \tilde{g}^{\dagger}_{j,k} |
       \leq 1 + \delta \simeq 1 ~ 
\end{eqnarray*}
with $\delta$ given in Table  \ref{base_case_numerical}.
Consequently, the  numerical result  (\ref{wrap_numerical})  
indicates that the wrap-around error is only slightly larger than double precision
machine epsilon.

\subsection{Efficient Frontiers: Synthetic market}
Figure \ref{frontier_May_2025_fig} shows the efficient frontiers computed
using various value of $\pp_{\max}$ (labelled $pmax$ on the
Figure), the maximum values  of the fraction
in stocks. As must
be true mathematically, the curves with higher values of $\pp_{\max}$
plot above curves with smaller $\pp_{\max}$.  However, a striking 
result is that the curves are all fairly tightly clustered,
in comparison to the \citet{Bengen1994}  strategy.
In particular, it seems reasonable to target an $ES \simeq 0$.
The expected annual withdrawal for all values of $\pp_{\max}$ are
very close for $ES = 0$.\footnote{Although from Remark \ref{reverse_mortgage_hedge},
even $ES \simeq -200K$ is also acceptable. Hence a target of $ES = 0$
is very conservative.}  However, this is not quite
a free lunch, as we will see when we examine the results more
closely.

\begin{figure}[htb!]
\centerline{%
\includegraphics[width=3.0in]{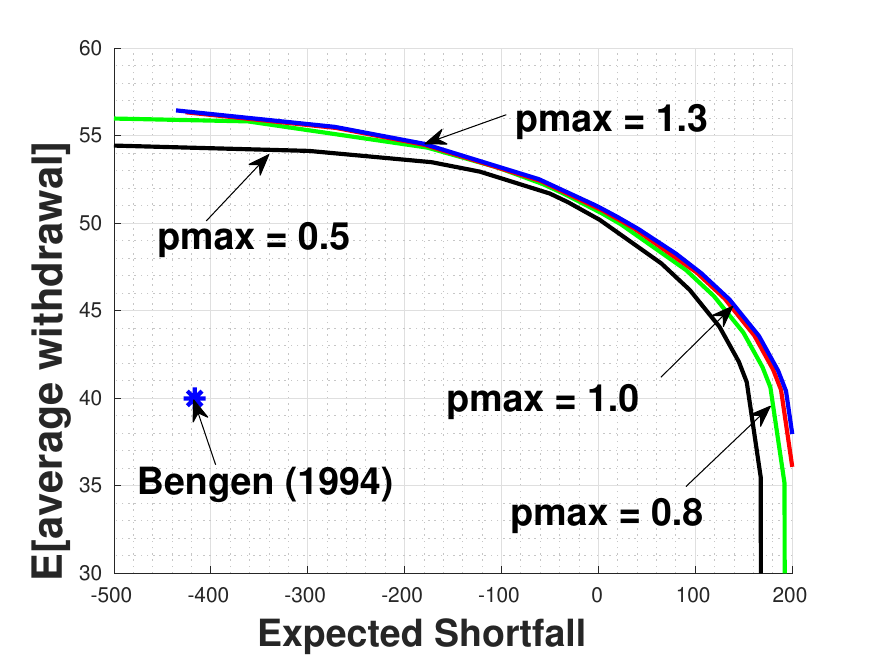}
}
\caption{
Comparison of maximum leverage constraint $p_{\max}$,
synthetic market.
Real stock index: deflated real capitalization
weighted CRSP, real bond index: deflated 30 day T-bills. Scenario in
Table~\ref{base_case_1}. Parameters in Table~\ref{fit_params}.
The optimal
control is determined
using Algorithm \ref{alg_monotone_final}.
The controls are stored  (using $2048 \times 2048$ grid), and then the final results
are obtained using a Monte Carlo method,
with $2.56 \times 10^6$ simulations.
Bengen (1994) refers to a constant weight in stocks $p = 0.5$,
rebalanced annually, and constant  yearly withdrawals of $40K$ per year.
All amounts are inflation adjusted.
}
\label{frontier_May_2025_fig}
\end{figure}


\subsection{Heat Maps: Synthetic Market} 
In order to gain some insight into the optimal controls,
we plot the heat maps of the optimal asset allocation and
the optimal withdrawals, for the cases $\pp_{\max} = 1.3, 1.0, 0.5$
in Figures \ref{plot_heat_one_point_three}, \ref{plot_heat_one_point_zero},
\ref{plot_heat_point_five}.
For each case, we choose the point on the efficient frontier
so that  $ES  \simeq 0$, since this is
an {\em interesting} point on the efficient frontier.

First, note that Figures \ref{plot_heat_map_qplus_one_point_three_fig},
\ref{plot_heat_map_qplus_one_point_zero_fig} and
\ref{plot_heat_map_qplus_zero_point_five_fig} show that
the optimal withdrawal strategies  for a wide range of $\pp_{\max}$ 
are very similar.  In all cases, the withdrawal control
is approximately {\em bang-bang}, that is,  the optimal
policy is only to withdraw the maximum or minimum amounts.
In the continuous withdrawal limit, the withdrawal control
can be proven to be bang-bang, \citet{Forsyth_2021_b}.
We have been unable to prove that this property holds
for discrete withdrawals.  However, numerical studies
appear to show that the transition zone between $q_{\min}$
and $q_{\max}$ becomes smaller as the grid is refined.
We conjecture that the true converged withdrawal
control is bang-bang even for discrete rebalancing/withdrawals.
The lefthand plots of Figures  \ref{plot_heat_one_point_three}, \ref{plot_heat_one_point_zero},
\ref{plot_heat_point_five} show the optimal allocation strategies.
We use the same colour scale for all plots, with the maximum allocation
to stocks being 130\%.  It is interesting to see that the allocation
strategies are quite similar as long as we restrict attention to the areas
in the heat maps above the 5th wealth percentile.  Large differences in
the allocation appear below the 5th wealth percentile.

\begin{figure}[htb!]
\centerline{
\begin{subfigure}[t]{.4\linewidth}
\centering
\includegraphics[width=\linewidth]{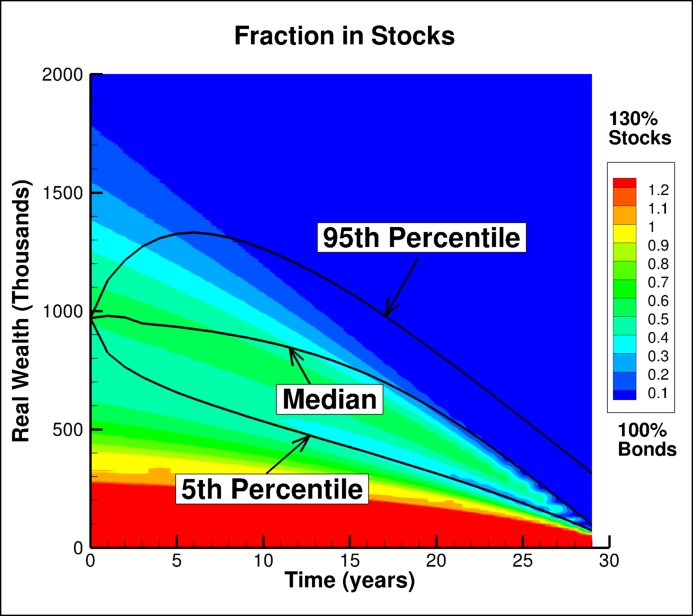}
\caption{Fraction in stocks}
\label{plot_heat_map_one_point_three_fig}
\end{subfigure}
\hspace{.05\linewidth}
\begin{subfigure}[t]{.4\linewidth}
\centering
\includegraphics[width=\linewidth]{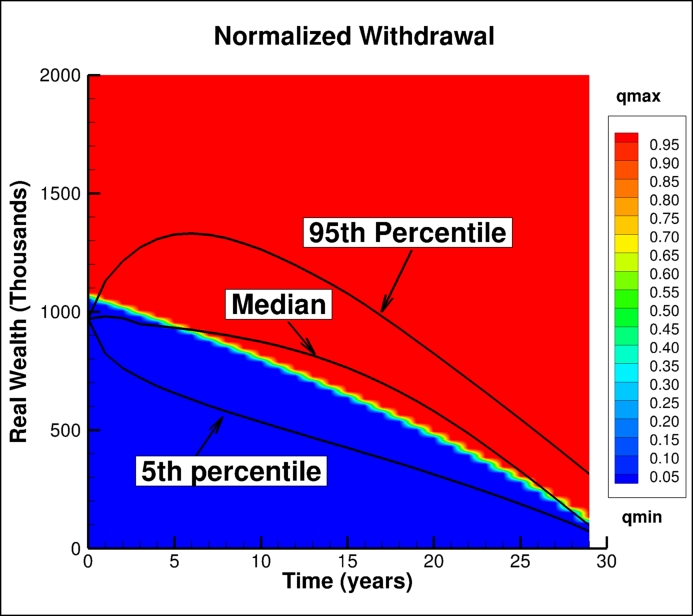}
\caption{Withdrawals}
\label{plot_heat_map_qplus_one_point_three_fig}
\end{subfigure}
}
\caption{
$p_{\max} = 1.3$. Heat map of the controls: fraction in stocks and withdrawals,
computed using Algorithm \ref{alg_monotone_final}. Real
capitalization weighted CRSP index, and real 30-day T-bills. Scenario
given in Table~\ref{base_case_1}. Control computed and stored from the
Algorithm \ref{alg_monotone_final} in the synthetic market. $q_{min} = 30, q_{\max} =
60$ (per year). $\kappa = 0.8583, EW \simeq 50.9$, $ES \simeq 0.96$.  Percentiles from bootstrapped
historical market.
Normalized withdrawal $(q - q_{\min})/(q_{\max} - q_{\min})$. Units:
thousands of dollars.
}
\label{plot_heat_one_point_three}
\end{figure}

\begin{figure}[htb!]
\centerline{
\begin{subfigure}[t]{.4\linewidth}
\centering
\includegraphics[width=\linewidth]{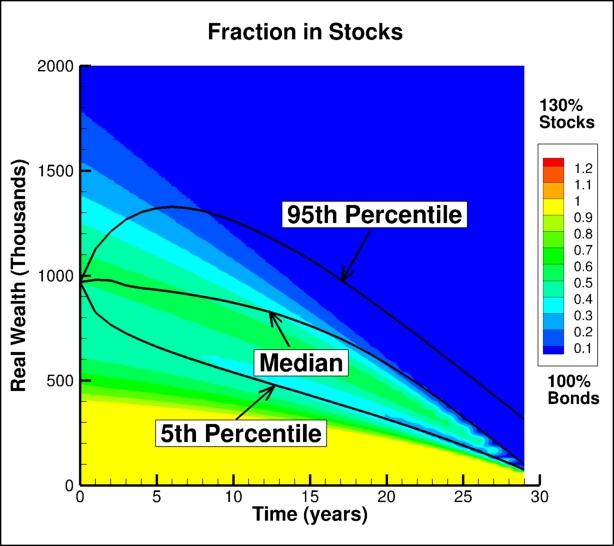}
\caption{Fraction in stocks}
\label{plot_heat_map_one_point_zero_fig}
\end{subfigure}
\hspace{.05\linewidth}
\begin{subfigure}[t]{.4\linewidth}
\centering
\includegraphics[width=\linewidth]{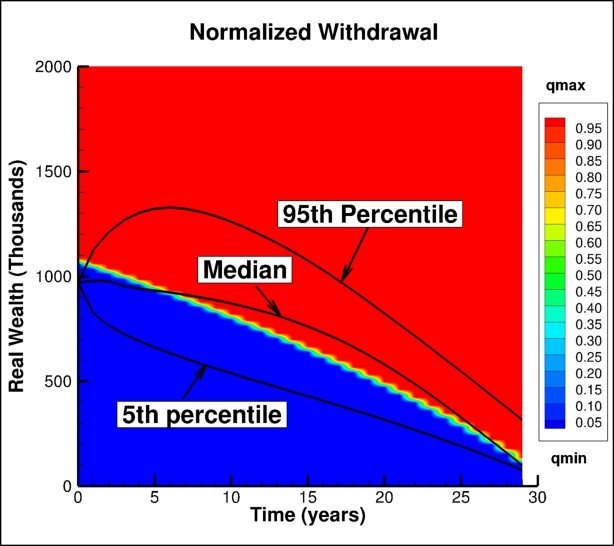}
\caption{Withdrawals}
\label{plot_heat_map_qplus_one_point_zero_fig}
\end{subfigure}
}
\caption{
$p_{\max} = 1.0$. Heat map of the controls.  $\kappa = 0.8860, EW = 50.7$, $ES = 4.6$.
Other information as in caption for Figure \ref{plot_heat_one_point_three}.
}
\label{plot_heat_one_point_zero}
\end{figure}

\begin{figure}[htb!]
\centerline{
\begin{subfigure}[t]{.4\linewidth}
\centering
\includegraphics[width=\linewidth]{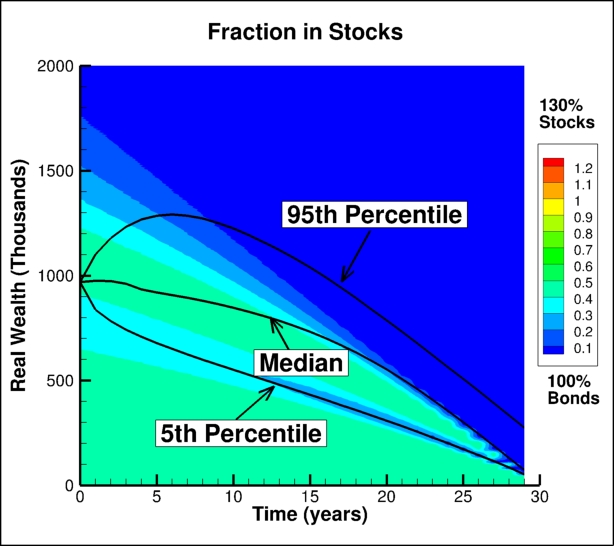}
\caption{Fraction in stocks}
\label{plot_heat_map_zero_point_five_fig}
\end{subfigure}
\hspace{.05\linewidth}
\begin{subfigure}[t]{.4\linewidth}
\centering
\includegraphics[width=\linewidth]{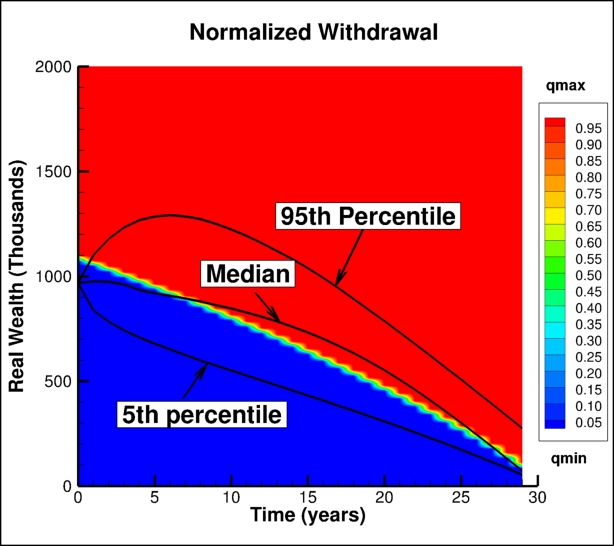}
\caption{Withdrawals}
\label{plot_heat_map_qplus_zero_point_five_fig}
\end{subfigure}
}
\caption{
$p_{\max} = 0.5$. Heat map of the controls. $\kappa = 1.0, EW \simeq 50.2$, $ES = 1.25$.
Other information as in caption for Figure \ref{plot_heat_one_point_three}.
}
\label{plot_heat_point_five}
\end{figure}


\subsection{Percentiles Versus Time: Synthetic Market}
Figures \ref{percentiles_one_point_three_synthetic}.
\ref{percentiles_point_one_point_zero_synthetic} and
\ref{percentiles_point_five_synthetic} show the percentiles
of fraction in stocks, wealth, and withdrawals versus time.
We select the point on the efficient frontier for
each case so that $ES \simeq 0$.

Rather surprisingly, the percentiles fraction
in stocks and percentiles wealth are very similar,
for all values of $\pp_{\max}$.
However, we do see some differences in 
the withdrawal percentiles (the rightmost panel in each plot).
The median withdrawal is slower to increase to the maximum
for $\pp_{\max} = 0.5$ compared to the case
with $\pp_{\max} = 1.0$.

In addition, we can see that even for $\pp_{\max} = 1.3$, the
95th percentile for stock allocation is less than $0.6$ (Figure
\ref{percentile_control_pmax_one_point_three_fig}),
suggesting that with this level of ES risk, the optimal
policy rarely allocates a large fraction
to stocks. 

\begin{figure}[tb]
\centerline{%
\begin{subfigure}[t]{.33\linewidth}
\centering
\includegraphics[width=\linewidth]{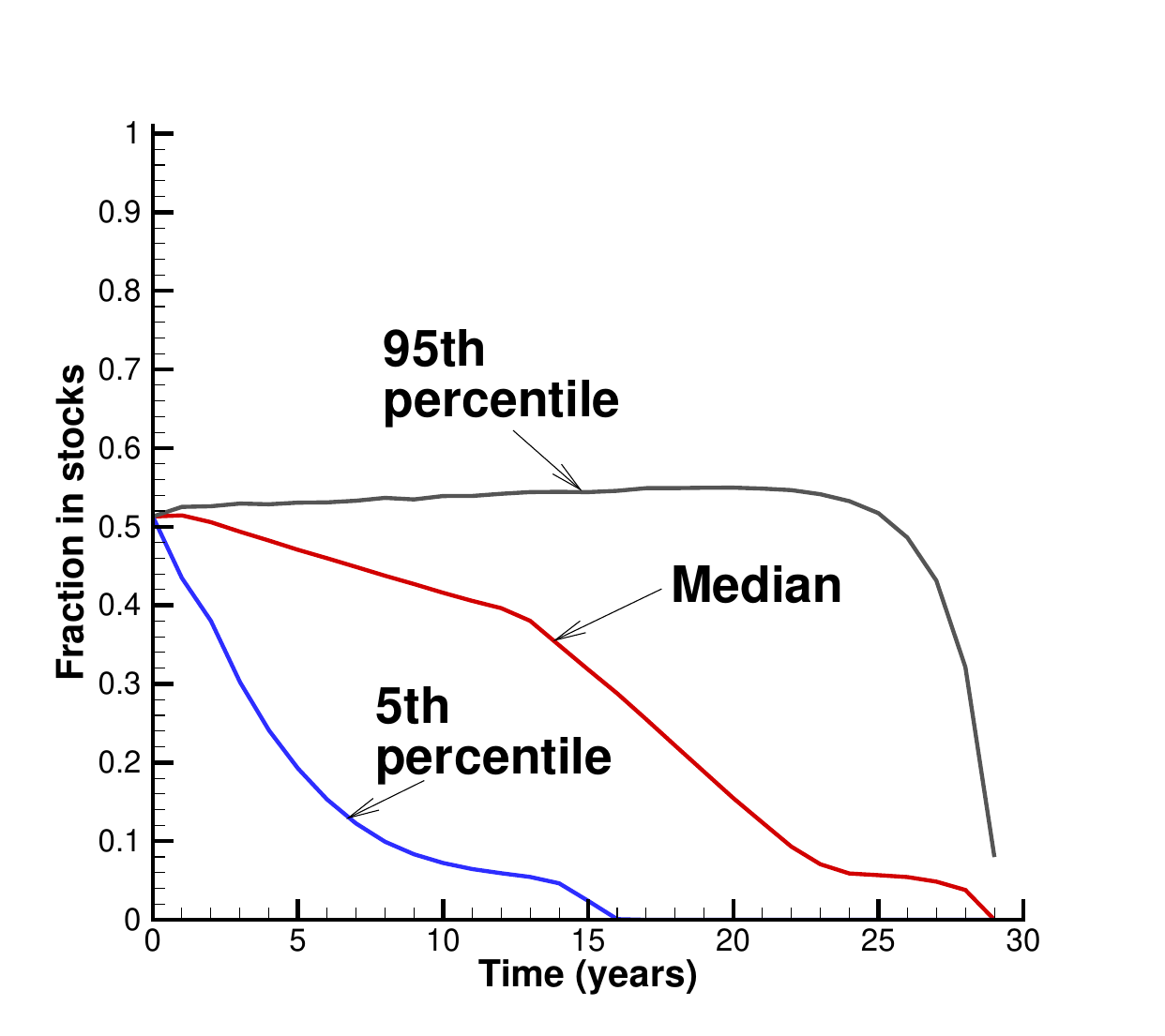}
\caption{Percentiles fraction in stocks}
\label{percentile_control_pmax_one_point_three_fig}
\end{subfigure}
\begin{subfigure}[t]{.33\linewidth}
\centering
\includegraphics[width=\linewidth]{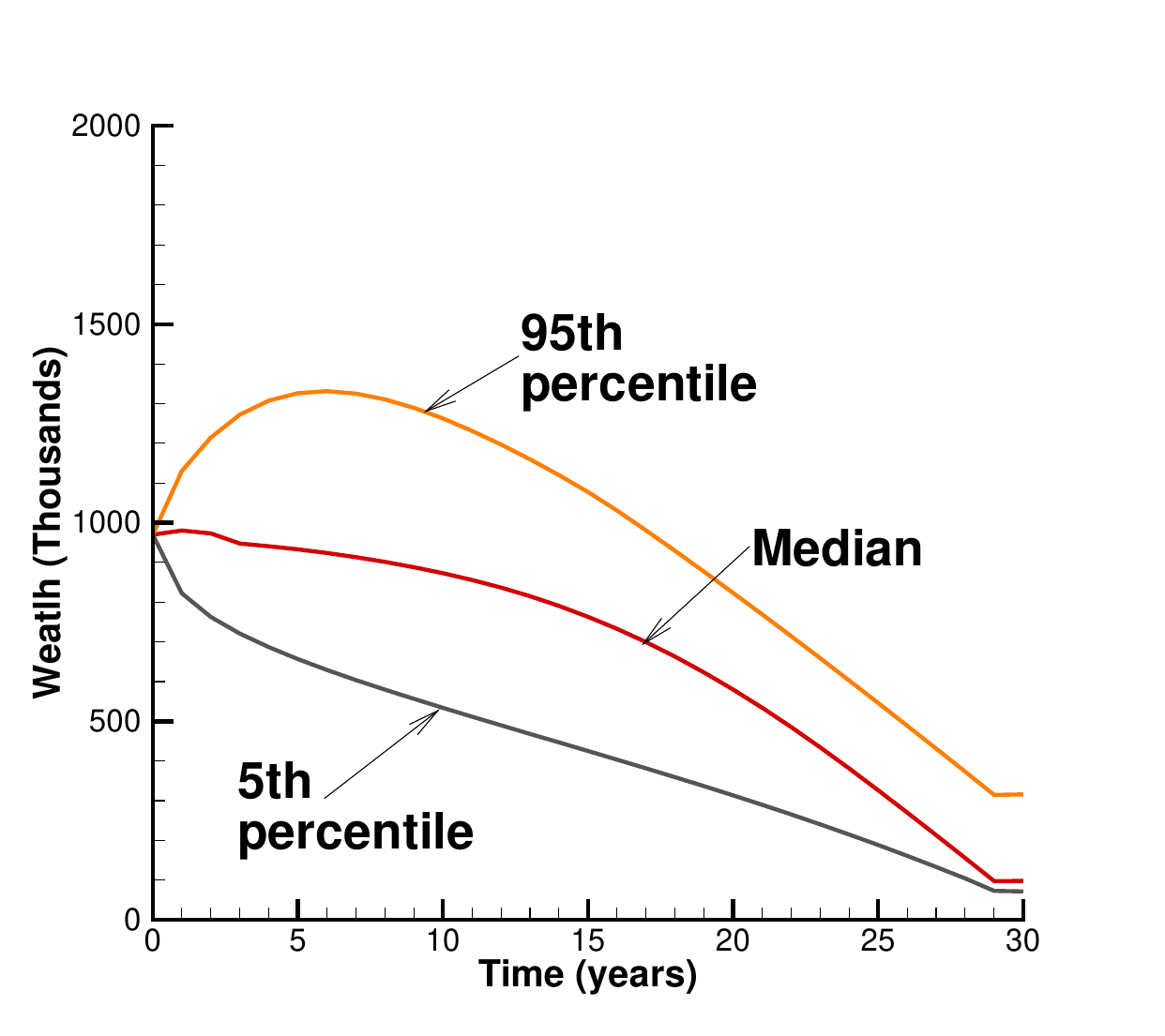}
\caption{Percentiles  wealth}
\label{percentile_wealth_one_point_three_fig}
\end{subfigure}
\begin{subfigure}[t]{.33\linewidth}
\centering
\includegraphics[width=\linewidth]{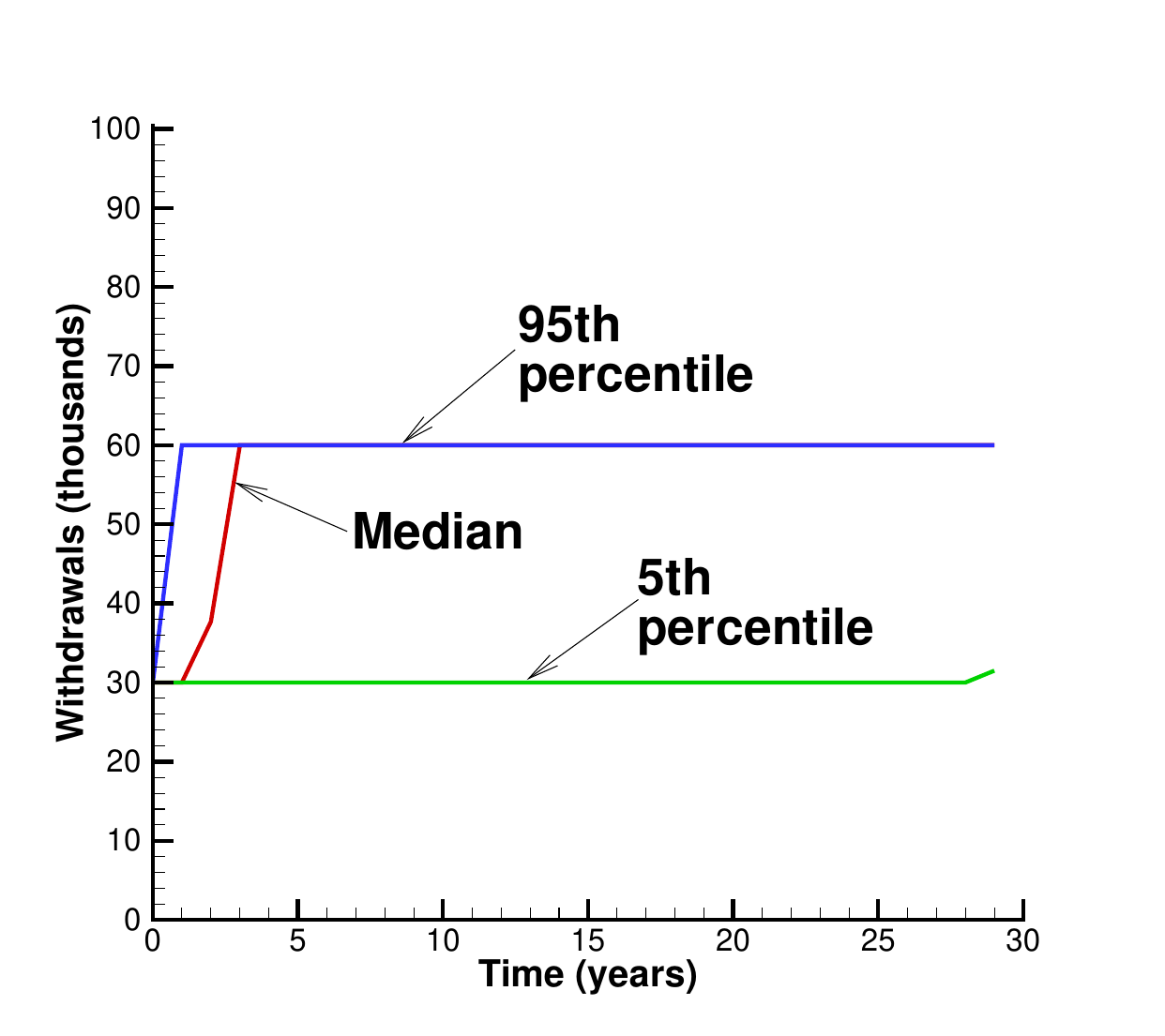}
\caption{Percentiles withdrawals}
\label{percentile_qplus_one_point_three_fig}
\end{subfigure}
}
\caption{
$p_{\max} = 1.3$.
Percentiles of the controls.
Scenario in Table \ref{base_case_1}.
Parameters based on the real CRSP index,
and real 30-day T-bills (see Table~\ref{fit_params}). Control computed
and stored from the Algorithm \ref{alg_monotone_final} in the synthetic market.
$q_{min} = 30, q_{\max} = 60$ (per year), $EW \simeq 50.9$, $ES = 0.96$ ($\kappa = 0.8583 $).
Units: thousands of dollars.
}
\label{percentiles_one_point_three_synthetic}
\end{figure}

\begin{figure}[tb]
\centerline{%
\begin{subfigure}[t]{.33\linewidth}
\centering
\includegraphics[width=\linewidth]{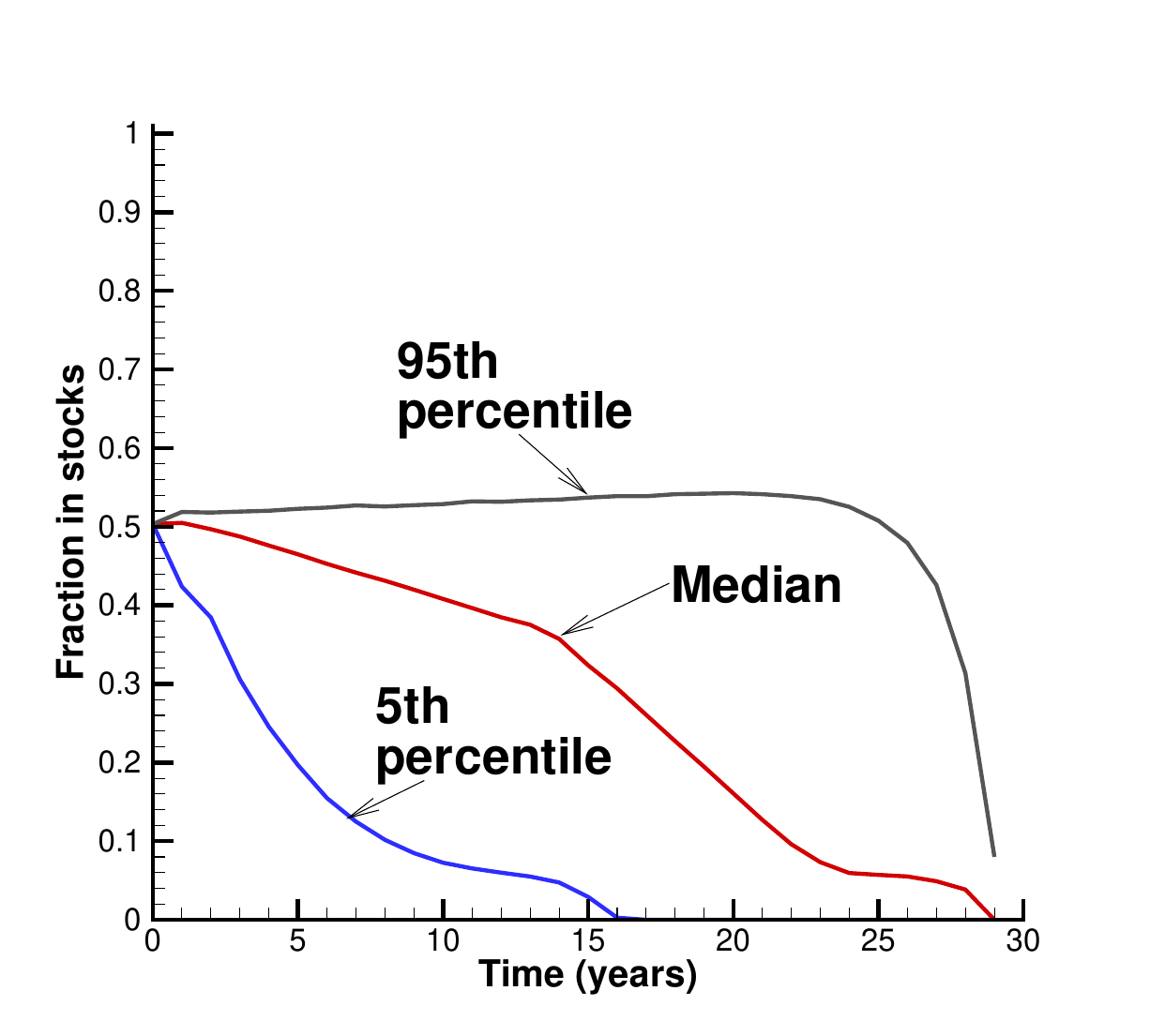}
\caption{Percentiles fraction in stocks}
\label{percentile_control_pmax_one_point_zero_fig}
\end{subfigure}
\begin{subfigure}[t]{.33\linewidth}
\centering
\includegraphics[width=\linewidth]{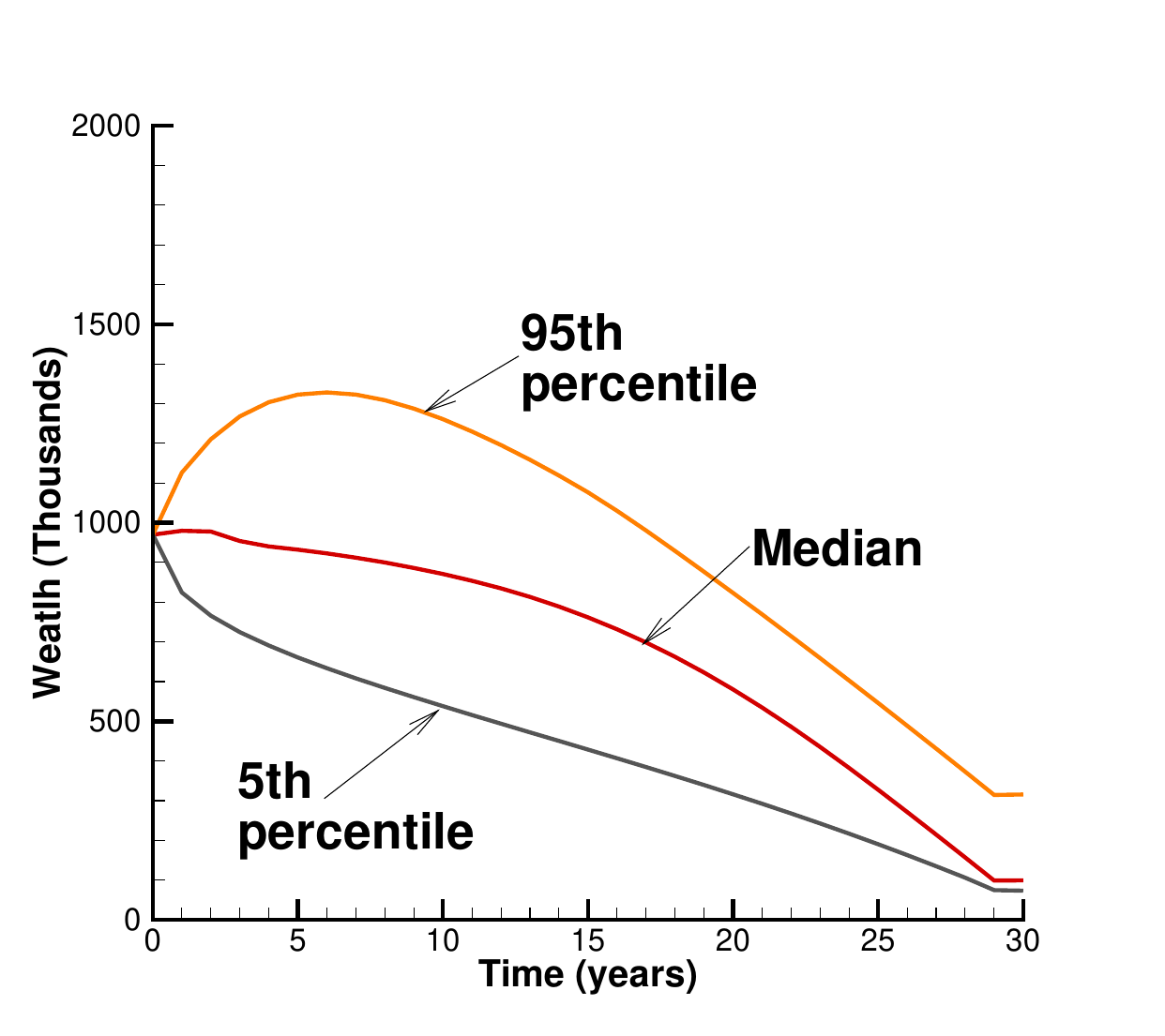}
\caption{Percentiles  wealth}
\label{percentile_wealth_one_point_zero_fig}
\end{subfigure}
\begin{subfigure}[t]{.33\linewidth}
\centering
\includegraphics[width=\linewidth]{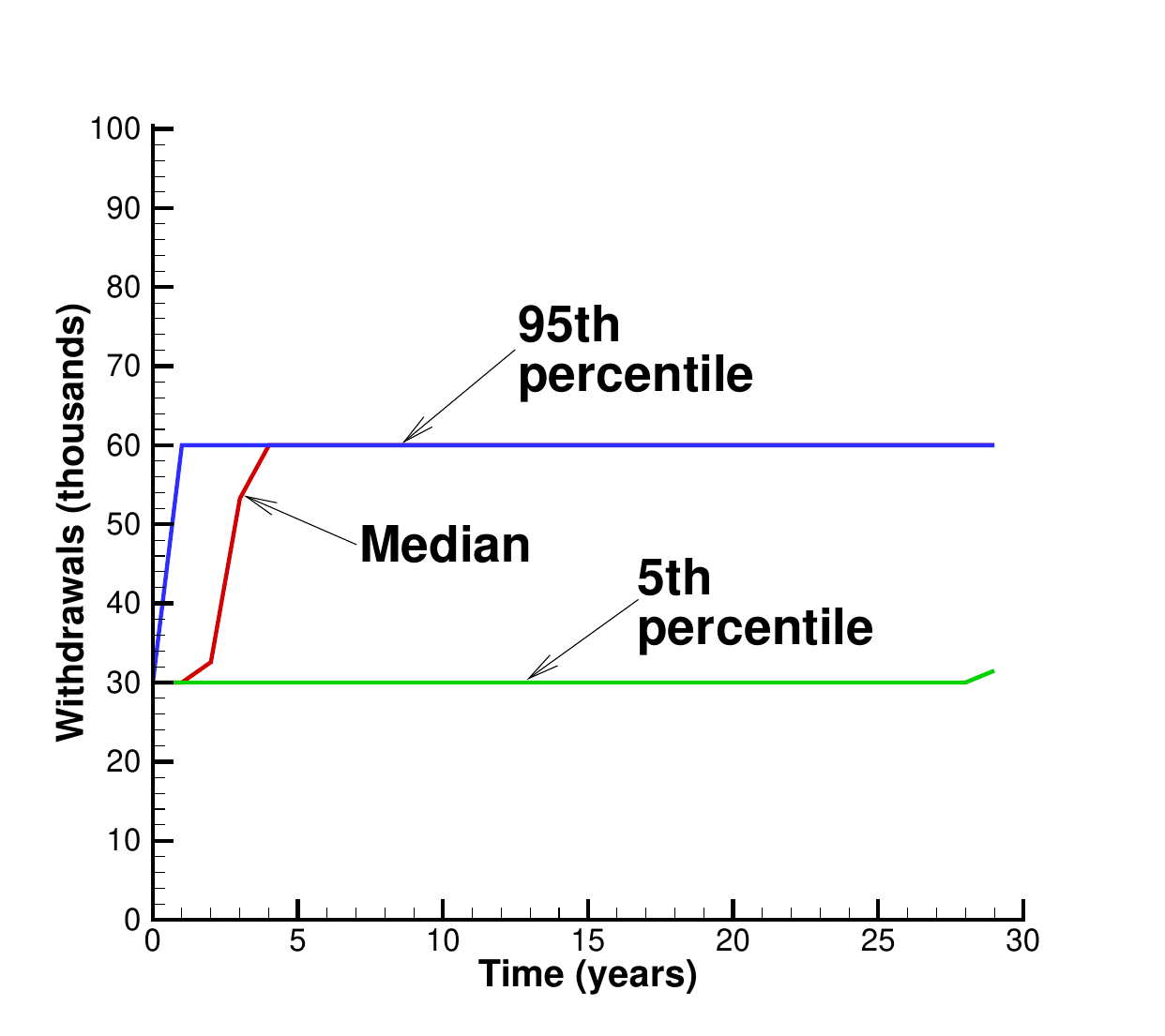}
\caption{Percentiles withdrawals}
\label{percentile_qplus_one_point_zero_fig}
\end{subfigure}
}
\caption{
$p_{\max} = 1.0$. Percentiles of the controls.
$EW \simeq 50.7$, $ES = 4.6$ ($\kappa = 0.8860$).
Other information as in caption for Figure \ref{percentiles_one_point_three_synthetic}.
}
\label{percentiles_point_one_point_zero_synthetic}
\end{figure}

\begin{figure}[tb]
\centerline{%
\begin{subfigure}[t]{.33\linewidth}
\centering
\includegraphics[width=\linewidth]{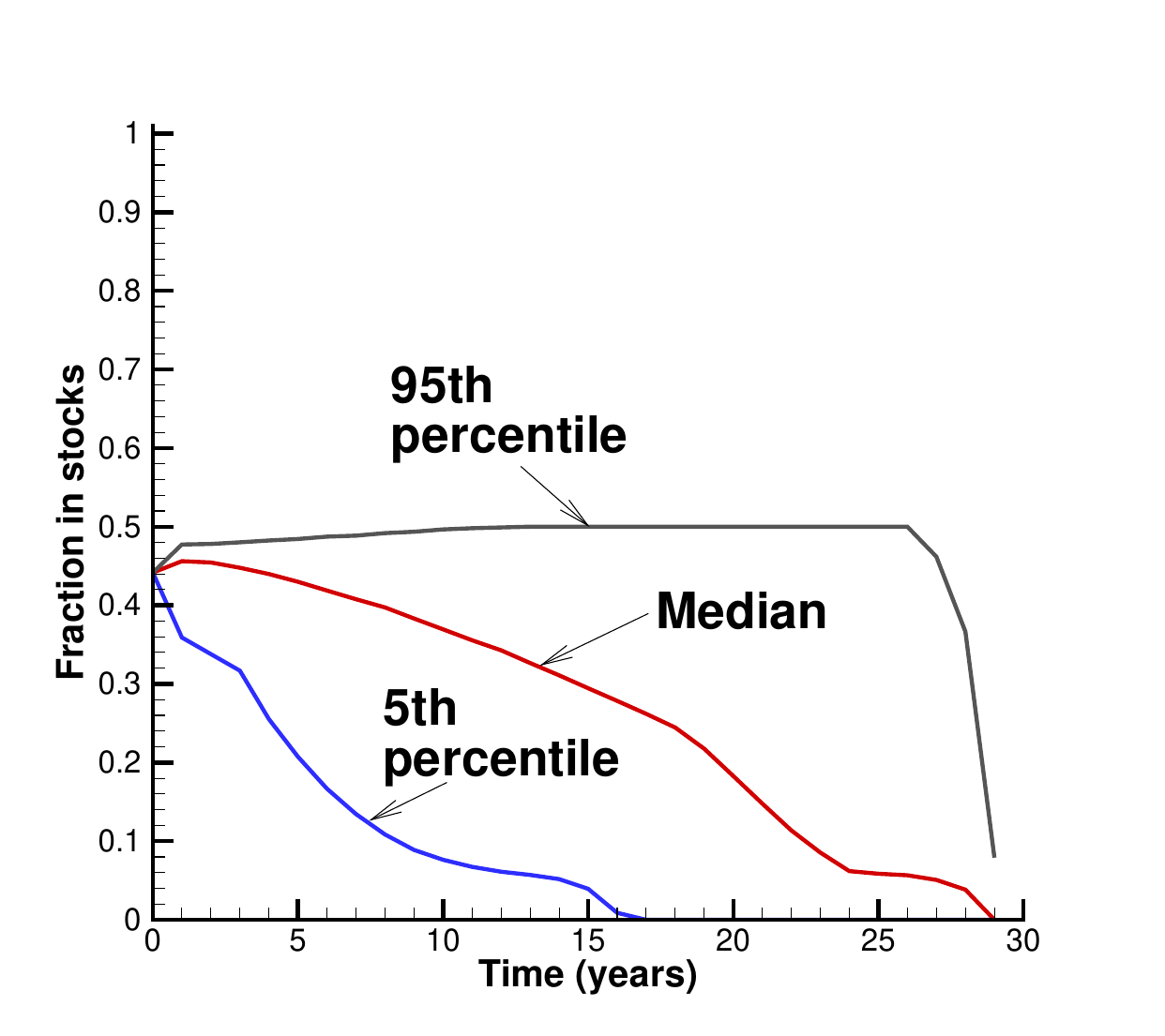}
\caption{Percentiles fraction in stocks}
\label{percentile_control_pmax_zero_point_five_fig}
\end{subfigure}
\begin{subfigure}[t]{.33\linewidth}
\centering
\includegraphics[width=\linewidth]{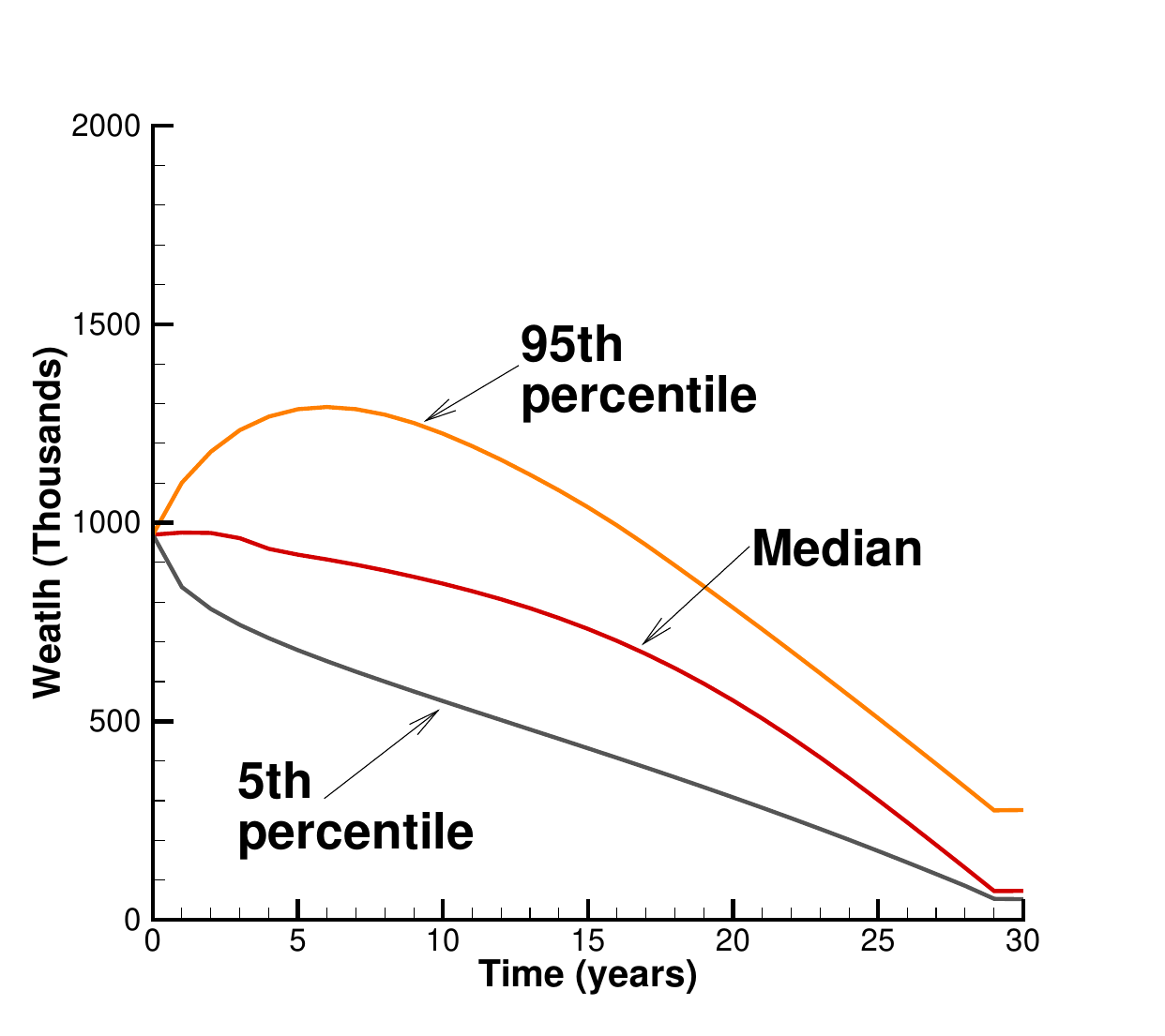}
\caption{Percentiles  wealth}
\label{percentile_wealth_pmax_zero_point_five_fig}
\end{subfigure}
\begin{subfigure}[t]{.33\linewidth}
\centering
\includegraphics[width=\linewidth]{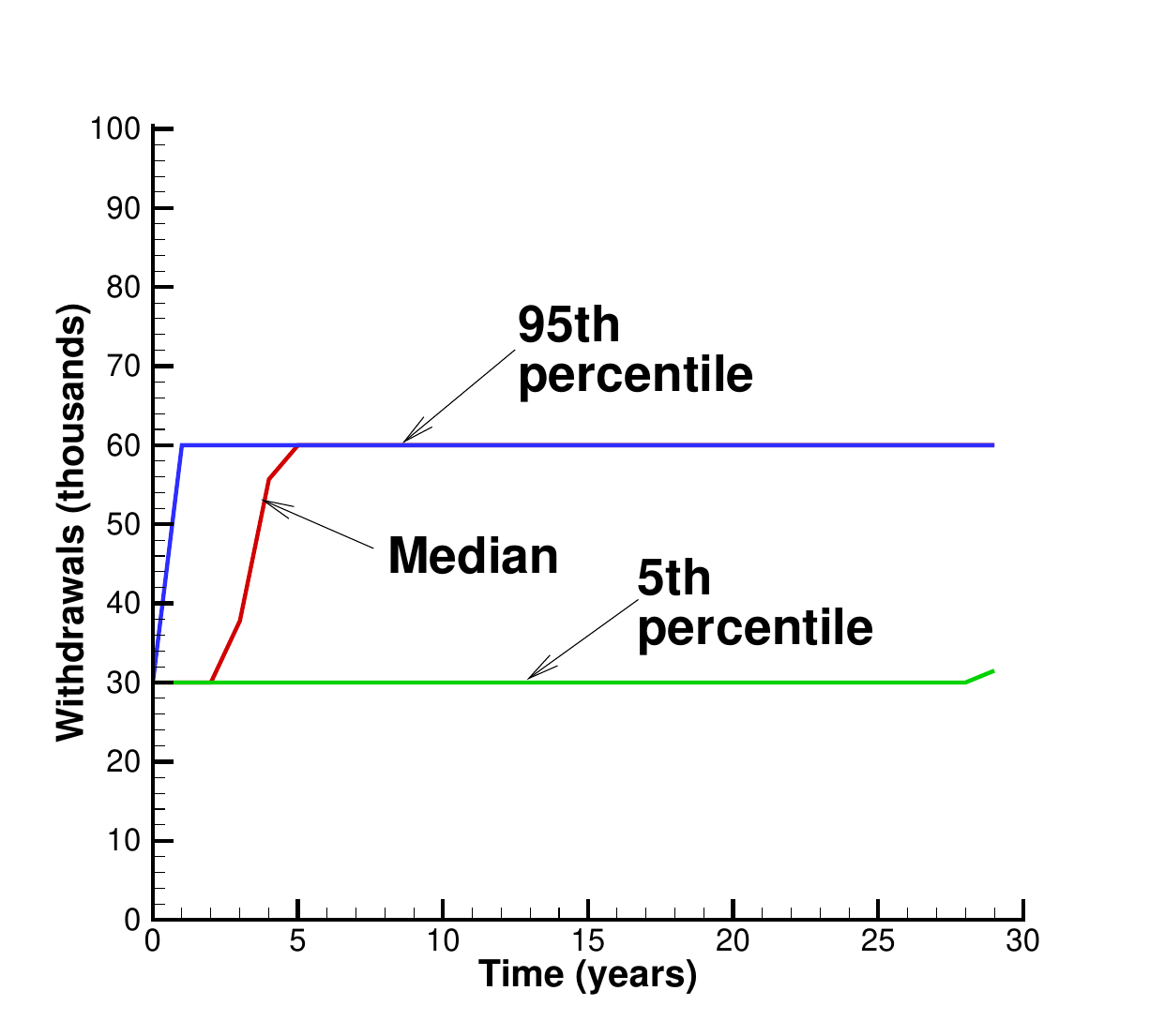}
\caption{Percentiles withdrawals}
\label{percentile_qplus_pmax_zero_point_five_fig}
\end{subfigure}
}
\caption{
$p_{\max} = 0.5$. Percentiles of the controls.
$EW = 50.2$, $ES = 1.25 $ ($\kappa = 1.0$).
Other information as in caption of Figure \ref{percentiles_one_point_three_synthetic}.
}
\label{percentiles_point_five_synthetic}
\end{figure}



\section{Bootstrap results}

\subsection{Historical Market}\label{boot_section}

In order to check on the robustness of the above
results, we proceed as follows.
We compute and store the optimal
controls based on the parametric model (\ref{jump_process_stock}-\ref{jump_process_bond}) as for the
synthetic market case. However, we compute statistical quantities using
the stored controls, but using bootstrapped historical return data
directly. In this case, we make no assumptions concerning the stochastic
processes followed by the stock and bond indices. We remind the
reader that all returns are inflation-adjusted. We use the stationary
block bootstrap method \citep{politis1994,politis2004,politis2009,
Cogneau2010,dichtl2016,Scott_2022,Simonian_2022,Cederburg_2022}.

A key parameter is the expected blocksize. Sampling the data in blocks
accounts for serial correlation in the data series. We use the algorithm
in \citet{politis2009} to determine the optimal blocksize for the bond
and stock returns separately, see Table \ref{auto_blocksize}. However,
in our simulations, we use a
paired sampling approach to simultaneously draw returns from both time
series. In this case, a reasonable estimate for the blocksize for the
paired resampling algorithm would be about $2.0$ years. We will give
results for a range of blocksizes as a check on the robustness of the
bootstrap results. Detailed pseudo-code for block bootstrap resampling
is given in \citet{Forsyth_Vetzal_2019a}.

\begin{table}[h!]
\begin{center}

\text{Optimal expected block size for bootstrap resampling historical data} \\
\medskip
{\small
\begin{tabular}{lc} \toprule
Data          & Optimal expected \\
                     & block size $\hat{b}$ (months) \\ \midrule
  30-day T-bill   &  51 \\
  CRSP cap weighted index &  4.0 \\
\bottomrule
\end{tabular}
}
\end{center}
\caption{Optimal expected blocksize $\hat{b}=1/v$, from \cite{politis2009}. Range of historical
data is between 1926:1 and 2024:12.
The blocksize is a draw from a geometric distribution with $Pr(b = k) = (1-v)^{k-1} v$.
Expected blocksize estimate algorithm from \cite{politis2004,politis2009}
using market data from CRSP.
\label{auto_blocksize}
}
\end{table}

Figure \ref{compare_bootstrap_frontiers} plots (i) the synthetic efficient frontier
and (ii) bootstrap efficient frontier, computed using
the synthetic market controls, tested in the historical market, for
various blocksizes.
Figure \ref{Bootstrap_ifrontier_one_point_three_June_2025_fig} shows
the results for $\pp_{\max} = 1.3$.  In this case, all the frontiers
are quite close, especially compared to the \citet{Bengen1994} strategy,
tested in the historical market.  The effect of using
different expected blocksizes in the block bootstrap
algorithm is quite small.  Overall, this plot suggests that
the optimal controls in this case are robust to model parameter
misspecification.  

The historical test for $\pp_{\max} = 0.5$ is shown in
Figure \ref{Bootstrap_ifrontier_half_June_2025_fig}.
Again, the historical frontiers using different blocksizes are 
quite close.  However, the historical efficient frontiers
do deviate somewhat from the synthetic market frontier,
for values of $ES > 100$.  This indicates that there is
some effect of model parameter misspecification, 
for large values of $ES$ in the case of $\pp_{\max} = 0.5$.
However, for values of $ES < 100$, the synthetic
and historical market frontiers are very close.  So, unless
the retiree is extremely risk averse, this may
not be a problem of practical concern.

\begin{figure}[tb]
\centerline{%
\begin{subfigure}[t]{.45\linewidth}
\centering
\includegraphics[width=\linewidth]{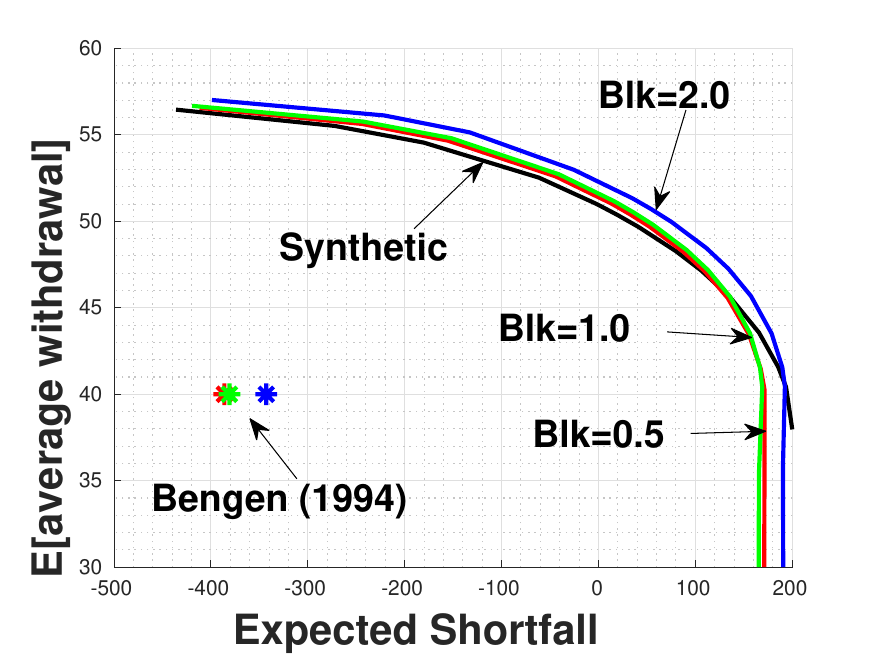}
\caption{
$p_{\max} = 1.3$
}
\label{Bootstrap_ifrontier_one_point_three_June_2025_fig}
\end{subfigure}
\begin{subfigure}[t]{.45\linewidth}
\centering
\includegraphics[width=\linewidth]{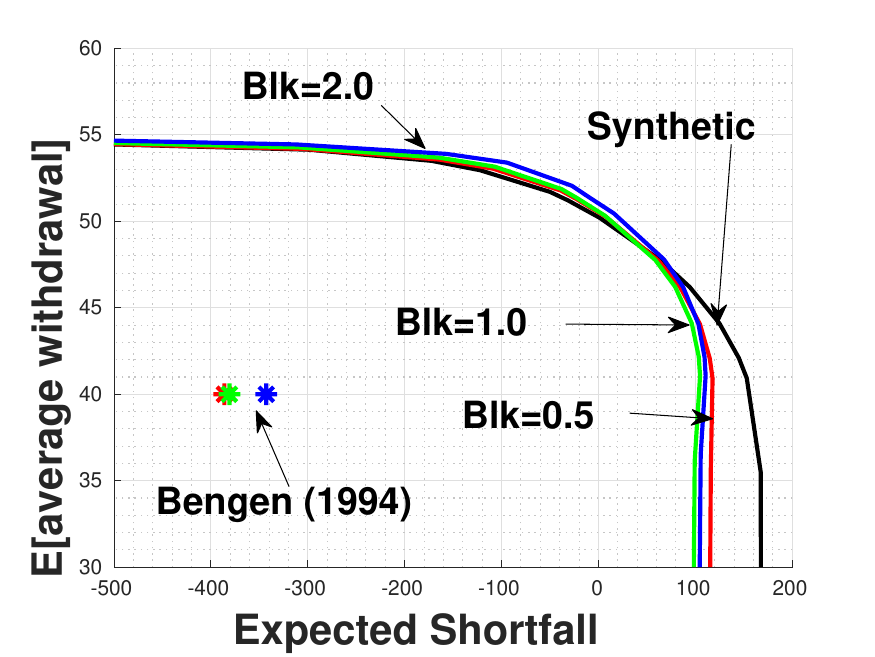}
\caption{$p_{\max} = 0.5$}
\label{Bootstrap_ifrontier_half_June_2025_fig}
\end{subfigure}
}
\caption{
Optimal controls were computed using the synthetic market model.  These
controls tested using bootstrapped historical data.
Expected blocksizes (years) shown.  $10^6$ bootstrap resamples.
Real stock index: deflated real capitalization
weighted CRSP, real bond index: deflated 30 day T-bills. Scenario in
Table~\ref{base_case_1}. Parameters in Table~\ref{fit_params}.
The Bengen control withdraws 40 per year (units: thousands of dollars), and rebalances
annually to 50\% bonds and 50\% stocks.  The Bengen results
are also shown for expected blocksizes of $0.5, 1.0, 2.0$ years.
}
\label{compare_bootstrap_frontiers}
\end{figure}

Figures \ref{percentiles_one_point_three_boot_all} and \ref{percentiles_point_five_boot}
show the percentiles of fraction in stocks, wealth, and
optimal withdrawals, tested in the historical market.
The two extreme cases: $\pp_{max} = 1.3$ and $\pp_{\max} = 0.5$ are shown.

The percentiles in stocks and wealth are fairly close,
for both cases.  However, a comparison of Figures \ref{percentile_qplus_one_point_three_boot_fig}
and Figure \ref{percentile_qplus_pmax_zero_point_five_fig_boot} indicates that
the median withdrawal increases from the minimum to the maximum over about
three years for the case with $\pp_{max} = 1.3$.  The comparable time
frame for $\pp_{\max} = 0.5$ is about four years.

Although most retirees would prefer  larger withdrawals during
the early years of retirement, slightly lower initial
withdrawals in exchange for never having more than 50\% stock allocation
might be seen as a reasonable tradeoff.

\begin{figure}[tb]
\centerline{%
\begin{subfigure}[t]{.33\linewidth}
\centering
\includegraphics[width=\linewidth]{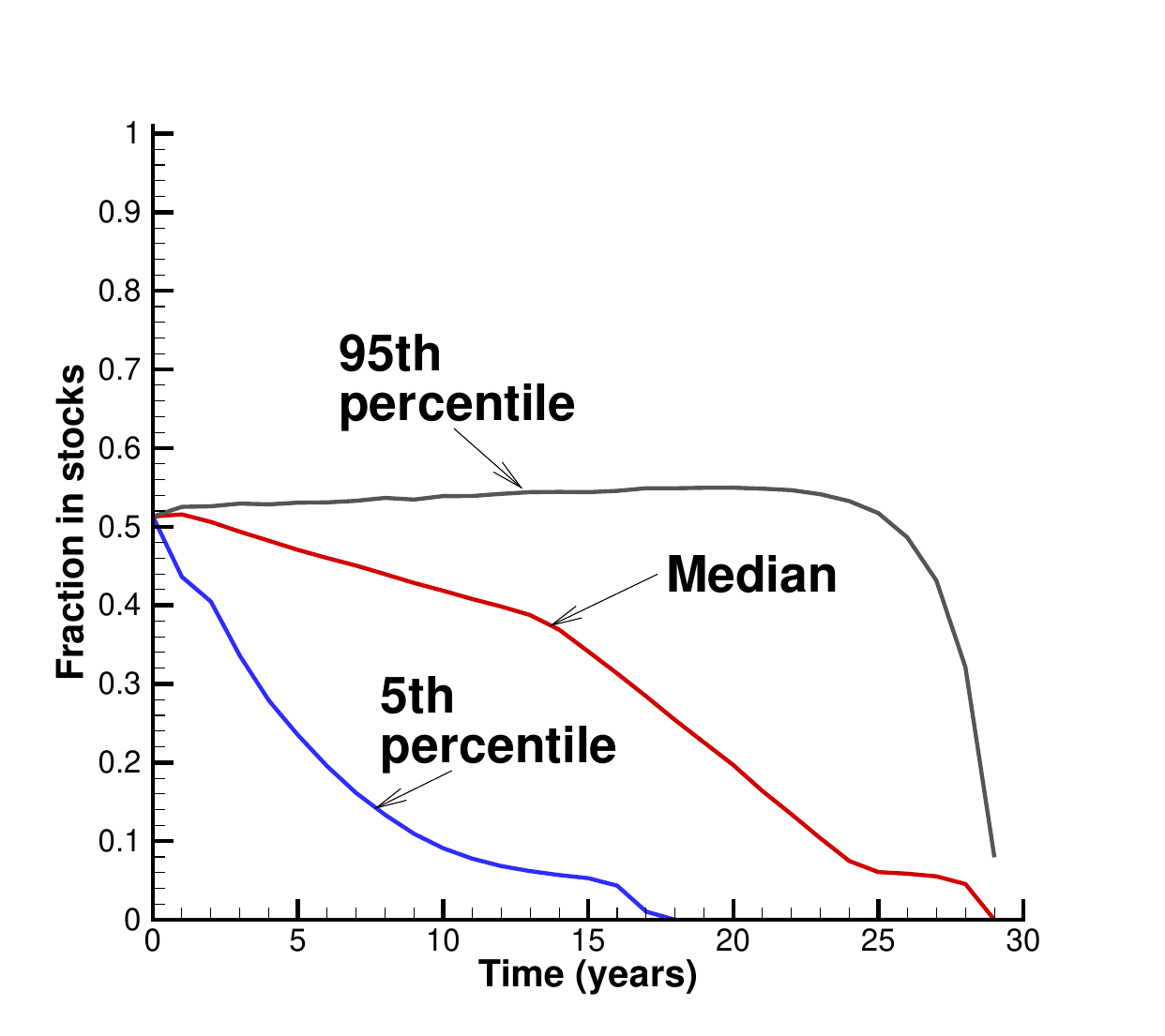}
\caption{Percentiles fraction in stocks}
\label{percentile_control_pmax_one_point_three_boot_fig}
\end{subfigure}
\begin{subfigure}[t]{.33\linewidth}
\centering
\includegraphics[width=\linewidth]{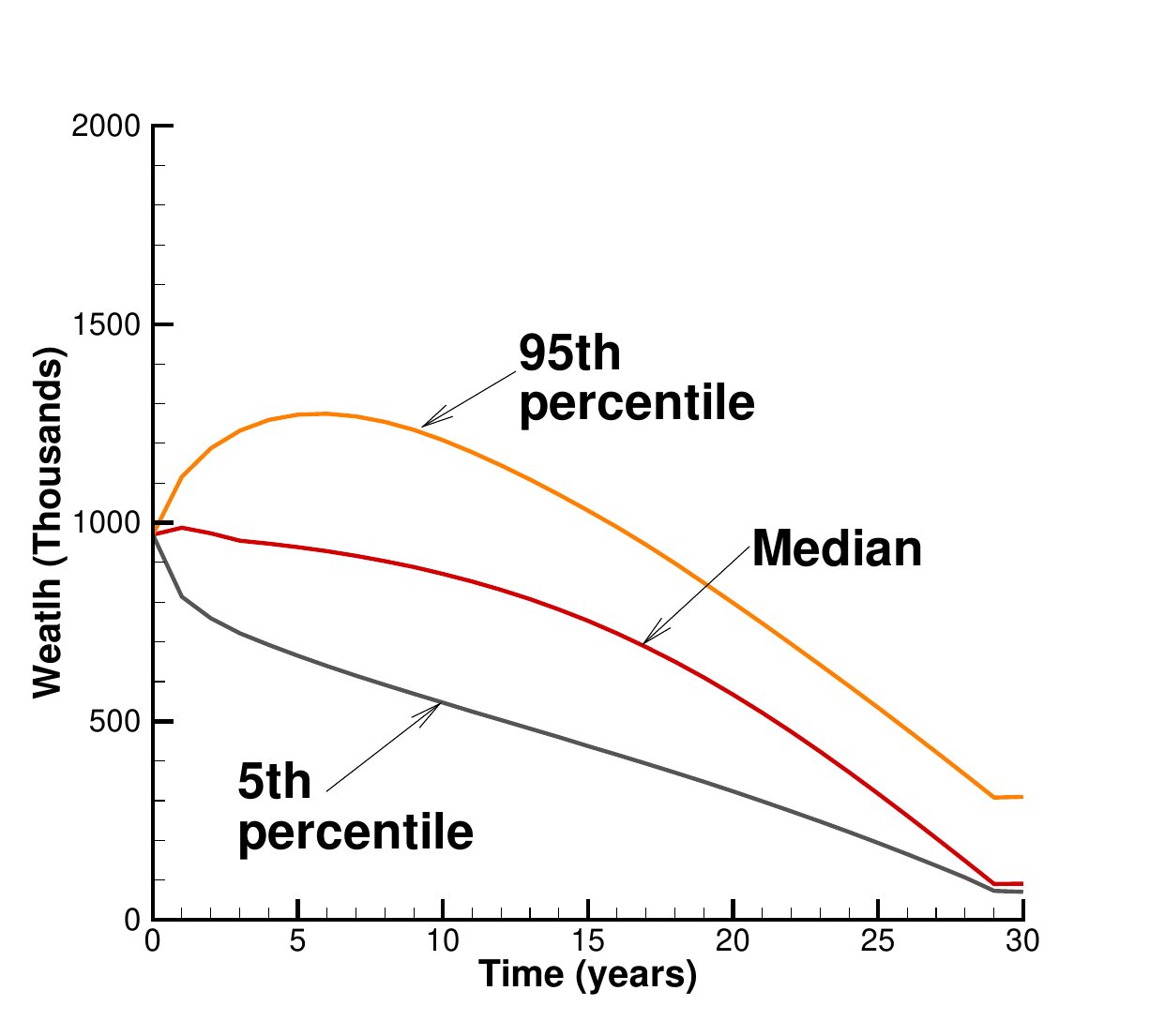}
\caption{Percentiles  wealth}
\label{percentile_wealth_one_point_three_boot_fig}
\end{subfigure}
\begin{subfigure}[t]{.33\linewidth}
\centering
\includegraphics[width=\linewidth]{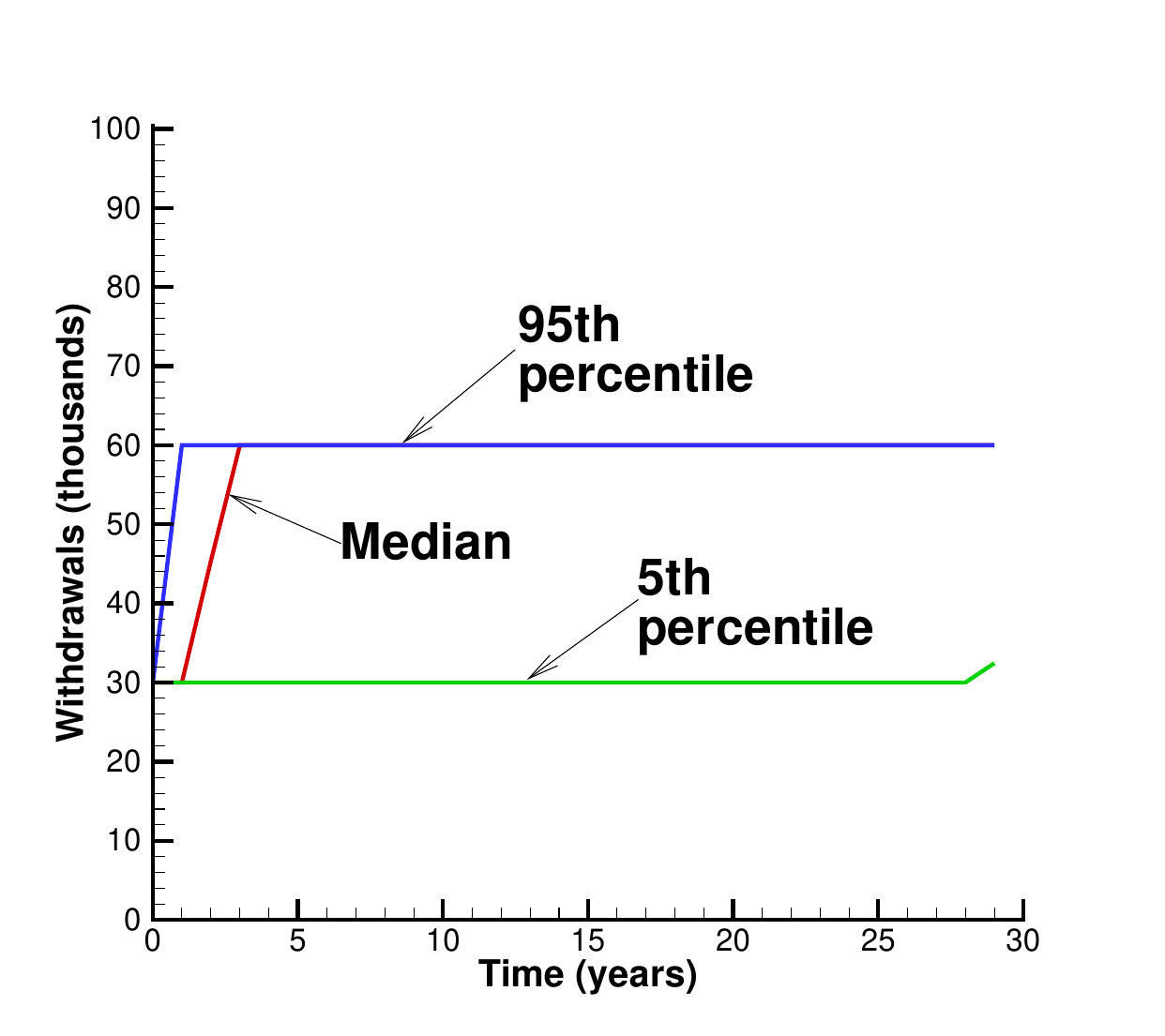}
\caption{Percentiles withdrawals}
\label{percentile_qplus_one_point_three_boot_fig}
\end{subfigure}
}
\caption{
$p_{\max} = 1.3$. Percentiles of the controls.
$EW = 51.7$, $ES = 19$ ($\kappa = 0.8583$).
Median withdrawal at $q_{\max}$ at year 3.0. Bootstrap results.
Scenario in Table \ref{base_case_1}
Strategy computed in synthetic market. 
Tested in the bootstrapped historical market, expected blocksize 1.0 years.
$10^6$ bootstrap simulations.
Bootstrap date  based on the real CRSP index,
and real 30-day T-bills 1926:1-2024:12. Control computed
and stored using Algorithm \ref{alg_monotone_final} in the synthetic market.
$q_{min} = 30, q_{\max} = 60$ (per year).
Units: thousands of dollars.
}
\label{percentiles_one_point_three_boot_all}
\end{figure}

\begin{figure}[tb]
\centerline{%
\begin{subfigure}[t]{.33\linewidth}
\centering
\includegraphics[width=\linewidth]{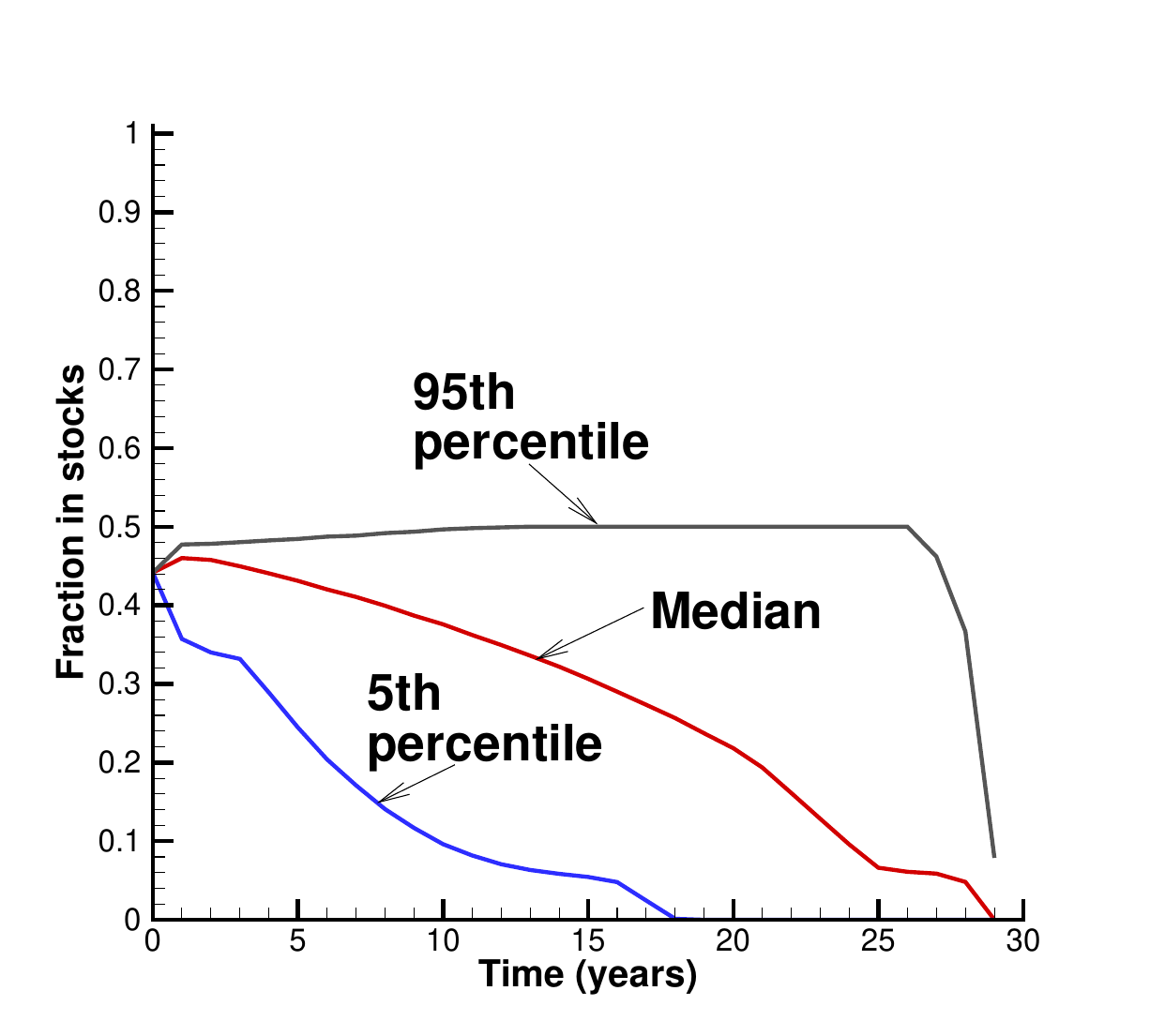}
\caption{Percentiles fraction in stocks}
\label{percentile_control_pmax_zero_point_five_fig_boot}
\end{subfigure}
\begin{subfigure}[t]{.33\linewidth}
\centering
\includegraphics[width=\linewidth]{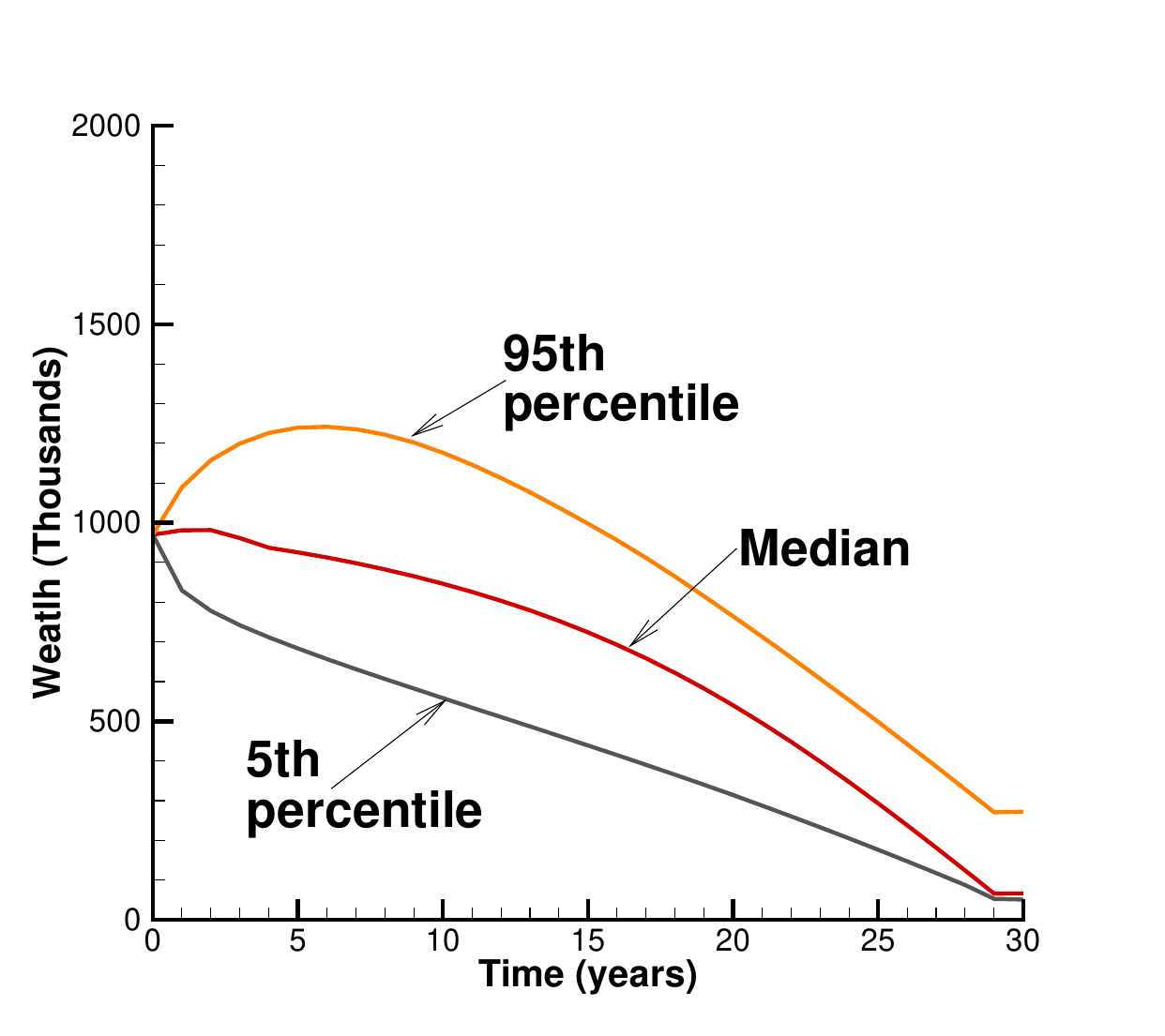}
\caption{Percentiles  wealth}
\label{percentile_wealth_pmax_zero_point_five_fig_boot}
\end{subfigure}
\begin{subfigure}[t]{.33\linewidth}
\centering
\includegraphics[width=\linewidth]{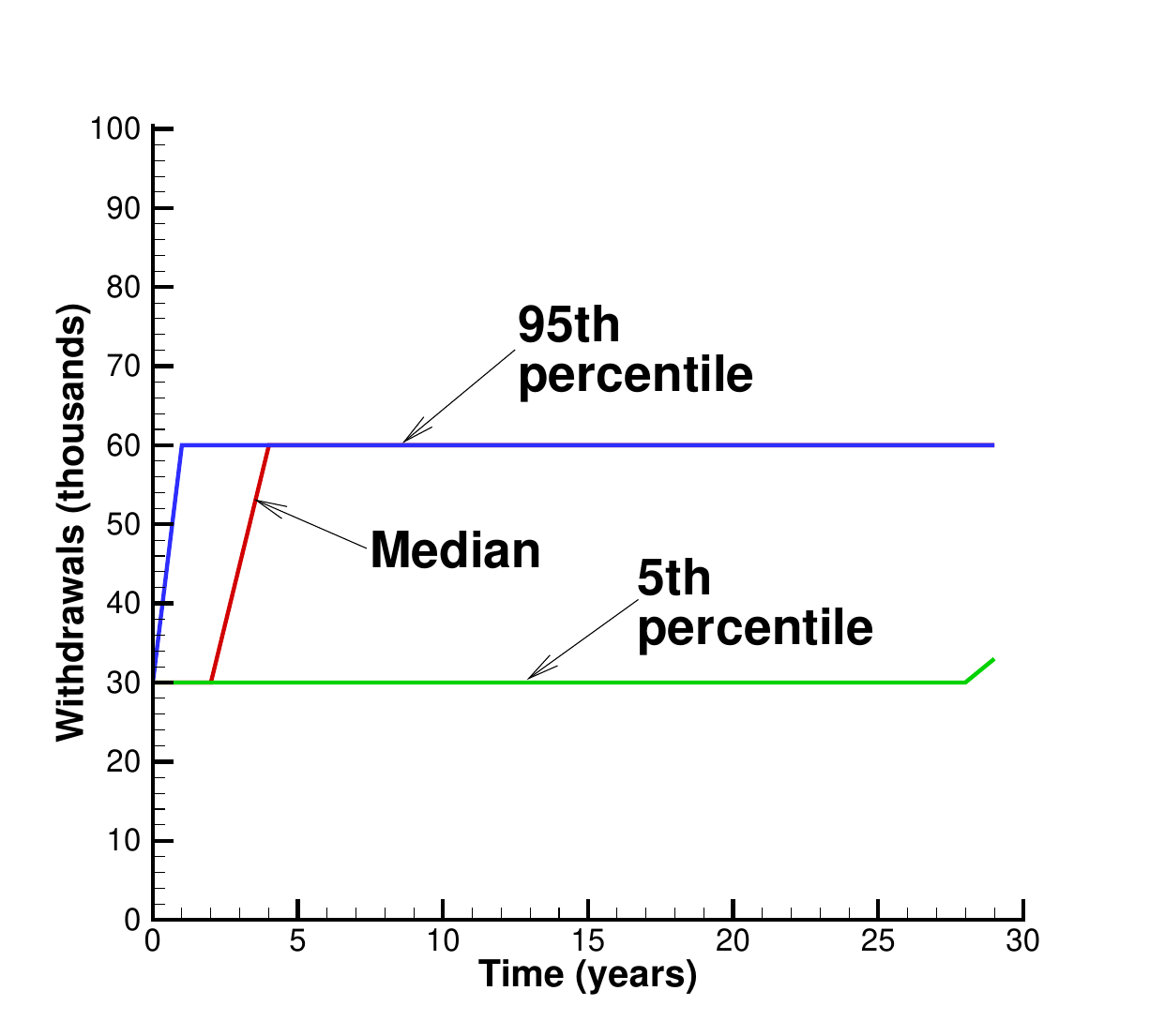}
\caption{Percentiles withdrawals}
\label{percentile_qplus_pmax_zero_point_five_fig_boot}
\end{subfigure}
}
\caption{
$p_{\max} = 0.5$. Percentiles of the controls.  Bootstrap results.
$EW = 50.3$, $ES = 6.7$ ($\kappa = 1.0$).
Median withdrawal at $q_{\max}$ at year 4.0.
Other information as in caption to Figure \ref{percentiles_one_point_three_boot_all}.
}
\label{percentiles_point_five_boot}
\end{figure}

\section{Limitations of the Methods Used}
We should note the following limitations of the PIDE-Fourier methods
used  in this Chapter.
\begin{itemize}
    \item PIDE methods (in general) are limited to problems with no more than
      three stochastic factors.  This excludes a number of problems of interest, for example portfolios
      having more than three risky assets, and adding stochastic mortality.

   \item Fourier based approaches require stochastic processes which
         have known, closed form expressions for the Fourier transform of the
         Green's function (equivalently the characteristic function).

   \item In order to ensure $\delta$-monotonicity, piecewise linear basis
         functions are used.  This caps the rate of convergence (for smooth
         functions) at second order in the grid spacing.   
         This contrasts with the 
       Chebyshev \citep{Gauss_2018,glau_2019} or Fourier based COS method \citep{Fang2008,Fang2009,Ruijter_2013}
         which, assuming smoothness, can achieve exponential convergence in some cases.
         But these techniques are not guaranteed to satisfy the $\delta$-monotonicity property needed to prove convergence.
         We also note that it is non-trivial to obtain high order
         convergence even for American options \citep{Ruijter_2013}, due to the jump in
         the second derivative across the exercise boundary.
         Careful tracking of the exercise boundary is necessary.  These methods
         would be difficult to implement for general optimal control problems.
         
\end{itemize}

At the same time these limitations are counterbalanced by the ability to obtain provably
convergent solutions to complex control problems, with realistic constraints.

Finally we should mention a recent approach for high dimensional stochastic control which 
is based on approximating the control by a neural network.  This machine learning
technique \citep{LiForsyth2019,
PvSForsythLi2022_stochbm,PvsForsythLi2023_NN,Chen_Mohib_2023} is 
purely data-driven, and makes no assumptions about the underlying
stochastic processes.  However, this also comes with the {\em caveat} that
convergence proofs require determination of  the global minimum
of a non-convex function, which is problematic in practice.

\section{Conclusions}
In this paper, we have developed a technique for solving the 
optimal control problem associated with decumulation of a defined
benefit (DC) pension plan.  The basic control algorithm at each rebalancing
time consists of (i) solution of an optimization problem and
(ii) advancing the solution to the next rebalancing time (going backwards).
We use a $\delta$-monotone scheme, based on Fourier methods, for
step (ii).  This method preserves order relations to $O(\delta)$, which is
an important property for numerical solution of optimal control.
We pay particular attention to ensuring that the Fourier wrap-around error
can be made arbitrarily small.

From a practical point of view, we have verified that the controls are robust
by testing the controls using block bootstrap resampling of historical data,
which makes no assumptions about stochastic processes for the underlying
stock and bond indexes.

Perhaps the most interesting result, in terms of investment strategies for
retirees, is the following.  If the maximum fraction in equities is restricted to be
less than 50\%, this strategy is only slightly
less efficient than allowing a maximum fraction of 130\% (that is,  using leverage).
This suggests that for most retirees, there is no need to undertake risky strategies
during the decumulation of a DC account.  

These optimal control
strategies are far more efficient than the typical four percent rule \citep{Bengen1994}.
The optimal policy  has an expected average withdrawal (real) of 
more than 5\% of initial capital over thirty years, with approximately 
zero expected shortfall at age 95 (based on historical data).  In contrast, the
strategy suggested in \citep{Bengen1994}, withdraws 4\% of initial
capital annually, but has an expected shortfall of  more than
35\% of initial capital at age 95.



\appendix

\section*{Appendices}

\appendix

\section{Wrap Around Error: Details}\label{wrap_appendix}

To gain some insight into the wrap-around problem, we consider a highly
simplified, one dimensional problem.
To avoid subscript clutter, in this section, we use the notation
\begin{eqnarray*}
    \tilde{g}({m- \ell}) \equiv \tilde{g}_{m- \ell}~;~  u^n(m) \equiv  u^n_m.
\end{eqnarray*}
Using the above notation, consider the discrete time advance convolution
\begin{eqnarray}
      u^n(m) & = & \Delta x \sum_{ \ell = -N^{\dagger}/2}^{N^{\dagger}/2-1}
                           \tilde{g}({m- \ell }) ~u^{n-1}({ \ell }) ~ , ~ \nonumber \\
               & & ~~~~~~~~~m=-N/2,\ldots, N/2-1 ~,~ N \leq N^{\dagger}~,
                          \quad
                          \label{wrap_1a}
\end{eqnarray}
where we assume periodic extension of $\tilde{g}$
\begin{eqnarray*}
     \tilde{g}( \ell \pm N^{\dagger} ) = \tilde{g}( \ell ) ~.
\end{eqnarray*}
In the above, if $N^{\dagger} > N$,  nodes corresponding to  $m < -N/2$ and $m > N/2-1$ are the padded nodes.
In this case, we assume that the padded nodal values $u^n(m)$,  
$ m < -N/2$ and $ m > N/2-1$, are determined by boundary data.

As an example of wrap-around error, we examine a worst case term in equation (\ref{wrap_1a}).  Consider the term in (\ref{wrap_1a}) corresponding to  $m=-N/2$, and $l =  N^{\dagger}/2 -1$, namely
\begin{eqnarray}
   \Delta x  ~\tilde{g}(-N/2 - N^{\dagger}/2 +1) ~u^{n-1} ({N^{\dagger}/2 -1 }).  \label{wrap_2}
\end{eqnarray}
By periodic extension, we shift the argument of $\tilde{g}(\cdot)$ by $N^{\dagger}$,
resulting in
\begin{eqnarray}
    \tilde{g}(-N/2 - N^{\dagger}/2 +1) = \tilde{g}(-N/2 - N^{\dagger}/2 +1 + N^{\dagger}) =  \tilde{g}( -N/2 + N^{\dagger}/2 +1),
    \label{wrap_error_1_c}
\end{eqnarray}
and hence, the term (\ref{wrap_2}) becomes
\begin{eqnarray}
   \Delta x ~ \tilde{g}(-N/2 + N^{\dagger}/2 +1) ~u^{n-1} ({N^{\dagger}/2 -1 }).
       \label{bad_term}
\end{eqnarray}
Hence, in this extreme case,  equation (\ref{wrap_1a}) becomes
\begin{eqnarray}
   u^{n}( -N/2) ~= ~ \Delta x ~ \tilde{g}(-N/2 + N^{\dagger}/2 +1) ~u^{n-1}({N^{\dagger}/2 -1 })
      ~+ \sum_{l=-N^{\dagger}/2}^{N^{\dagger}/2-2}({\mbox{ remaining terms }}). \nonumber \\
       \label{wrap_3}
\end{eqnarray}

\begin{example}[No padding: $N^{\dagger} = N$]
Suppose we do not use any padding, so that  $N^{\dagger} = N.$ In this case, equation (\ref{wrap_3}) becomes
\begin{eqnarray}
   u^{n}(-N/2) ~=~  \Delta x ~ \tilde{g}(1)~u^{n-1}({N/2 -1 })
      ~+ \sum_{l=-N/2}^{N/2-2}({\mbox{ remaining terms }}).
     \nonumber 
\end{eqnarray}
Since, in general, $\tilde{g}(1)$ is not small,  we can see that the term
$u^{n-1}({N/2 -1 })$ has a considerable effect on $u^{n}( -N/2)$,
which should not be the case.  We can see here that the periodic
extension of $\tilde{g}$ causes a {\em wrap-around} effect.
\end{example}

\begin{example}[Padding: $N^{\dagger} = 2N$]
If $N^{\dagger} = 2N$, then equation  (\ref{wrap_3}) becomes
\begin{eqnarray}
   u^{n}( -N/2)  ~=~  \Delta x ~ \tilde{g}( N/2 +1) ~u^{n-1}({N^{\dagger}/2 -1 })
                  ~+ \sum_{l=-N^{\dagger}/2}^{N^{\dagger}/2-2} ({\mbox{ other terms }}).
                   \label{wrap_5}
\end{eqnarray}
If we select $N$ sufficiently large so that $\tilde{g}( \pm N/2) \simeq 0$, 
then  the leading term in equation (\ref{wrap_5}) is small, and hence, wrap-around error is reduced.
\end{example}

\subsection{Wrap-around: a formal result}
From our previous examples, we can see that wrap-around may occur
in equation (\ref{wrap_1a}) if
\begin{eqnarray}
   (m- \ell ) < -N^{\dagger}/2 ~\mbox{ or } ~ (m- \ell) > N^{\dagger}/2 -1.  \nonumber 
\end{eqnarray}
This  leads us to the following formal definition of wrap-around error.
We show that, with $N^{\dagger} = 2N$, wrap-around error is sufficiently reduced.

\begin{definition}[Wrap-around error]
Assume $\{\tilde{g}(\cdot)\}$ is periodic with period $N^{\dagger}$
and $u^n(m)$, for $m < -N/2$ or $ m > N/2-1$, are determined by 
boundary data with $N^{\dagger} = 2N$.  Then the wrap-around error for equation (\ref{wrap_1a}),
at timestep $n$, denoted by $e_{\scalebox{.6}{wrap}}^{n}$, is
\begin{eqnarray}
   e_{\scalebox{.6}{wrap}}^{n} & = & \max_m \left\{ 
       \Delta x \sum_{\ell \in \mathbb{N}^{\dagger}} \Big\vert \tilde{g}({m- \ell}) ~u^{n-1}( {l})
                                                                    \Big\vert
                  ~ \biggl(
              {\bf{1}}_{ \{ (m- \ell) < - N^{\dagger}/2 \} } + {\bf{1}}_{ \{ (m- \ell) >  N^{\dagger}/2 -1\} }
                   \biggr)
           \right\}. \nonumber \\
      & &
         \label{wrap_error_1}
\end{eqnarray}
\end{definition}
We can now state the following result.
\begin{theorem}\label{wrap_theorem_1d}
Let $\tilde{g}(\cdot)$ be periodic with period $N^{\dagger}$ and
$u^n(m)$, for $m < -N/2$ or $ m > N/2-1$, be determined by boundary data with $N^{\dagger} = 2N$. 
Assume further that $u^{n}(m)$ is 
bounded, so for $0 \leq n \leq M$, there exists a positive constant $C$ such that\footnote{This
is essentially a stability condition.  See \citet{forsythlabahn2017} for a proof of
stability for the $\delta$-monotone method.}
\begin{eqnarray}
     | u^{n}(m) | \leq C, \quad m \in \mathbb{N}^{\dagger}~, \forall n~.
      \label{inf_norm_1}
\end{eqnarray}
If  $N$ is selected sufficiently large so that 
\begin{eqnarray}
    \Delta x \sum_{ \ell =-N^{\dagger}/2}^{ -N/2 -1}  | \tilde{g}({  \ell  })|
                                        +
    \Delta x \sum_{ \ell = {N/2} }^{ N^{\dagger}/2-1} | \tilde{g}({ \ell }  )|
                    ~  \leq~  \epsilon_e \Delta \tau~
             \label{wrap_condition_1d}
\end{eqnarray}
then the wrap-around error after $\mathcal{N}$ steps is bounded by $T C \epsilon_e$.
\end{theorem}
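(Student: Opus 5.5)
The plan is to bound the single-step wrap-around error (\ref{wrap_error_1}) by $C\epsilon_e\Delta\tau$ and then accumulate over the $\mathcal{N}$ timesteps. The essential point is that, with $N^{\dagger}=2N$, every wrapped index $m-\ell$ is mapped by periodicity into the outer padded band of indices. Condition (\ref{wrap_condition_1d}) says exactly that $\tilde g$ is small on that band.

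First, fix $m\in\{-N/2,\ldots,N/2-1\}$ and $\ell\in\{-N^{\dagger}/2,\ldots,N^{\dagger}/2-1\}=\{-N,\ldots,N-1\}$. Then $m-\ell$ ranges over $[-3N/2+1,\,3N/2-1]$.

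\emph{Case $(m-\ell)<-N^{\dagger}/2=-N$.} Here $m-\ell\in[-3N/2+1,-N-1]$. Periodicity gives $\tilde g(m-\ell)=\tilde g(m-\ell+N^{\dagger})$, and the shifted index $m-\ell+2N$ lies in $[N/2+1,N-1]\subset[N/2,N^{\dagger}/2-1]$.

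\emph{Case $(m-\ell)>N^{\dagger}/2-1=N-1$.} Here $m-\ell\in[N,3N/2-1]$. Shifting by $-N^{\dagger}$ gives $m-\ell-2N\in[-N,-N/2-1]=[-N^{\dagger}/2,-N/2-1]$.

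In both cases the map $\ell\mapsto$ (shifted index) is injective for fixed $m$. So for each $m$,
\[
\Delta x\sum_{\ell}\big|\tilde g(m-\ell)\big|\big(\mathbf{1}_{\{m-\ell<-N^{\dagger}/2\}}+\mathbf{1}_{\{m-\ell>N^{\dagger}/2-1\}}\big)
\le \Delta x\sum_{k=-N^{\dagger}/2}^{-N/2-1}|\tilde g(k)|+\Delta x\sum_{k=N/2}^{N^{\dagger}/2-1}|\tilde g(k)|\le\epsilon_e\Delta\tau .
\]

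Second, multiply by the stability bound (\ref{inf_norm_1}), $|u^{n-1}(\ell)|\le C$, and take the maximum over $m$. This gives $e^n_{\text{wrap}}\le C\epsilon_e\Delta\tau$ for every $n$.

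Third, accumulate the error. Write the computed solution as the wrap-free convolution plus a perturbation of size at most $e^n_{\text{wrap}}$ at step $n$. Propagating the errors with the time-advance operator, which has $\ell_\infty$ norm $\Delta x\sum|\tilde g|\le 1+\delta\simeq1$ (nonnegative weights up to $O(\delta)$), gives a total of at most $\sum_{n=1}^{\mathcal{N}}C\epsilon_e\Delta\tau=C\epsilon_e\mathcal{N}\Delta\tau=TC\epsilon_e$. If the factor $1+\delta$ per step is tracked explicitly, the bound becomes $(1+\delta)^{\mathcal{N}}TC\epsilon_e$. By (\ref{mono_cond_1}), $\delta\le\delta'\Delta\tau/T$ for the monotonicity tolerance $\delta'$, so $(1+\delta)^{\mathcal{N}}\le e^{\delta'}=1+O(\delta')$, which is absorbed. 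The theorem, like the paper's definition, can equally be read as the simple sum of per-step wrap errors.

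The main obstacle is the index bookkeeping in the first step: the outer band is covered exactly, and no wrapped index lands in the central region $[-N/2,N/2-1]$ where $\tilde g$ is large. This is precisely where $N^{\dagger}=2N$ is needed, since the extreme value $|m-\ell|\le 3N/2-1$ shifted by $2N$ lands at distance at least $N/2$ from the origin. The two-dimensional Theorem~\ref{wrap_theorem} follows by the same argument applied coordinatewise. There the wrapped indices land in $\mathcal{D}^{\dagger}-\mathcal{D}$, and hypothesis (\ref{wrap_error_local_1a}) replaces (\ref{wrap_condition_1d}).
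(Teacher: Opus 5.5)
Your proposal is correct and follows essentially the same route as the paper: shift each wrapped index $m-\ell$ by $\pm N^{\dagger}$ into the padded band $[-N^{\dagger}/2,-N/2-1]\cup[N/2,N^{\dagger}/2-1]$, use $|u^{n-1}(\ell)|\le C$ to obtain $e^n_{\text{wrap}}\le C\epsilon_e\Delta\tau$, and sum over the $\mathcal{N}$ steps. The differences are cosmetic. The paper reduces to the extreme nodes $m=-N/2$ and $m=N/2-1$ and computes the shifted index ranges explicitly, where you use an injectivity argument for general $m$; it also simply adds the per-step bounds, without your optional $(1+\delta)^{\mathcal{N}}$ propagation refinement.
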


\begin{proof}
Applying property (\ref{inf_norm_1}) to equation (\ref{wrap_error_1}) gives
\begin{eqnarray}
   e_{\scalebox{.6}{wrap}}^{n} & \leq & C \max_m \biggl\{ \Delta x 
        \sum_{\ell = - N^{\dagger}/2}^{N^{\dagger}/2-1} ~ | \tilde{g}({m- \ell})|
                           ~
                         \biggl(
              {\bf{1}}_{ \{ (m- \ell ) < - N^{\dagger}/2 \} } + {\bf{1}}_{ \{ (m- \ell) >  N^{\dagger}/2 -1\} }
                   \biggr)
                    \biggr\}.
         \label{wrap_error_2}
\end{eqnarray}
Recall that $m \in \{ -N/2, \ldots, N/2-1 \}$, hence
the worst case values of $m$ on the righthand side of equation (\ref{wrap_error_2}) are $m= -N/2$ and $m  = N/2-1$. Thus
equation (\ref{wrap_error_2}) gives
\begin{eqnarray}
  e_{\scalebox{.6}{wrap}}^{n}  & \leq & C \Delta x \sum_{\ell = -N^{\dagger}/2}^{ N^{\dagger}/2-1} ~ |      
                \tilde{g}({N/2-1 - \ell })| {\bf{1}}_{ \{ (N/2 -1 - \ell) >  N^{\dagger}/2 -1\} }
                  \nonumber \\
       & & +~
                 C \Delta x \sum_{\ell = -N^{\dagger}/2}^{N^{\dagger}/2-1} ~ | \tilde{g}({-N/2 - \ell})|
                                      ~
                              {\bf{1}}_{ \{ (-N/2 - \ell) <  -N^{\dagger}/2 \} }.
         \label{wrap_error_3}
\end{eqnarray}
Also, since  $N = N^{\dagger}/2$ equation (\ref{wrap_error_3}) becomes
\begin{eqnarray}
e_{\scalebox{.6}{wrap}}^{n}  & \leq & C \Delta x \sum_{\ell = -N^{\dagger}/2}^{N^{\dagger}/2-1}
                       ~ | \tilde{g}({ N^{\dagger}/4-1 - \ell})|
                                       ~
                              {\bf{1}}_{ \{ ( N^{\dagger}/4  -1 - \ell ) >  N^{\dagger}/2 -1\} }
                  \nonumber \\
       & & +
                 ~C \Delta x \sum_{\ell = -N^{\dagger}/2}^{N^{\dagger}/2-1} ~ | \tilde{g}({-N^{\dagger}/4  -l})|
                                      ~
                              {\bf{1}}_{ \{ (-N^{\dagger}/4 -  \ell ) <  -N^{\dagger}/2 \} },
         \label{wrap_error_3a}
\end{eqnarray}
and eliminating the indicator functions gives
\begin{eqnarray}
e_{\scalebox{.6}{wrap}}^{n}  & \leq & C \Delta x \sum_{\ell =-N^{\dagger}/2}^{ -N^{\dagger}/4-1} 
              ~ | \tilde{g}({  N^{\dagger}/4 -1 -\ell})| ~
        +
                 C\Delta x \sum_{\ell = N^{\dagger}/4 + 1}^{N^{\dagger}/2-1} ~ | \tilde{g}(-N^{\dagger}/4  - \ell)|.
        \nonumber 
\end{eqnarray}
Shifting $\tilde{g}(\cdot)$ by $\pm N^{\dagger}$ so that the argument of $\tilde{g}(\cdot)$ is in
the range $[-N^{\dagger}/2, N^{\dagger}/2 -1]$, implies
\begin{eqnarray}
e_{\scalebox{.6}{wrap}}^{n}  & \leq & C\Delta x \sum_{\ell = -N^{\dagger}/2}^{ -N^{\dagger}/4-1} ~ 
           | \tilde{g}({  N^{\dagger}/4 -1 -\ell - N^{\dagger} })| ~
        +
                 C \Delta x \sum_{\ell= N^{\dagger}/4 + 1}^{N^{\dagger}/2-1} ~ 
              | \tilde{g}({ -N^{\dagger}/4  - \ell} + N^{\dagger}  )|
      \nonumber \\
   & & \nonumber  \\
 & = & C \Delta x \sum_{\ell =-N^{\dagger}/2}^{ -N^{\dagger}/4-1} ~ | \tilde{g}({  -3 N^{\dagger}/4 -1 - \ell  })| ~
        +
                 C \Delta x \sum_{\ell = N^{\dagger}/4 + 1}^{N^{\dagger}/2-1} ~ | \tilde{g}({ 3N^{\dagger}/4  - \ell}  )|.
         \label{wrap_error_5b}
\end{eqnarray}
Rearranging the indices, gives
\begin{eqnarray}
e_{\scalebox{.6}{wrap}}^{n}  & \leq & C \Delta x \sum_{ \ell =-N^{\dagger}/2}^{-N^{\dagger}/4-1} ~ | \tilde{g}({  \ell  })|
                                      ~
        +
                 C \Delta x \sum_{\ell = N^{\dagger}/4 + 1}^{ N^{\dagger}/2-1} ~ | \tilde{g}({ \ell }  )|,
         \label{wrap_error_5c}
\end{eqnarray}
which, since $N=N^{\dagger}/2$, implies that equation (\ref{wrap_error_5c}) satisfies
\begin{eqnarray}
e_{\scalebox{.6}{wrap}}^{n}  & \leq & C \Delta x \sum_{\ell =-N^{\dagger}/2}^{-N/2 -1} ~ | \tilde{g}({  \ell  })|
                                      ~
        +
                 C \Delta x \sum_{\ell = N/2 + 1}^{ N^{\dagger}/2-1} ~ | \tilde{g}({ \ell }  )| .
         \label{wrap_error_5d}
\end{eqnarray}
Since 
\begin{eqnarray}
     \Delta x \sum_{\ell = N/2 + 1}^{ N^{\dagger}/2-1} ~ | \tilde{g}({ \ell }  )| 
& \leq & \Delta x \sum_{\ell = N/2 }^{ N^{\dagger}/2-1} ~ | \tilde{g}({ \ell }  )|
      \label{strange_1}
\end{eqnarray}
then
\begin{eqnarray}
e_{\scalebox{.6}{wrap}}^{n}  & \leq & C \Delta x \sum_{\ell =-N^{\dagger}/2}^{-N/2 -1} ~ | \tilde{g}({  \ell  })|
                                      ~
        +
                 C \Delta x \sum_{\ell = N/2 }^{ N^{\dagger}/2-1} ~ | \tilde{g}({ \ell }  )|
                                     \label{wrap_error_sum}
              \\
    & = & C \epsilon_e \Delta \tau
         \label{wrap_error_5d_1}
\end{eqnarray}
where the last step follows from condition (\ref{wrap_condition_1d}).
Applying equation (\ref{wrap_error_5d_1}) recursively gives
the bound $T C \epsilon_e$.
\end{proof}

It might at first sight seem odd to weaken the error bound using equation (\ref{strange_1}).
However, this makes the final result easy to interpret.
Let
\begin{eqnarray}
   \mathcal{S}^{\dagger} = \{-N^{\dagger}/2, \ldots, +N^{\dagger}/2-1\} ~;~
    \mathcal{S} = \{-N/2, \ldots, +N/2-1\}
\end{eqnarray}
so that $\mathcal{S}^{\dagger}$ is the set of node indexes for the padded domain,
and $\mathcal{S}$ is the set of node indices for the original domain.
Consequently, condition (\ref{wrap_error_sum}) can be written compactly as
\begin{eqnarray}
   e_{\scalebox{.6}{wrap}}^{n} & \leq &
        C 
             \Delta x \sum_{\ell  \in  (\mathcal{S}^{\dagger} - \mathcal{S}) } | \tilde{g}({  \ell  })|
                    ~  \leq~  C \epsilon_e \Delta \tau~.
                   \label{wrap_compact_1d}
\end{eqnarray}

Proving Theorem \ref{wrap_theorem} basically follows the previous proof of the simpler case.  Divide  the region $\mathcal{D}^\prime - \mathcal{D}$ outside $\mathcal{D}$  into right and left $A_r, A_\ell$ and upper and lower $A_u, A_d$ regions as follows.
\begin{center}
\begin{tikzpicture}
\draw (0,0) -- (8, 0) -- (8,4) -- (0,4) -- (0,0);
\draw (2,1) -- (6, 1) -- (6, 3) -- (2, 3) -- (2,1);
\draw[dashed] (2,0) -- (2, 4);
\draw[dashed] (6,0) -- (6, 4);
\node at (4,2) {$\mathcal{D}$};
\node at (1,2) {$A_\ell$};
\node at (7,2) {$A_r$};
\node at (4,3.5) {$A_u$};
\node at (4,0.5) {$A_d$};
\end{tikzpicture}
\end{center}
Our goal is to give an upper bound to 
\begin{eqnarray*}
    \varepsilon^n_{\text{wrap}}
   & = & 
            \Delta x_1 \Delta x_2 \max_{ (\ell, m) \in \mathcal{D} }
           \sum_{ ( j ,k) \in \mathcal{D}^{\dagger} }
            \big\vert \tilde{g}_{\ell -j, m - k}^{\dagger} 
                      \plainv_{j,k}^{\dagger}(\tau^n) \big\vert
               ~{\bf{1}}_{ (\ell-j, m-k) \notin \mathcal{D}^{\dagger} }
   ~,
 \end{eqnarray*}
which by the boundedness assumptions of Theorem \ref{wrap_theorem} is the same as bounding
 \begin{eqnarray*}
   &  & \Delta x_1 \Delta x_2 \max_{ (\ell, m) \in \mathcal{D} }
           \sum_{ ( j ,k) \in \mathcal{D}^{\dagger}}
            \big\vert \tilde{g}_{\ell -j, m - k}^{\dagger}  \big\vert 
               ~{\bf{1}}_{ (\ell-j, m-k) \notin \mathcal{D}^{\dagger} }
        ~.
 \end{eqnarray*}

As in the 1d wraparound error, our worse case values occur at borders, in this case the four corners of $\mathcal{D}$. In the horizontal direction when the first components are $-N_1/2$ and $N_1/2-1$ we can use the same manipulations as done in the 1d case to obtain the wraparound error contribution from $A_\ell$ and $A_r$ and use an identical argument when the second argument is $-N_2/2$ and $N_2/2-1$ to get  the wraparound error contribution from $A_u$ and $A_d$. These manipulations reduce to 
\begin{eqnarray*}
   e_{\scalebox{.6}{wrap}}^{n} & \leq &
        C \cdot
           \Delta x_1  \Delta x_2 \sum_{(\ell, m)   \in  (\mathcal{D}^{\dagger} - \mathcal{D}) } 
               | \tilde{g}({  \ell, m  })|
                    ~  \leq~  C \epsilon_e \Delta \tau~
\end{eqnarray*}
for each time step and hence give the error bound $ e_{\scalebox{.6}{wrap}}^{\mathcal{N}} \leq T \epsilon_e C$ after $\mathcal{N}$ steps.

\begin{singlespace}
\bibliographystyle{chicago}

\end{singlespace}

\end{document}